\documentclass[11pt]{article}

\usepackage[T1]{fontenc}
\usepackage[utf8]{inputenc}
\usepackage{lmodern}
\usepackage{amsmath,amssymb,amsthm,mathtools,bm}
\usepackage{booktabs,array,enumitem}
\usepackage{graphicx}
\usepackage{placeins}
\usepackage[margin=1in]{geometry}
\usepackage{xcolor}
\usepackage{microtype}
\usepackage{etoc}
\usepackage{tikz}
\usetikzlibrary{arrows.meta,positioning,fit}
\usepackage[colorlinks=true,allcolors=blue!55!black]{hyperref}
\hypersetup{
  pdftitle={When Expressivity Is Not Enough: Discrete Routing Geometry in Variational Quantum Circuits},
  pdfauthor={Yu Wang},
  pdfsubject={Discrete routing geometry, tangent accessibility, and same-point architecture repair},
  pdfkeywords={variational quantum algorithms, CNOT routing,
  operator Schmidt rank, tangent space, quantum architecture search}
}

\newtheorem{theorem}{Theorem}[section]
\newtheorem{proposition}[theorem]{Proposition}
\newtheorem{lemma}[theorem]{Lemma}
\newtheorem{corollary}[theorem]{Corollary}
\theoremstyle{definition}
\newtheorem{definition}[theorem]{Definition}

\theoremstyle{remark}

\newcommand{\Ftwo}{\mathbb F_2}
\newcommand{\GL}{\mathrm{GL}}
\newcommand{\SU}{\mathrm{SU}}
\newcommand{\PU}{\mathrm{PU}}
\newcommand{\Tr}{\operatorname{Tr}}
\newcommand{\rank}{\operatorname{rank}}
\newcommand{\OSR}{\operatorname{OSR}}

\newcommand{\im}{\operatorname{im}}
\newcommand{\Span}{\operatorname{span}}
\newcommand{\cT}{\mathcal T}
\newcommand{\cD}{\mathcal D}
\newcommand{\cP}{\mathcal P}
\newcommand{\cE}{\mathcal E}
\newcommand{\cL}{\mathcal L}

\newcommand{\ket}[1]{\lvert #1\rangle}
\newcommand{\bra}[1]{\langle #1\rvert}
\newcommand{\braket}[2]{\langle #1\vert #2\rangle}

\newcommand{\norm}[1]{\left\lVert #1\right\rVert}
\newcommand{\abs}[1]{\left\lvert #1\right\rvert}
\newcommand{\inner}[2]{\left\langle #1,#2\right\rangle}
\newcommand{\barS}{\overline S}
\newcommand{\id}{\mathbb I}

\title{\bfseries When Expressivity Is Not Enough: Discrete Routing Geometry in Variational Quantum Circuits}
\author{Yu Wang}
\date{}

\begin{document}
\etocdepthtag.toc{main}
\begin{center}
{\LARGE\bfseries When Expressivity Is Not Enough: Discrete Routing Geometry in Variational Quantum Circuits\par}
\vspace{1em}
{\large Yu Wang\par}
\vspace{0.4em}
{\small Hetao Institute of Mathematics and Interdisciplinary Sciences,\\
Shenzhen, Guangdong 518017, China\par}
\vspace{0.3em}
{\small \href{mailto:ming-jing-happy@163.com}{ming-jing-happy@163.com}\par}
\end{center}
\vspace{0.5em}

\begin{abstract}
Finding useful variational quantum circuits requires more than
representational capacity: training must also have access to directions
that improve the objective. We study how discrete CNOT routing shapes
both capacity and local accessibility.
For circuits built from CNOT and arbitrary one-qubit gates, we separate
binary routing matrices from continuous rotations and use this
representation to relate global capacity to local tangent accessibility.
An exact finite-field formula for the operator Schmidt rank of a CNOT
block, together with cumulative multiblock bounds, yields capacity
constraints and a Bell-pair fidelity obstruction.
Locally, routing propagates generators into a routing-dependent tangent
space, so all parameter gradients can vanish even when the ambient
objective gradient is nonzero; frame bounds further separate loss of
gradient coverage from poor conditioning.
This viewpoint motivates a same-point augmentation: a routing-conditioned
identity insertion preserves the current unitary, state, loss, and
existing gradients while adding new tangent directions.
Candidate-only and incremental projection scores provide local criteria
for ranking such insertions. Exact-statevector tests show strong
short-horizon score--descent correlations on controlled Bell tasks and
effective equal-cost selection on connected Heisenberg chains up to
twelve qubits, with complementary TFIM selector controls. A broader TFIM
routing ensemble exhibits routing-dependent near-stagnation.
Together, these results connect representational capacity, local
accessibility, and same-point repair through a common discrete routing
description. An architecture need not remain fixed during training:
task-dependent geometric information can guide the opening of new descent
directions while preserving the learned circuit at insertion. This provides
a concrete local approach to a broader problem: how useful quantum circuits
can be discovered from the information a task supplies.
\end{abstract}

\section{Introduction}

Quantum algorithms demonstrate the potential of quantum computation to
surpass classical methods.  Shor's factoring algorithm, Grover's search,
the HHL algorithm for linear systems, and quantum simulation show how
coherent evolution and interference can support computational advantages
under appropriate problem and resource assumptions
\cite{Shor1997,Grover1996,harrow2009quantum,Lloyd1996}.  Their central
structures, however, emerged from specific insights into number theory,
search, linear algebra, or quantum dynamics: researchers identified useful
structure and organized it into quantum circuits.  These achievements
demonstrate the power of explicit design while raising a further question:
can useful quantum computational structures be discovered systematically
through task feedback and computational search?

Classical deep learning offers an instructive comparison.  Given data,
objectives, and computational resources, training can develop complex
representations without an explicitly programmed rule for every step of a
task \cite{LeCun2015DeepLearning}.  Computation thus participates in finding
effective representations as well as executing them.  Variational quantum
algorithms, quantum neural networks, and quantum architecture search seek
to bring this approach to quantum computation
\cite{Peruzzo2014,Farhi2014,Kandala2017,Mitarai2018,Du2022QAS,Nemkov2023}.
They express tasks through energies, fidelities, or data-dependent losses
and search for circuits through feedback between quantum execution and
classical optimization.  Human insight and computational search can
complement each other, but search introduces an additional requirement:
a useful circuit must be found within an affordable resource budget.

This ambition faces substantial obstacles.  Performance depends strongly
on the ansatz, often chosen from experience, hardware templates, or random
constructions.  Hardware connectivity, finite sampling, and noise constrain
execution and optimization.  Barren plateaus and related mechanisms can
suppress gradients under conditions involving circuit depth, expressivity,
cost locality, noise, and algebraic structure
\cite{McClean2018,Cerezo2021,Wang2021Noise,Holmes2022,Larocca2022,
Ragone2024,Fontana2024}.  Classical learning also offers a structural lesson:
continuous weights operate within architectures that shape information and
gradient flow, as illustrated by convolution, residual connections, and
attention \cite{LeCun1998,He2016,Vaswani2017}.  Such architectures may be
designed or searched \cite{Elsken2019}; increasing parameter count alone
does not replace choosing an appropriate structure.  Conventional
fixed-architecture quantum training likewise varies the angles in
$U(\bm\theta)$ while fixing the entangling structure.  Quantum architecture
search already opens some discrete choices, but a mechanistic question
remains: how do these choices determine which task directions continuous
training can access?

We address this question through the coupling between discrete CNOT routing
and continuous local control.  In its given gate order, a circuit over CNOT
and arbitrary one-qubit gates can be grouped into alternating CNOT-only
blocks and local layers.  Each CNOT block induces an invertible binary map
$A_\ell\in\GL(n,\Ftwo)$, while each local layer belongs to $\SU(2)^n$.
For $t$ routing blocks this gives the mixed encoding
\begin{equation}
 \cE_{n,t}
 =
 \bigl(\GL(n,\Ftwo)\bigr)^t
 \times
 \bigl(\SU(2)^n\bigr)^{t+1}.
 \label{eq:intro-encoding}
\end{equation}
Here $\bm A=(A_1,\ldots,A_t)$ specifies logical binary routing, distinct
from hardware SWAP routing, and $\bm\theta$ describes the local controls.
The encoding is redundant, and different routing sectors can have
overlapping unitary images.  Under standard compilation and accuracy
conventions, the logical-unitary parts of uniform polynomial-size quantum
circuit families admit such encodings with polynomially many routing blocks
\cite{Barenco1995,BernsteinVazirani1997,DawsonNielsen2006}.
Conversely, every binary routing map admits a polynomial-size CNOT
synthesis under all-to-all logical connectivity \cite{PatelMarkovHayes08}.
The union of these encodings over polynomial resource bounds therefore
provides a common representation of polynomial-size logical unitary
circuits, extending beyond any particular variational ansatz.
Explicit design constructs $(\bm A,\bm\theta)$ from problem structure;
variational discovery searches for such coordinates through task feedback.
Fixed-routing training explores only one continuous family within this
framework.  The relevant question is whether this restriction excludes
descent directions required by the task.

We show that discrete routing controls both global capacity and local
accessibility.  It determines the cross-cut operator Schmidt rank of a
CNOT block and propagates local generators into the tangent directions
exposed by continuous parameters.  A compact acceptable circuit can
therefore exist in the full mixed family while the current fixed-routing
circuit satisfies
\begin{equation}
 \Pi_{\cT_A}G=0,\qquad \norm{G}>0.
 \label{eq:intro-hook}
\end{equation}
Here $A$ denotes the fixed routing tuple, $\cT_A$ its accessible tangent
space at the current point, and $G$ the ambient Hermitian objective
gradient.  The physical objective admits descent, yet every exposed
first-order variation is orthogonal to it.  This deterministic local
obstruction requires neither random-parameter concentration nor noise.
We further construct a routing-conditioned augmentation initialized at
the identity.  It preserves the current unitary, output state, loss, and
existing gradients while adding tangent directions, restoring descent
when a suitable candidate overlaps the task gradient.
Figure~\ref{fig:hierarchy} illustrates this same-point test and repair of
architecture-dependent accessibility.

\begin{figure}[!htbp]
\centering
\includegraphics[width=\linewidth,trim=0 112bp 0 112bp,clip]{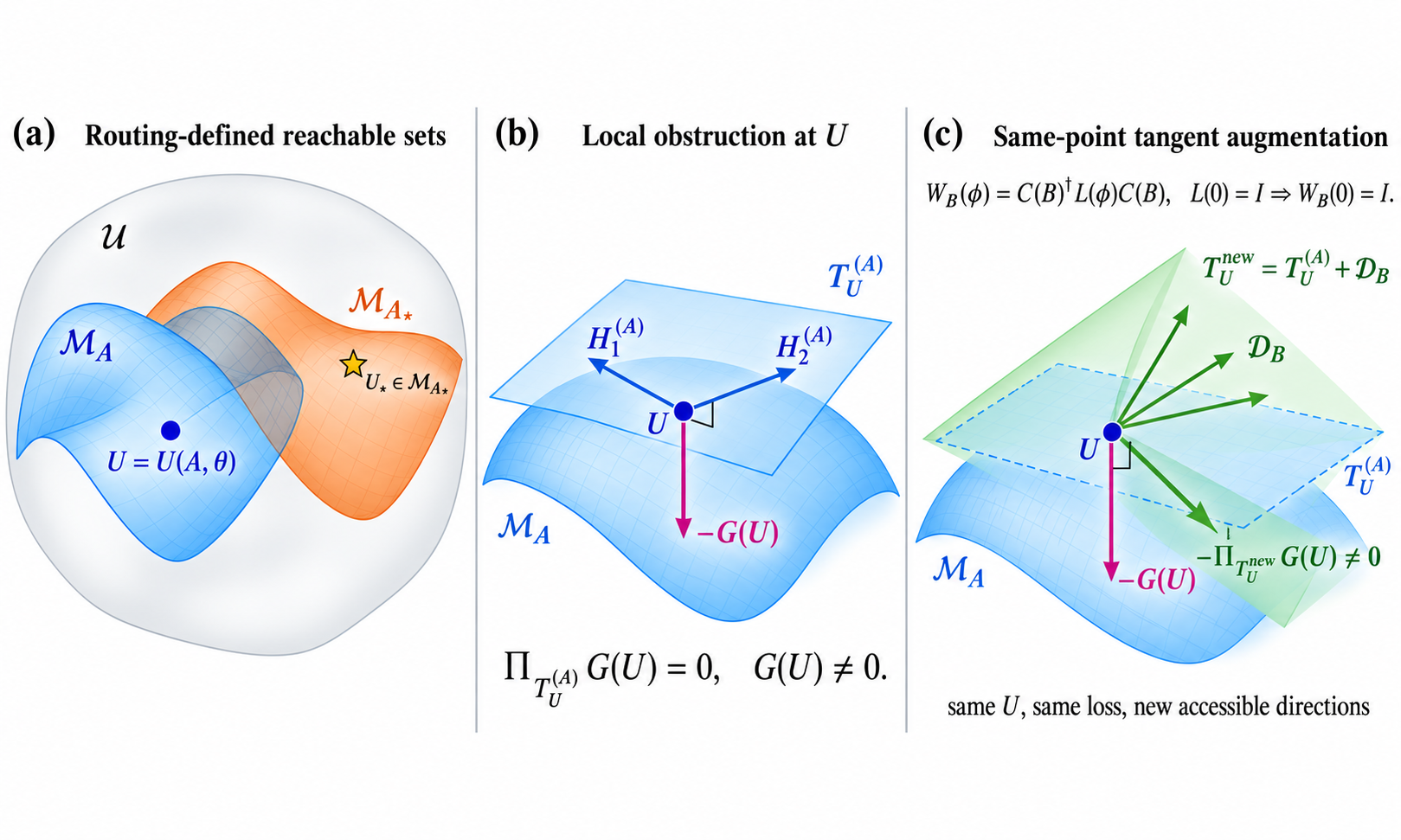}
\caption{\textbf{Routing, local accessibility, and same-point repair.}
(a) Varying the angles at fixed routing $A$ traces a reachable set
$\mathcal M_A$ within the ambient unitary space.  The target $U_\star$
belongs to $\mathcal M_{A_\star}$; multiblock reachable sets may overlap.
(b) At the current $U$, the ambient descent direction $-G(U)$ can be
orthogonal to the accessible space $\cT_U^{(A)}$, so parameter gradients
vanish despite $G(U)\ne0$.
(c) An identity-initialized routing block preserves $U$, its output state,
loss, and ambient gradient while adding directions $\cD_B$.
For a suitable $B$, the enlarged space
$\cT_U^{\rm new}=\cT_U^{(A)}+\cD_B$ exposes a nonzero descent projection.
Surfaces and shaded spans are schematic; surface height does not encode loss.}
\label{fig:hierarchy}
\end{figure}

To establish this mechanism, we derive an exact finite-field capacity
formula and cumulative bounds, prove tight Bell-pair fidelity and
accessibility obstructions, and separate direction coverage from parameter
conditioning through tangent-frame analysis.  An exact binary
state-tangent certificate at a CNOT origin supplies conditional local
gradient guarantees.  Controlled multiblock circuits, a connected
Heisenberg-chain task, and transverse-field Ising controls provide
finite-size numerical evidence under exact statevector access.  Together,
these results make discrete routing a computable, diagnosable, and locally
repairable design variable, distinguishing circuit representation from
local accessibility and resource-bounded discovery.  Potential quantum
advantage additionally requires accounting for discovery costs, hardware
execution, sampling resources, and appropriate classical baselines;
favorable gradients can coexist with classical tractability in studied
settings \cite{Cerezo2025}.  Our local results establish neither efficient
global search nor asymptotic barren-plateau suppression or noisy-hardware
advantage.  They identify a concrete structural link between the existence
of useful quantum circuits and the possibility of finding them through
training.

\section{Mixed discrete--continuous circuit space}
\label{sec:mixed}

We consider circuits built from CNOT gates and arbitrary single-qubit
rotations.  Grouping consecutive gates of the same type gives an
alternating sequence of local layers and CNOT blocks,
\begin{equation}
 U(\bm A,\bm\theta)
 =L_t(\theta_t)C(A_t)L_{t-1}(\theta_{t-1})\cdots C(A_1)L_0(\theta_0).
 \label{eq:alternating}
\end{equation}
This grouping preserves the original gate order; identity layers allow the
circuit to begin or end with either type of gate.
We call each grouped CNOT-only subcircuit a \emph{logical CNOT routing
block}; it need not be one parallel gate layer.
Thus $t$ counts routing blocks, not hardware depth.
Under unrestricted logical connectivity, there are $n(n-1)$ directed
CNOT gates: $\binom n2$ pairs with two control--target directions each.
For a block containing any number $m$ of gates, let $E_j$ be the
elementary binary matrix of its $j$th gate.  Then
\begin{equation}
 \mathrm{CNOT}_{c_m\to q_m}\cdots\mathrm{CNOT}_{c_1\to q_1}=C(A),
 \qquad A=E_m\cdots E_1\in\GL(n,\Ftwo).
 \label{eq:routing-word-encoding}
\end{equation}
Although $C(A)$ is a $2^n\times2^n$ permutation unitary, its action is
specified by the $n^2$ binary entries of $A$:
\begin{equation}
 C(A)\ket{x}=\ket{Ax},\qquad A\in\GL(n,\Ftwo),\quad x\in\Ftwo^n.
 \label{eq:cnot-linear-map}
\end{equation}
Conversely, every such map can be synthesized using CNOT gates
\cite{PatelMarkovHayes08}.

For the continuous part, let $\mathcal L_n=\SU(2)^n$ denote tuples of
$n$ single-qubit gates, represented on the register by their tensor
product.  Using the $ZXZ$ Euler convention, the $\ell$th local layer is
\begin{equation}
 L_\ell(\bm\theta_\ell)=\bigotimes_{q=1}^n
 R_z(\alpha_{\ell q})R_x(\beta_{\ell q})R_z(\gamma_{\ell q}),
 \qquad \ell=0,\ldots,t,
 \label{eq:local-layer}
\end{equation}
where $\bm\theta_\ell$ collects the $3n$ angles and
$R_a(\varphi)=\exp(-i\varphi\sigma_a/2)$ for $a=x,z$.
Thus $A_\ell$ specifies the routing operator $C(A_\ell)$, while
$\bm\theta_\ell$ specifies the local operator $L_\ell(\bm\theta_\ell)$.
A circuit with $t$ routing blocks therefore has the mixed encoding
\begin{equation}
 \cE_{n,t}
 =\bigl(\GL(n,\Ftwo)\bigr)^t\times\mathcal L_n^{\,t+1}.
 \label{eq:coordinate-free-space}
\end{equation}
Here the continuous factors are local gate tuples; $\bm\theta$ provides
their $3n(t+1)$ Euler-angle coordinates.  The tuple
$\bm A=(A_1,\ldots,A_t)$ specifies the discrete routing.
Unlike the one-to-one correspondence between $A$ and $C(A)$, Euler-angle
coordinates are not unique.  We identify unitaries
that differ only by a global phase.  The encoding is redundant: different
coordinates, including different routing tuples, can represent the same
physical unitary.

\begin{proposition}[Mixed representation of polynomial-size circuits]
\label{prop:mixed-coverage}
Consider a uniform unitary circuit family over a fixed, effectively
compilable, bounded-arity gate library, with polynomially many qubits and
gates in the input length.  To operator-norm accuracy $\epsilon$ modulo
global phase, it admits a mixed encoding whose block count and angle-bit
precision are polynomial in the input length and $\log(1/\epsilon)$.
Conversely, under all-to-all logical connectivity, $t$ routing
blocks and their local layers have an implementation using $O(tn^2)$ CNOT
gates and $O(n(t+1))$ single-qubit gates.
\end{proposition}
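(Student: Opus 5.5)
The forward direction goes through a universal-gate compilation. Each gate in the fixed library has bounded arity $k$, so it acts on at most $k$ qubits and is a unitary on a space of dimension at most $2^k$. The standard exact decomposition of Barenco et al.\ \cite{Barenco1995} writes any such gate as a product of $O(4^k)=O(1)$ CNOT gates and arbitrary single-qubit gates. Because the library is fixed and effectively compilable, these decompositions can be computed uniformly, and the single-qubit gates appearing in them are computable. Applying this gate by gate to a uniform circuit with $\mathrm{poly}(N)$ gates on $n=\mathrm{poly}(N)$ qubits produces an exact circuit over CNOT and $\SU(2)$ of polynomial size, up to global phase. The global-phase freedom is what lets us replace elements of $U(2)$ by $\SU(2)$.

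Next we read off the mixed encoding. In the resulting gate sequence, group maximal runs of consecutive CNOT gates into blocks $C(A_\ell)$, where $A_\ell$ is computed by \eqref{eq:routing-word-encoding}. Group maximal runs of single-qubit gates into local layers. Several single-qubit gates on the same qubit inside a run multiply to a single element of $\SU(2)$, and idle qubits receive the identity. This yields \eqref{eq:alternating} with $t$ at most the number of CNOT gates, so $t$ is polynomial.

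Precision is the remaining issue. Each local factor gets $ZXZ$ Euler angles as in \eqref{eq:local-layer}. We truncate each angle to $b$ bits. The map from an angle to its rotation has operator-norm derivative at most $1/2$, so truncation error per angle is at most $2^{-b}$. Unitary errors add subadditively under products in operator norm. The total error is therefore at most $3n(t+1)2^{-b}$. Choosing $b=\lceil\log_2(3n(t+1)/\epsilon)\rceil$, which is polynomial in $N$ and $\log(1/\epsilon)$, gives accuracy $\epsilon$.

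If the library contains gates whose entries are specified only approximately, we would instead use Solovay--Kitaev \cite{DawsonNielsen2006} or the effective approximation convention \cite{BernsteinVazirani1997}, and spend an error budget of $\epsilon/(2\cdot\#\text{gates})$ per gate. This step is the main obstacle. Its difficulty is mostly in fixing what ``effectively compilable'' means so that the computed Euler angles are polynomial-time approximable; I would state that convention explicitly and then let the error accounting above carry through unchanged.

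For the converse, each $A_\ell\in\GL(n,\Ftwo)$ is reduced to the identity by Gaussian elimination. Every row operation $r_i\leftarrow r_i+r_j$ is an elementary matrix, realized by a CNOT through \eqref{eq:cnot-linear-map}. This uses at most $n^2$ operations, which gives $O(n^2)$ CNOT gates per block; the Patel--Markov--Hayes synthesis \cite{PatelMarkovHayes08} even gives $O(n^2/\log n)$. Each local layer is $n$ single-qubit gates, or $3n$ rotations in Euler form. Summing over $t$ blocks and $t+1$ layers gives $O(tn^2)$ CNOT gates and $O(n(t+1))$ single-qubit gates.
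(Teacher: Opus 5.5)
Your proposal is correct and follows essentially the paper's argument: compile each bounded-arity gate into CNOTs and one-qubit gates (Barenco-type), group maximal CNOT and local runs so that $t$ is at most the gate count, control precision with a telescoping operator-norm error budget, and for the converse synthesize each block by Gaussian elimination with $O(n^2)$ CNOTs plus Euler-decompose the $n(t+1)$ local factors. Your explicit angle-truncation budget $b=\lceil\log_2(3n(t+1)/\epsilon)\rceil$ and your row-addition-only count are concrete versions of the paper's ``$\epsilon/N$ per factor'' allocation and its three-addition row swaps, and like the paper you rely on the ``effectively compilable'' convention to make the Euler angles computable.
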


The proof and precision bookkeeping are given in Appendix~\ref{app:bqp}.
Logical gate count does not specify hardware depth.

Taking the union over polynomial bounds on $t$ gives precisely the
expressive power of polynomial-size circuits over CNOT and arbitrary
single-qubit gates, with the precision convention above.
Each routing block admits an $O(n^2)$-CNOT implementation, while the local
layers require $O(n(t+1))$ single-qubit gates.  Conversely, grouping any
polynomial-size circuit produces only polynomially many routing blocks.

Polynomial-size circuits therefore admit polynomial-length descriptions
without explicitly storing their exponentially large unitary matrices.
For $n$ qubits and $t$ routing blocks, the coordinates are
\begin{equation}
 \underbrace{\bigl((A_\ell)_{ij}\bigr)\in\{0,1\}^{tn^2}}_
 {tn^2\ \text{binary entries}}
 \qquad\text{and}\qquad
 \underbrace{\bm\theta\in[0,2\pi)^{3n(t+1)}}_
 {3n(t+1)\ \text{rotation angles}}.
 \label{eq:explicit-coordinate-values}
\end{equation}
Each $A_\ell$ must be invertible over $\Ftwo$.
Angles are taken modulo $2\pi$ because global phase is ignored.
With $b_\theta$ bits per angle, the description uses
$tn^2+3n(t+1)b_\theta$ bits, apart from bookkeeping overhead.

In particular, a CNOT-only block, even one of exponential length, still
implements a single binary
map $A$ recorded by $n^2$ bits.  Inserting only polynomially many
single-qubit gates leaves polynomially many such blocks and local controls.
Adding CNOTs within a block can change its map, but does not add continuous
parameters.  With interleaved local rotations, further simplification
depends on how these rotations interact with the CNOT blocks.

A related discrete limit is obtained when every local Euler angle is an
integer multiple of $\pi/2$: the circuit is then Clifford, generated up to
global phase by CNOT, Hadamard ($H$), and phase ($S$) gates.  Its net unitary
admits a polynomial-size Clifford implementation even if the original
gate sequence is exponentially long \cite{AaronsonGottesman2004}.

Continuous non-Clifford angles alone do not ensure computational advantage
either.  Local rotations can sometimes commute through CNOT gates and
then merge or cancel, reducing the effective circuit complexity.
Even when such cancellations are incomplete, some circuits retain
structures that permit efficient tensor-network simulation.
Examples include circuits with low contraction width or suitably bounded
entanglement throughout the evolution \cite{MarkovShi2008,Vidal2003}.
Later, we use this representation to discuss where quantum computational
advantage may arise, beyond these efficiently simulable regimes.

Many commonly used variational ans\"atze alternate trainable single-qubit
rotations with a prescribed entangling pattern
\cite{Kandala2017,Cerezo2021}.  In CNOT-based architectures, this pattern
may be a chain, ring, or brick-wall layout.
Within our description, fixing the connectivity, CNOT directions, and
gate order fixes the routing tuple $\bm A=\bm A_0$, leaving only the local
angles to be trained.  Joint architecture--parameter search and
fixed-routing training therefore optimize over different domains:
\begin{equation}
 \min_{\bm A,\bm\theta}\cL(U_{\bm A}(\bm\theta))
 \qquad\text{versus}\qquad
 \min_{\bm\theta}\cL(U_{\bm A_0}(\bm\theta)).
 \label{eq:mixed-search}
\end{equation}
A fixed routing may suffice for a particular task; the restriction alone
does not imply failure.  We next examine how this choice controls both the
correlations a circuit can generate and the directions its parameters can
access.

\section{How routing restricts reachable unitaries}
\label{sec:osr}

We now study what the routing choice allows a circuit to represent.
Hold the routing tuple $\bm A$ fixed and allow all local angles to vary.
The resulting unitaries form the reachable set
\begin{equation}
 \mathcal M_{\bm A}
 =\{[U(\bm A,\bm\theta)]:\bm\theta\ \text{varies}\}.
 \label{eq:routing-image}
\end{equation}
Changing $\bm A$ can change this set, even though the continuous controls
remain the same.  We first identify when two single-block routings give
the same reachable set.  We then give a computable sufficient condition
for their sets to be disjoint.  Finally, we bound what a fixed sequence
of routing blocks can represent.  These are statements about all allowed
angles; local training directions are considered in the next section.

\subsection{When different routings give the same reachable set}

For one routing block,
$\mathcal M_A=\{[L_1C(A)L_0]:L_0,L_1\in\mathcal L_n\}$.
Changing $A$ need not change this set: a change that can be absorbed into
the freely chosen local layers adds no new reachable unitaries.
For a single block, this observation gives an exact dichotomy.

\begin{proposition}[Single-block image dichotomy]
\label{prop:single-block-images}
For any number of qubits $n$ and any $A,B\in\GL(n,\Ftwo)$, the
$n$-qubit single-block images $\mathcal M_A$ and
$\mathcal M_B$ are either identical or disjoint.  They are identical
exactly when
\begin{equation}
 C(B)=e^{i\phi}L\,C(A)\,R
 \quad\text{for some }L,R\in\mathcal L_n,\ \phi\in\mathbb R.
 \label{eq:single-block-equivalence}
\end{equation}
Thus equality of reachable sets allows the routing operators themselves
to differ by local gates before and after the block.
\end{proposition}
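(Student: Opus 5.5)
The plan is to treat $\mathcal M_A$ as a double coset of the local group and apply the usual fact that double cosets either coincide or are disjoint. Let $G=\mathcal L_n$, viewed inside $\PU(2^n)$ through its image $\{[L]:L\in\mathcal L_n\}$. This image is a subgroup, because tensor products of single-qubit gates are closed under multiplication and inversion, and the phase quotient respects products. With this notation, $\mathcal M_A=\{[L_1][C(A)][L_0]\}$ is exactly the double coset $G\,[C(A)]\,G$ in the group $\PU(2^n)$. Since the Euler parametrization in \eqref{eq:local-layer} is surjective onto $\SU(2)$, letting $\bm\theta$ range over all angles does sweep out all of $G$ in each layer.

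\textbf{Step 1 (dichotomy).} Suppose $\mathcal M_A\cap\mathcal M_B\neq\emptyset$. Then there are local gates and a phase with
\[
L_1C(A)L_0=e^{i\phi}L_1'C(B)L_0'.
\]
Solving for $C(B)$ gives
\[
C(B)=e^{-i\phi}L_1'^{-1}L_1\,C(A)\,L_0L_0'^{-1},
\]
which is relation \eqref{eq:single-block-equivalence} with $L=L_1'^{-1}L_1\in\mathcal L_n$ and $R=L_0L_0'^{-1}\in\mathcal L_n$. Now take an arbitrary element $[L_1''C(B)L_0'']$ of $\mathcal M_B$. Substituting the expression for $C(B)$ gives
\[
[L_1''C(B)L_0'']=[(L_1''L)\,C(A)\,(RL_0'')],
\]
which lies in $\mathcal M_A$. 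So $\mathcal M_B\subseteq\mathcal M_A$. Running the same argument with $A$ and $B$ swapped gives the reverse inclusion, so the two sets are equal. Hence any two images either coincide or are disjoint.

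\textbf{Step 2 (characterization).} If $\mathcal M_A=\mathcal M_B$, then $[C(B)]$, obtained by taking $L_0=L_1=\id$, lies in $\mathcal M_A$. That membership is exactly \eqref{eq:single-block-equivalence}. Conversely, if \eqref{eq:single-block-equivalence} holds, the two images share the point $[C(B)]$, and Step 1 makes them identical.

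\textbf{Main obstacle.} The only delicate point is the bookkeeping of the global phase and of the group structure. We must check that $\mathcal L_n=\SU(2)^n$ acting by tensor products is closed under inverses and products, including the $\pm1$ phase ambiguities that arise when passing to $\PU$. We must also check that $[\cdot]$ commutes with products, so that the double-coset argument is legitimate. This amounts to noting that scalars are central. Beyond this, the argument is the standard partition of a group into double cosets $H g H$, and it needs nothing specific to $\GL(n,\Ftwo)$.
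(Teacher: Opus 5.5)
Your proof is correct and follows the paper's own argument: from a common point you rearrange to express $[C(B)]$ as a locally dressed $[C(A)]$, use closure of $\mathcal L_n$ under products and inverses to identify the two full images, and get the converse from the shared point $[C(B)]$. The double-coset language is simply a name for this same step.
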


The proof is given in Appendix~\ref{app:osr-proof}.
As a two-qubit example within an $n$-qubit register, take
$C(A)=\mathrm{CNOT}_{c\to q}$ and
$C(B)=\mathrm{CNOT}_{q\to c}$.  Hadamard gates give
\begin{equation}
 \mathrm{CNOT}_{q\to c}
 =(H_c\otimes H_q)\,\mathrm{CNOT}_{c\to q}\,(H_c\otimes H_q).
 \label{eq:hadamard-cnot-reversal}
\end{equation}
This realizes Eq.~\eqref{eq:single-block-equivalence} with
$L=R=H_c\otimes H_q$, using identities on all other qubits and absorbing
the phases of the $\SU(2)$ representatives.  Thus the routing operators
are different, but their reachable sets are equal.
The following equivalent circuits are read from left to right:
\begin{center}
\begin{tikzpicture}[x=0.85cm,y=0.7cm,
 gate/.style={draw,fill=white,minimum width=0.38cm,minimum height=0.38cm,inner sep=1pt}]
 \node[left] at (0,1) {$c$};
 \node[left] at (0,0) {$q$};
 \draw (0,1)--(4,1);
 \draw (0,0)--(4,0);
 \node[gate] at (1,1) {$H$};
 \node[gate] at (1,0) {$H$};
 \node[gate] at (3,1) {$H$};
 \node[gate] at (3,0) {$H$};
 \draw (2,0)--(2,1);
 \fill (2,1) circle (2pt);
 \draw (2,0) circle (0.12cm);
 \draw (1.86,0)--(2.14,0);
 \draw (2,-0.17)--(2,0.17);
 \node at (4.65,0.5) {$=$};
 \draw (5.3,1)--(7.7,1);
 \draw (5.3,0)--(7.7,0);
 \draw (6.5,0)--(6.5,1);
 \fill (6.5,0) circle (2pt);
 \draw (6.5,1) circle (0.12cm);
 \draw (6.36,1)--(6.64,1);
 \draw (6.5,0.83)--(6.5,1.17);
\end{tikzpicture}
\end{center}
Thus opposite CNOT directions have the same single-block reachable set.
For a general $n$-qubit CNOT block, a layer of Hadamard gates on all
qubits before and after the block gives
\begin{equation}
 H^{\otimes n}C(A)H^{\otimes n}=C(A^{-\mathsf T}),
 \qquad \mathcal M_A=\mathcal M_{A^{-\mathsf T}}.
 \label{eq:hadamard-inverse-transpose}
\end{equation}
For a CNOT word this reverses each control--target direction while
preserving the temporal gate order.  Here the local factors are
$L=R=H^{\otimes n}$, up to global phase.  The binary Fourier proof is
given in Appendix~\ref{app:osr-proof}.

\subsection{A sufficient condition for disjoint reachable sets}

The Hadamard examples show how two routings can represent the same set.
To certify the opposite case, we use operator Schmidt rank: if its value
differs for $C(A)$ and $C(B)$ across one fixed qubit cut, local layers
cannot remove that difference.
Whether a particular output state is entangled is unsuitable because it
depends on the input:
CNOT preserves $|00\rangle$ but maps $|+0\rangle$ to a Bell state.
For a labeled qubit cut $S\mid\bar S$, the operator Schmidt rank instead
counts the minimum number of product operators needed to represent $U$:
\begin{equation}
 \OSR_{S\mid\barS}(U)
 =\min\left\{r:U=\sum_{\alpha=1}^{r}L_\alpha\otimes R_\alpha\right\}.
 \label{eq:osr-def}
\end{equation}
Here the $n$ qubits are labeled by $[n]=\{1,\ldots,n\}$.
The set $S$ is any nonempty proper subset of $[n]$, and
$\bar S=[n]\setminus S$ contains the remaining qubits.
The factors $L_\alpha$ and $R_\alpha$ act on the $|S|$-qubit and
$|\bar S|$-qubit subsystems, respectively; each factor may act jointly
on all qubits on its side.  The cut need not separate consecutive sites.
For fixed routing $A$ and a fixed cut, varying the local layers leaves
the operator Schmidt rank equal to that of $C(A)$.
Different values for $C(A)$ and
$C(B)$ therefore certify that their images are disjoint.

The binary matrix $A$ introduced in Sec.~\ref{sec:mixed} also determines
the operator Schmidt rank of $C(A)$ across any specified cut.
We can compute this invariant directly from $A$, without constructing
the full unitary.

\begin{samepage}
Partition the qubit labels into $S$ and its complement $\bar S$, and
write the input and output strings as $x=(x_S,x_{\bar S})$ and
$y=(y_S,y_{\bar S})=Ax$.  Listing the coordinates in $S$ first on both
the input and output sides gives the block representation
\begin{equation}
 A=
 \begin{pmatrix}
 A_{S,S}&A_{S,\bar S}\\
 A_{\bar S,S}&A_{\bar S,\bar S}
 \end{pmatrix}
 =\begin{pmatrix}P&Q\\R&T\end{pmatrix}.
 \label{eq:A-block}
\end{equation}
Rows label output bits and columns label input bits.  Thus, over $\Ftwo$,
\begin{equation}
 y_S=Px_S+Qx_{\bar S},\qquad
 y_{\bar S}=Rx_S+Tx_{\bar S}.
 \label{eq:binary-cut-action}
\end{equation}
\end{samepage}
The square blocks $P$ and $T$ describe dependence within each side.
The off-diagonal blocks describe dependencies across the cut:
$Q$ from $\bar S$ to $S$, and $R$ in the opposite direction.
They have dimensions $|S|\times|\bar S|$ and $|\bar S|\times|S|$,
respectively.  Their ranks over $\Ftwo$ count the independent binary
dependencies in each direction.  We therefore define their sum, the
CNOT bi-cutrank, by
\begin{equation}
 \kappa_A(S)=\rank_{\Ftwo}(Q)+\rank_{\Ftwo}(R).
 \label{eq:kappa}
\end{equation}
This sum determines the operator Schmidt rank of the entire routing
block exactly.

\begin{theorem}[Exact CNOT cut formula]
\label{thm:osr-formula}
For every $A\in\GL(n,\Ftwo)$,
\begin{equation}
 \OSR_{S\mid\barS}(C(A))=2^{\kappa_A(S)}.
 \label{eq:osr-formula}
\end{equation}
\end{theorem}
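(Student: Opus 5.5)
The plan is to compute the operator Schmidt rank as the rank of a realigned matrix, and to reduce that rank to a product of a finite-field rank count and an invertible Fourier factor. By definition, $\OSR_{S\mid\barS}(U)$ is the rank of the realigned matrix
\[
M_{(y_S,x_S),(y_{\bar S},x_{\bar S})}=\langle y_S y_{\bar S}|U|x_S x_{\bar S}\rangle .
\]
For $U=C(A)$ this entry is the indicator
\[
M=\bigl[\,y_S=Px_S+Qx_{\bar S}\ \text{and}\ y_{\bar S}=Rx_S+Tx_{\bar S}\,\bigr],
\]
viewed as a $0/1$ matrix with rows indexed by $(y_S,x_S)$ and columns by $(y_{\bar S},x_{\bar S})$. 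The goal is to show that this matrix has rank exactly $2^{\rank Q+\rank R}$.

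The main step is a factorization through Fourier analysis on $\Ftwo$. Each indicator is expanded as a character sum: over $a\in\Ftwo^{|S|}$ for the first constraint and over $b\in\Ftwo^{|\bar S|}$ for the second. Grouping the exponent by side gives
\[
M=2^{-n}\sum_{a,b}(-1)^{a\cdot(y_S+Px_S)+b\cdot Rx_S}\,(-1)^{a\cdot Qx_{\bar S}+b\cdot(y_{\bar S}+Tx_{\bar S})}.
\]
Writing $f_{a,b}(y_S,x_S)$ and $g_{a,b}(y_{\bar S},x_{\bar S})$ for the two character factors, this says $M=2^{-n}F G^{\mathsf T}$. The full family $\{f_{a,b}\}$ consists of $2^n$ vectors, but these vectors are not independent.

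The crux is the dependence of $f_{a,b}$ on $b$. Since $f_{a,b}$ depends on $b$ only through $R^{\mathsf T}b$, two $b$'s in the same coset of $\ker R^{\mathsf T}$ give the same $f$. The same reasoning shows that $g_{a,b}$ depends on $a$ only through $Q^{\mathsf T}a$. I will therefore reorganize the sum by indexing with the pairs
\[
(a,\ Q^{\mathsf T}a)\quad\text{and}\quad(R^{\mathsf T}b,\ b).
\]
Alternatively, and more cleanly, I can change variables so that the sum factorizes over a quotient. Concretely, the inner sum over $b\in\ker R^{\mathsf T}$ of $g_{a,b}$ is handled first. That sum is the projector enforcing $y_{\bar S}+Tx_{\bar S}\in(\ker R^{\mathsf T})^\perp=\im R$.

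The cleaner route, which I would commit to, is to reduce to a normal form. Left and right multiplication of $C(A)$ by $C(D)$ with $D=\mathrm{diag}(D_S,D_{\bar S})$, for $D_S,D_{\bar S}$ invertible, are local unitaries. They preserve OSR and transform $Q\mapsto D_S Q D'_{\bar S}$, which preserves ranks. So I want a reduction to a canonical form in which the ranks of $Q$ and $R$ are visible directly. The obstacle here is that block-diagonal equivalence acts jointly on $P,Q,R,T$, and one cannot independently bring both $Q$ and $R$ to rank-normal form while keeping track of $P,T$. Because of this, I would fall back to the direct rank computation.

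The direct computation is carried out in two stages:
\begin{itemize}
\item I view $M$ as a block matrix indexed by $x_S$ (rows) and $x_{\bar S}$ (columns). The $(x_S,x_{\bar S})$ block is the rank-one matrix $e_{Px_S+Qx_{\bar S}}\,e_{Rx_S+Tx_{\bar S}}^{\mathsf T}$, an outer product of unit vectors.
\item Applying an invertible Fourier (Hadamard) transform on the $y_S$ and $y_{\bar S}$ indices separately, which preserves rank, turns each entry into a product of characters $(-1)^{u\cdot(Px_S+Qx_{\bar S})}(-1)^{v\cdot(Rx_S+Tx_{\bar S})}$, viewed as a function of the rows $(u,x_S)$ and the columns $(v,x_{\bar S})$. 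Both $u$ and $v$ are then fixed.
\item The cross term is $(-1)^{u\cdot Qx_{\bar S}+v\cdot Rx_S}$, while all other factors are pure row or pure column phases and can be stripped off as diagonal invertible scalings. What remains is the matrix $N_{(u,x_S),(v,x_{\bar S})}=(-1)^{(Q^{\mathsf T}u)\cdot x_{\bar S}}\,(-1)^{(R^{\mathsf T}v)\cdot x_S}$.
\end{itemize}
This $N$ is a sum of rank-one terms and can be written explicitly as a product of two character matrices. Rows depend on $(Q^{\mathsf T}u, x_S)$ and columns on $(R^{\mathsf T}v,x_{\bar S})$, and the entry equals $\chi_{x_S}(R^{\mathsf T}v)\,\chi_{x_{\bar S}}(Q^{\mathsf T}u)$. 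As a bilinear form, it pairs the row data $(x_S,\,Q^{\mathsf T}u)$ with the column data $(R^{\mathsf T}v,\,x_{\bar S})$ via characters on $\im R^{\mathsf T}\times\im Q^{\mathsf T}$. Its rank is the number of distinct characters of the group $\im R^{\mathsf T}\oplus\im Q^{\mathsf T}$ realized on each side, namely $2^{\rank R}\cdot2^{\rank Q}$, by orthogonality and nondegeneracy of the character table restricted to these subgroups. I expect the bookkeeping in this last rank count to be the main obstacle, and in particular the check that the row-phase stripping is legitimate.
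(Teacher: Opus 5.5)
Your committed three-step computation is correct. It is the Fourier conjugate of the paper's proof, not a copy of it.

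The paper stays in the computational basis. On the vectorized operator it applies the side-local permutations $\ket{a,u}\mapsto\ket{a+Pu,u}$ and $\ket{b,v}\mapsto\ket{b+Tv,v}$ to delete $P$ and $T$. What remains factorizes into the two channels $\sum_u\ket{u}\ket{Ru}$ and $\sum_v\ket{Qv}\ket{v}$, and each Schmidt rank is counted by grouping preimages into orthogonal vectors.

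After your Hadamard transform on the output indices, those same permutations become the sign factors $(-1)^{u^{\mathsf T}Px_S}$ and $(-1)^{v^{\mathsf T}Tx_{\bar S}}$. Your worry about stripping them is unfounded: each depends only on a row index $(u,x_S)$ or only on a column index $(v,x_{\bar S})$, so it is an invertible diagonal scaling. Your final count is cleanest stated as follows. Up to reordering indices, $N$ is the Kronecker product of $[(-1)^{u^{\mathsf T}Qw}]_{u,w}$ and $[(-1)^{x^{\mathsf T}R^{\mathsf T}v}]_{x,v}$. Their rows are characters labeled by $Q^{\mathsf T}u$ and $Rx$ respectively, so orthogonality of distinct characters gives ranks $2^{\rank Q}$ and $2^{\rank R}$.

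This is a hands-on proof of the paper's secondary remark that Hadamards on the output half of the Choi state give a graph state with cross-adjacency $\bigl(\begin{smallmatrix}0&Q\\R^{\mathsf T}&0\end{smallmatrix}\bigr)$.
\begin{itemize}
\item Your route proves that cut-rank fact directly from character orthogonality instead of citing it. It also makes the removal of $P,T$ a trivial phase stripping, the same mechanism as the paper's CZ formula.
\item The paper's permutation route needs no Fourier analysis and exhibits the two independent cross-cut channels explicitly.
\end{itemize}

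Both proofs work for the reason your abandoned normal-form attempt missed. Realigned rank is invariant under arbitrary invertible maps of the $(y_S,x_S)$ and $(y_{\bar S},x_{\bar S})$ index spaces, a far larger freedom than multiplying $C(A)$ by block-diagonal $C(D)$. The earlier $2^{-n}FG^{\mathsf T}$ expansion and the normal-form detour can therefore simply be dropped.
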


Given $A$, this is an exact polynomial-time calculation for a specified
cut.  If $k=|S|$, binary Gaussian elimination on the two off-diagonal
blocks takes
\begin{equation}
 O\!\left(k(n-k)\min\{k,n-k\}\right)
 \label{eq:cut-rank-cost}
\end{equation}
bit operations: $O(n^3)$ for a balanced cut and $O(n)$ for a single-qubit
cut.  These are the ranks of $Q$ and $R$; the full matrix $A$ always
has rank $n$.  The proof and computational bookkeeping are given in
Appendix~\ref{app:osr-proof}; related operator Schmidt rank results
appear in Refs.~\cite{NielsenEtAl2003,ChenYu2016}. The appendix also
connects this formula to the graph-state cut-rank relation
\cite{Hein2004GraphStates} through the vectorized routing operator.

\begin{corollary}[Sufficient condition for disjoint images]
\label{cor:local-separation}
For single-block images,
\[
 \exists S:\ \kappa_A(S)\ne\kappa_B(S)
 \quad\Longrightarrow\quad
 \mathcal M_A\cap\mathcal M_B=\varnothing.
\]
More generally, $C(A)$ and $C(B)$ cannot be related by pre- and
post-multiplication with unitaries local across that cut.
\end{corollary}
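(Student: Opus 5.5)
The plan is to reduce the corollary to the invariance of operator Schmidt rank under local multiplication, and then apply Theorem~\ref{thm:osr-formula}. The key lemma is this. For a fixed cut $S\mid\barS$, let $U$ be any operator and let $V=e^{i\phi}(L_S\otimes L_{\barS})\,U\,(R_S\otimes R_{\barS})$, where all four factors are invertible (in particular unitary) and act on the indicated sides. Then
\[
\OSR_{S\mid\barS}(V)=\OSR_{S\mid\barS}(U).
\]

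The proof of the lemma is direct. If $U=\sum_{\alpha=1}^r X_\alpha\otimes Y_\alpha$, then
\[
V=\sum_{\alpha=1}^{r} e^{i\phi}(L_SX_\alpha R_S)\otimes(L_{\barS}Y_\alpha R_{\barS}),
\]
which is a decomposition with the same number of terms. So $\OSR(V)\le\OSR(U)$. Applying the same argument to the inverse transformation gives the reverse inequality. Equivalently, the operator Schmidt rank equals the rank of the realignment matrix of $U$, and that matrix is multiplied on each side by invertible Kronecker factors.

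Any element of $\mathcal L_n=\SU(2)^n$ splits across every cut as a product of its $S$-part and its $\barS$-part. It is therefore local for every cut simultaneously.

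The single-block statement follows. Suppose some $[U]\in\mathcal M_A\cap\mathcal M_B$. Then $U=e^{i\phi}L_1C(A)L_0=e^{i\psi}L_1'C(B)L_0'$, and so
\[
C(B)=e^{i(\phi-\psi)}\,L_1'^{-1}L_1\,C(A)\,L_0L_0'^{-1},
\]
with both outer factors in $\mathcal L_n$. By the lemma, $\OSR_{S\mid\barS}(C(A))=\OSR_{S\mid\barS}(C(B))$ for every $S$. Theorem~\ref{thm:osr-formula} then gives $2^{\kappa_A(S)}=2^{\kappa_B(S)}$, hence $\kappa_A(S)=\kappa_B(S)$ for all $S$. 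This is the contrapositive of the claim.

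As an alternative route, one could invoke Proposition~\ref{prop:single-block-images}. A nonempty intersection forces equality of the images, so Eq.~\eqref{eq:single-block-equivalence} holds. The argument above avoids needing that proposition.

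The ``more generally'' clause uses the same lemma with arbitrary unitaries $L=L_S\otimes L_{\barS}$ and $R=R_S\otimes R_{\barS}$ that are local only across the particular cut $S$. These may act jointly on all qubits within each side. If $C(B)=e^{i\phi}LC(A)R$, the lemma again forces equal operator Schmidt ranks across that cut, and hence $\kappa_A(S)=\kappa_B(S)$ by Theorem~\ref{thm:osr-formula}.

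The only step requiring care, rather than a real obstacle, is checking that the global phase and the inverses stay within the allowed class. Global phases are scalars and can be absorbed into either tensor factor. Inverses of tensor products of unitaries are again tensor products of unitaries on the same sides. Everything else is a direct substitution into Theorem~\ref{thm:osr-formula}, using that $2^{k}$ is injective in $k$.
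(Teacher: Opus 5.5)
Your proposal is correct and takes the same route as the paper. The paper derives the corollary in one line from invariance of operator Schmidt rank under factors local across the cut, combined with $\OSR_{S\mid\barS}(C(A))=2^{\kappa_A(S)}$ from Theorem~\ref{thm:osr-formula}; you spell out that invariance lemma and the rearrangement $C(B)=e^{i(\phi-\psi)}L_1'^{-1}L_1C(A)L_0L_0'^{-1}$ explicitly, including the phase and inverse bookkeeping.
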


This follows from invariance of operator Schmidt rank under local factors.
The labeled profile
\begin{equation}
 q(A)=\bigl(\kappa_A(S)\bigr)_{
 \varnothing\ne S\subsetneq[n],\,S\sim\barS}
 \label{eq:profile}
\end{equation}
collects these tests across all cuts.  Thus $q(A)\ne q(B)$ certifies
disjointness.  The condition is sufficient, not necessary:
$q(A)=q(B)$ alone does not decide whether the reachable sets are equal
or disjoint.
The polynomial cost above applies to one specified cut; explicitly
listing the full profile entails $2^{n-1}-1$ distinct cuts.

\subsection{Capacity bounds for several routing blocks}

With several routing blocks, intermediate local rotations can change
the final OSR even when the routing is fixed, and distinct reachable
sets can overlap without being equal.
For example, two identical CNOTs cancel when the intervening local
layer is the identity, but inserting $R_z(\pi/2)$ on the target between
them produces the entangling unitary $e^{-i\pi Z\otimes Z/4}$.
Thus, for $C(A)=\mathrm{CNOT}_{1\to2}$, the image of $(A,A)$ strictly
contains that of $(I,I)$, which consists only of local unitaries
(Appendix~\ref{app:osr-proof}).
Although the final OSR is no longer fixed, the routing matrices still
bound it from above, uniformly over all choices of local angles.

\begin{theorem}[Cumulative cut budget]
\label{thm:cumulative}
For the alternating circuit \eqref{eq:alternating}, let
\[
K_{\bm A}(S)=\sum_{\ell=1}^{t}\kappa_{A_\ell}(S).
\]
Then
\begin{equation}
 \OSR_{S\mid\barS}\!\left(U(\bm A,\bm\theta)\right)
 \le
 \min\!\left\{
 2^{K_{\bm A}(S)},
 4^{\min(|S|,|\barS|)}
 \right\}.
 \label{eq:cumulative}
\end{equation}
For a normalized pure product input across the cut, the output Schmidt rank obeys
\begin{equation}
 \operatorname{SR}_{S\mid\barS}
 \left(U\ket{\psi_S}\ket{\psi_{\barS}}\right)
 \le
 \min\!\left\{
 2^{K_{\bm A}(S)},
 2^{\min(|S|,|\barS|)}
 \right\}.
 \label{eq:state-cumulative}
\end{equation}
\end{theorem}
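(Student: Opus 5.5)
The plan is to prove the operator bound \eqref{eq:cumulative} first, then deduce the state bound \eqref{eq:state-cumulative} from it. The argument rests on two properties of operator Schmidt rank across a fixed cut $S\mid\barS$.

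The first property is submultiplicativity. If $U=\sum_{\alpha=1}^{r}L_\alpha\otimes R_\alpha$ and $V=\sum_{\beta=1}^{s}L'_\beta\otimes R'_\beta$, then
\[
UV=\sum_{\alpha,\beta}(L_\alpha L'_\beta)\otimes(R_\alpha R'_\beta),
\]
so $\OSR(UV)\le\OSR(U)\,\OSR(V)$.

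The second property is that every local layer $L_\ell(\bm\theta_\ell)$ is a tensor product of single-qubit gates. It therefore factors as $L^{(S)}\otimes L^{(\barS)}$ across any cut and has operator Schmidt rank $1$.

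With these in hand, I apply submultiplicativity to the alternating product \eqref{eq:alternating}. The local layers contribute factors of $1$, and each routing block contributes $\OSR(C(A_\ell))=2^{\kappa_{A_\ell}(S)}$ by Theorem~\ref{thm:osr-formula}. This gives $\OSR(U)\le\prod_\ell 2^{\kappa_{A_\ell}(S)}=2^{K_{\bm A}(S)}$, and the bound holds uniformly in $\bm\theta$.

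The second term in the minimum is a dimension count. An operator Schmidt decomposition is a singular value decomposition of the realigned matrix of $U$, which maps the $4^{|S|}$-dimensional space of operators on $S$ against the $4^{|\barS|}$-dimensional space of operators on $\barS$. Its rank is at most $4^{\min(|S|,|\barS|)}$. Taking the minimum of the two bounds gives \eqref{eq:cumulative}.

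For the state bound, I apply a decomposition $U=\sum_{\alpha=1}^{r}L_\alpha\otimes R_\alpha$ to the product input. This writes the output as
\[
U\ket{\psi_S}\ket{\psi_{\barS}}=\sum_{\alpha=1}^{r}(L_\alpha\ket{\psi_S})\otimes(R_\alpha\ket{\psi_{\barS}}),
\]
a sum of at most $r$ product vectors. Hence the Schmidt rank of the output is at most $\OSR(U)\le 2^{K_{\bm A}(S)}$.

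The second term again comes from dimension: the Schmidt rank of a pure state is at most the smaller local dimension, $2^{\min(|S|,|\barS|)}$. Combining the two gives \eqref{eq:state-cumulative}. Normalization plays no role except to make the output a valid state.

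I do not expect a real obstacle here, because the exact single-block value comes from Theorem~\ref{thm:osr-formula}, which is assumed. The only step needing care is to use submultiplicativity as an upper bound rather than claim multiplicativity; the two-CNOT cancellation example shows that the bound can be strict. I would also note that the bound is in terms of exponents $\kappa$ rather than the naive $4$ per block, and that this is exactly what Theorem~\ref{thm:osr-formula} supplies.
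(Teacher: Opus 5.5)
Your proposal is correct and matches the paper's proof. Both iterate submultiplicativity of OSR through the alternating product, using OSR one for each local layer and Theorem~\ref{thm:osr-formula} for each routing block, and then apply the operator decomposition to the product input for the state bound; you also spell out the dimension-count terms $4^{\min(|S|,|\barS|)}$ and $2^{\min(|S|,|\barS|)}$, which the paper leaves implicit.
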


The proof uses submultiplicativity of OSR and is given in
Appendix~\ref{app:cumulative-proof}.  Routing cancellation and intermediate
angles can make the bound loose; different upper bounds do not by
themselves certify disjoint images.  A target whose OSR exceeds the bound,
however, lies outside the corresponding reachable set.
Multiblock OSR need not even be a power of two: an explicit OSR-three
example is given in Appendix~\ref{app:osr-three}.

\paragraph{Shallow circuits on a chain.}
The routing-block count $t$ describes the representation, whereas the
CNOT-layer depth $d$ describes a chosen implementation on the specified
connectivity; $d$ need not be the minimum achievable depth.
Routing-block count and CNOT-layer depth are related by polynomial
overhead, but are not interchangeable.  Each routing block admits a
polynomial-depth implementation on either all-to-all connectivity or a
nearest-neighbor chain with both CNOT directions available, while each
parallel CNOT layer can itself be treated as a routing block.
Their quantitative relation depends on connectivity, compilation, and
cancellations.
We count only CNOT layers in $d$, allowing arbitrary single-qubit gates
before, between, and after them.

Both all-to-all and nearest-neighbor CNOT circuits fit the same routing
representation $A\in\GL(n,\Ftwo)$: a CNOT-only block implements a product
$A=E_g\cdots E_1$ of elementary binary matrices.
All-to-all connectivity allows each factor to couple any pair of distinct
qubits, whereas a chain allows only neighboring pairs.
With unrestricted depth and both CNOT directions available, either
architecture can realize every $A$; the difference lies in the depth
required, not in the routing representation itself.
Although each gate on the chain acts only on neighboring qubits,
successive gates can propagate input dependencies along the chain.
We now examine how a finite depth budget limits cross-cut capacity.

The cut budget has a simple interpretation for shallow circuits.
Let $g_S$ be the number of CNOT gates crossing a chosen cut in a
specified implementation.  Each such gate can multiply the Schmidt
rank by at most two; gates confined to either side cannot increase it.
Thus the cross-cut connections limit what angle training can generate
(Fig.~\ref{fig:shallow-cut-budget}).

\begin{corollary}[Capacity of shallow one-dimensional circuits]
\label{cor:shallow-chain}
Consider an $n$-qubit nearest-neighbor chain with arbitrary single-qubit
gates and $d$ layers of pairwise disjoint CNOT gates.  For every cut
between consecutive sites, $g_S\le d$ and
\begin{equation}
 \OSR_{S\mid\barS}(U)\le 2^{g_S}\le 2^d.
 \label{eq:shallow-osr}
\end{equation}
From a pure product input, the output satisfies
\begin{equation}
 \operatorname{SR}_{S\mid\barS}(|\psi_{\rm out}\rangle)\le 2^d,
 \qquad E_S(|\psi_{\rm out}\rangle)\le d,
 \label{eq:shallow-state}
\end{equation}
where $E_S=-\Tr(\rho_S\log_2\rho_S)$ is the entanglement entropy in bits.
\end{corollary}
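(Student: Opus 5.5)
The plan is to obtain the corollary from Theorem~\ref{thm:cumulative}, or more directly from submultiplicativity of operator Schmidt rank applied gate by gate. First, fix a cut $S\mid\barS$ between consecutive sites $j$ and $j+1$, so that $S=\{1,\ldots,j\}$. In a nearest-neighbor chain, a CNOT crosses this cut exactly when it acts on the pair $(j,j+1)$. Because the gates within one layer are pairwise disjoint, each layer contains at most one gate on that pair, and therefore $g_S\le d$.

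Next, bound the operator Schmidt rank. Write $U$ as a product of single-qubit gates and CNOTs, so that $U=\prod_k G_k$. Operator Schmidt rank is submultiplicative across a fixed cut: if $X=\sum_{\alpha\le r}L_\alpha\otimes R_\alpha$ and $Y=\sum_{\beta\le s}L'_\beta\otimes R'_\beta$, then $XY=\sum_{\alpha,\beta}L_\alpha L'_\beta\otimes R_\alpha R'_\beta$, which has at most $rs$ terms. Single-qubit gates and CNOTs that do not cross the cut are product operators across $S\mid\barS$, so they have rank $1$. A crossing CNOT has rank $2$, either by the expansion $\ket0\bra0\otimes\id+\ket1\bra1\otimes X$ or by Theorem~\ref{thm:osr-formula}, since a single CNOT across the cut gives $\kappa=1$. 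Multiplying the factors gives $\OSR_{S\mid\barS}(U)\le 2^{g_S}\le 2^d$. Alternatively, one could treat each CNOT layer as a routing block and note that the layer's matrix $A_\ell$ has $\kappa_{A_\ell}(S)\le 1$, which yields $K_{\bm A}(S)\le d$ in Theorem~\ref{thm:cumulative}. The direct gate count is cleaner, however, and it gives the sharper intermediate quantity $g_S$.

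For the state bounds, expand $U=\sum_{\alpha\le r}L_\alpha\otimes R_\alpha$ with $r\le 2^d$, and take a product input $\ket{\psi_S}\ket{\psi_{\barS}}$, which covers any pure product input across the cut. The output $\sum_\alpha (L_\alpha\ket{\psi_S})\otimes(R_\alpha\ket{\psi_{\barS}})$ is then a sum of at most $r$ product vectors, so its Schmidt rank is at most $2^d$. The reduced state $\rho_S$ therefore has rank at most $2^d$. Since von Neumann entropy is at most the log of the rank, $E_S\le\log_2 2^d=d$.

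The only real obstacle is the counting claim $g_S\le d$. It depends on two conditions: the cut lies between consecutive sites, and each layer consists of disjoint gates on neighboring pairs. I would state explicitly that for a non-contiguous cut the bound must be replaced by the number of crossing gates, which is why the corollary is phrased in terms of $g_S$. Everything else is routine.
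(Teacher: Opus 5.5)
Your proposal is correct and matches the paper's proof: gate-by-gate submultiplicativity of OSR, with crossing CNOTs contributing a factor $2$ and all other gates a factor $1$. You then use $g_S\le d$, since the single boundary edge can be used at most once per layer of disjoint gates, together with the product-input and entropy-versus-log-rank step. The paper also notes, as you do, the equivalent route through Theorem~\ref{thm:cumulative} via $K_{\bm A}(S)\le g_S$.
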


The proof is given in Appendix~\ref{app:cumulative-proof}.

These bounds also tell us how much depth a given routing requires.
A single-block unitary $U=L_1C(A)L_0$ has OSR $2^{\kappa_A(S)}$
across each cut, regardless of the endpoint local gates.
On a nearest-neighbor chain, a depth-$d$ implementation can provide
OSR at most $2^d$ across a consecutive-site cut.
To implement $U$ exactly, the available capacity must therefore meet
the required capacity at every such cut:
\begin{equation}
 d\ge \max_{1\le k<n}\kappa_A(\{1,\ldots,k\}).
 \label{eq:routing-depth-lower-bound}
\end{equation}
Here the qubits retain their fixed order $1,\ldots,n$ along the chain.
Thus one routing block ($t=1$) need not mean one execution layer:
its cross-cut ranks can force a large depth.  This is a necessary lower
bound, not a guarantee that the indicated depth suffices.

The same depth constraint limits state preparation from pure product
inputs.  If $d=O(\log n)$, the output Schmidt rank across every
consecutive-site cut is at most polynomial in $n$, and its entanglement
entropy is at most $O(\log n)$.
For example, define a target of $m$ Bell pairs across $S\mid\bar S$ by
\begin{equation}
 \ket{\Phi_m}=2^{-m/2}\sum_{z\in\{0,1\}^{m}}\ket{z}_S\ket{z}_{\bar S},
 \label{eq:bell-target}
\end{equation}
possibly tensored with fixed local states.  Its Schmidt rank is $2^m$, so
preparing it exactly requires $d\ge m$.  More generally:
\begin{lemma}[Rank--fidelity bound]
\label{lem:rank-fidelity}
If $\ket{\psi}$ has Schmidt rank at most $R$ across the target cut, then
\begin{equation}
 \abs{\braket{\Phi_m}{\psi}}^2\le \frac{R}{2^m}.
 \label{eq:rank-fidelity}
\end{equation}
\end{lemma}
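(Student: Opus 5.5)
The plan is to reduce the overlap to a bound on Schmidt coefficients, using the fact that $\ket{\Phi_m}$ is maximally entangled across the cut.

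First fix the setting. Any fixed local states tensored onto the Bell pairs can be absorbed as follows. Write $S=S_1\cup S_2$, where $S_1$ carries the $m$ Bell-pair halves and $S_2$ carries the fixed local factor $\ket{a}$; split $\barS=\barS_1\cup\barS_2$ similarly with fixed local factor $\ket{b}$. Then $\bra{a}_{S_2}\bra{b}_{\barS_2}$ is a product contraction, so it does not increase Schmidt rank across the cut $S_1\mid\barS_1$. It also has operator norm at most one. The reduced vector $\ket{\psi'}$ therefore has Schmidt rank at most $R$ and norm at most one, and the problem reduces to the pure case $\ket{\Phi_m}$ on $2^m\otimes 2^m$ dimensions. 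I assume $\ket{\psi}$ is normalized, since the fidelity statement requires it.

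Next, write the Schmidt decomposition $\ket{\psi}=\sum_{j=1}^{r}\lambda_j\ket{u_j}\ket{v_j}$ with $r\le R$, $\lambda_j\ge0$ and $\sum_j\lambda_j^2\le1$. The key identity for the maximally entangled state is
\[
\bigl(\bra{u}\otimes\bra{v}\bigr)\ket{\Phi_m}=2^{-m/2}\sum_z \overline{u_z}\,\overline{v_z}=2^{-m/2}\,\overline{\langle \bar u^{*}\vert v\rangle}\text{-type inner product},
\]
which equals $2^{-m/2}\,\overline{u^{\mathsf T}v}$ in coordinates. By Cauchy--Schwarz its modulus is at most $2^{-m/2}$ for unit vectors. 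Applying this term by term gives $\abs{\braket{\Phi_m}{\psi}}\le 2^{-m/2}\sum_j\lambda_j$. A second Cauchy--Schwarz over the $r$ terms gives $\sum_j\lambda_j\le\sqrt{r}\,\bigl(\sum_j\lambda_j^2\bigr)^{1/2}\le\sqrt{R}$. Squaring yields $R/2^m$.

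An equivalent route is to view $\ket{\psi}$ as a $2^m\times2^m$ coefficient matrix $M$ of rank at most $R$ and Frobenius norm at most one. Then the overlap is $2^{-m/2}\abs{\Tr M}$, and $\abs{\Tr M}\le\norm{M}_1\le\sqrt{\rank M}\,\norm{M}_F$.

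The main obstacle is the padding step. One must check that contracting the fixed local states really preserves the rank bound and can only shrink the norm, and that the cut in Eq.~\eqref{eq:bell-target} is compatible with that local padding. The core inequality itself is routine.
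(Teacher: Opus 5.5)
Your proposal is correct and follows essentially the same argument as the paper. The paper likewise bounds the overlap term by term, $\abs{\braket{\Phi_m}{\psi}}\le 2^{-m/2}\sum_j\lambda_j$, and then applies Cauchy--Schwarz over at most $R$ Schmidt coefficients; you justify the first step explicitly via $\abs{u^{\mathsf T}v}\le 1$, where the paper only says ``after optimizing local Schmidt bases,'' and you also correctly handle the fixed local padding states by contraction, a point the paper leaves implicit.
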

The proof is given in Appendix~\ref{app:rank-fidelity-proof}.  Hence
insufficient depth also limits approximation: the squared fidelity obeys
$F\le\min\{1,2^{d-m}\}$.
This bottleneck depends on connectivity.  With all-to-all connections,
$m$ disjoint cross-cut CNOTs can run simultaneously, generating the
same $m$ Bell pairs in one CNOT layer from
$|+\rangle^{\otimes m}|0\rangle^{\otimes m}$.

\begin{figure}[!htbp]
\centering
\includegraphics[width=\linewidth]{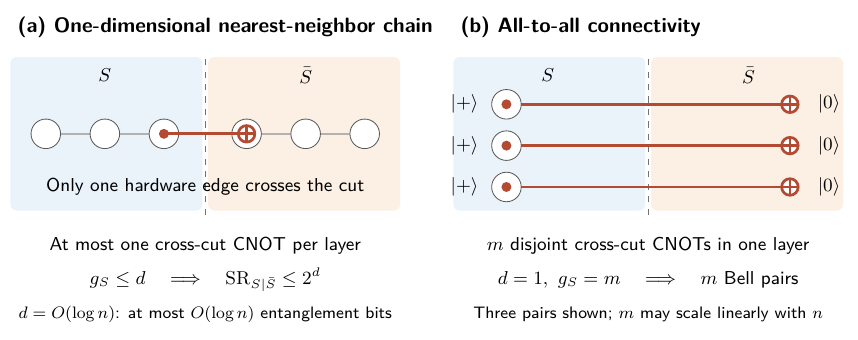}
\caption{\textbf{Depth limits cross-cut capacity through connectivity.}
Blue and orange regions specify a fixed qubit bipartition.  Red links show
CNOT gates, with a filled control and an open target marked by a plus.
\textbf{(a)} On a nearest-neighbor chain, only one hardware edge crosses
a consecutive-site cut.  At most one CNOT can use it in each parallel
layer, so depth $d$ permits at most $d$ cross-cut gates.  From a pure
product input, the output Schmidt rank is at most $2^d$ and the
entanglement entropy at most $d$ bits.  These are upper bounds, not
guarantees of saturation.  Gray links denote other available chain edges.
\textbf{(b)} With all-to-all connectivity, $m$ CNOTs on disjoint pairs can
cross the same cut simultaneously.  The indicated product input yields
$m$ Bell pairs in one CNOT layer.  Single-qubit gates are not included
in the CNOT-layer depth; preparing $|+\rangle$ from $|0\rangle$ adds
a local Hadamard layer.}
\label{fig:shallow-cut-budget}
\end{figure}

These bounds constrain capacity, but do not specify how the circuit
changes when its angles are varied.  Two routings can have identical
cut profiles yet turn local angle variations into different Pauli
directions (Appendix~\ref{app:profile-counterexample}).
Capacity alone therefore does not determine whether training can move
toward a target.  The next section examines which directions are
accessible by small angle changes at a given parameter point.

\section{Routing controls local tangent accessibility}
\label{sec:tangent}

The previous section considered the full set of unitaries reachable as the
angles vary.  We now fix the routing tuple $\bm A$ and write
$U_{\bm A}(\bm\theta)=U(\bm A,\bm\theta)$, where
$\bm\theta=(\theta_1,\ldots,\theta_p)$ collects the $p$ continuous gate
parameters.  Given an objective $\cL$, parameter training within this routing
sector seeks
\begin{equation}
 \min_{\bm\theta}\ \ell_{\bm A}(\bm\theta),
 \qquad
 \ell_{\bm A}(\bm\theta):=\cL\!\left(U_{\bm A}(\bm\theta)\right).
 \label{eq:fixed-routing-training}
\end{equation}
Thus $\bm\theta$ is the trained continuous variable, whereas $\bm A$ remains
fixed.  Near the current iterate $\bar{\bm\theta}$, a small parameter update
$\delta\bm\theta=(\delta\theta_1,\ldots,\delta\theta_p)$ changes the circuit to
first order as
\begin{equation}
 U_{\bm A}(\bar{\bm\theta}+\delta\bm\theta)
 \simeq U_{\bm A}(\bar{\bm\theta})
 +\sum_{j=1}^{p}\partial_jU_{\bm A}(\bar{\bm\theta})\,\delta\theta_j,
 \qquad
 \partial_j:=\frac{\partial}{\partial\theta_j}.
 \label{eq:circuit-first-order}
\end{equation}
Each derivative $\partial_jU_{\bm A}$ is therefore a circuit direction
available to training.  The central point is that these directions depend on
$\bm A$, even though training updates only $\bm\theta$.

\subsection{How routing propagates a local rotation}

The $j$th parameter $\theta_j$ controls a single-qubit rotation
$R_{P_j}(\theta_j)=e^{-i\theta_jP_j/2}$, where $P_j=X_q$ or $Z_q$ for some
qubit $q$ under the $ZXZ$ convention of Eq.~\eqref{eq:local-layer}.  Its
derivative is
\begin{equation}
 \frac{\partial R_{P_j}(\theta_j)}{\partial\theta_j}
 =-\frac{i}{2}P_jR_{P_j}(\theta_j).
 \label{eq:local-rotation-derivative}
\end{equation}
For parameter $\theta_j$, write the full circuit as
\begin{equation}
 U_{\bm A}
 =U_{{\rm after},j}R_{P_j}(\theta_j)U_{{\rm before},j}.
 \label{eq:circuit-split-at-parameter}
\end{equation}
Here $U_{{\rm before},j}$ acts first, before the parameter gate, and
$U_{{\rm after},j}$ acts later, between that gate and the output; matrix
products act from right to left.  Both parts depend on the routing $\bm A$
and on the other angles.  Differentiating only the middle gate gives
\begin{equation}
 \partial_jU_{\bm A}
 =-\frac{i}{2}U_{{\rm after},j}P_jR_{P_j}(\theta_j)
 U_{{\rm before},j}.
 \label{eq:full-circuit-parameter-derivative}
\end{equation}
Thus the derivative is the original circuit with $P_j$ inserted at the
parameter gate, together with the factor $-i/2$.  To follow this direction
forward to the circuit output, define
\begin{equation}
 K_j:=i(\partial_jU_{\bm A})U_{\bm A}^\dagger
 =\frac12U_{{\rm after},j}P_jU_{{\rm after},j}^\dagger.
 \label{eq:output-frame-generator}
\end{equation}
The gates before the parameter and the rotation itself cancel by unitarity.
For an independently parameterized Pauli rotation, $K_j$ is independent of
its own angle $\theta_j$ but still depends on the routing and on the other
angles contained in $U_{{\rm after},j}$.
This exact formula compresses every gate after the parameter into one
conjugation; it does not discard those gates.  Indeed, if
$U_{{\rm after},j}=V_m\cdots V_2V_1$, then
\begin{equation}
 K_j=\frac12V_m\cdots V_2V_1P_j
 V_1^\dagger V_2^\dagger\cdots V_m^\dagger,
 \label{eq:sequential-generator-propagation}
\end{equation}
For a fixed parameter $j$, this propagation can be read recursively as
\begin{equation}
 Q_0=P_j,\qquad Q_r=V_rQ_{r-1}V_r^\dagger\quad(r=1,\ldots,m),
 \qquad K_j=\frac12Q_m.
 \label{eq:generator-propagation-recursion}
\end{equation}
The next three steps describe this update when $V_r$ is one CNOT, a complete
CNOT routing block, or a single-qubit layer.  Changing the routing changes
some of these $V_r$ and can therefore change the final direction $K_j$.
Repeating the same calculation for $j=1,\ldots,p$ gives all directions
available from the circuit parameters.

\paragraph{One CNOT.}
The basic propagation step is already visible for a CNOT with control $c$
and target $t$.  Writing $C=\mathrm{CNOT}_{c\to t}$, conjugation gives
\begin{align}
 CX_cC^\dagger&=X_cX_t, & CX_tC^\dagger&=X_t, \nonumber\\
 CZ_cC^\dagger&=Z_c, & CZ_tC^\dagger&=Z_cZ_t.
 \label{eq:pauli-cnot}
\end{align}
Thus an $X$ direction on the control spreads forward to the target, while a
$Z$ direction on the target spreads backward to the control; $X_t$ and $Z_c$
remain unchanged.  Equivalently,
$R_P(\theta)=e^{-i\theta P/2}$ gives, for every angle (with the same notation
also used for a multiqubit Pauli generator),
\begin{align}
 C R_{Z_c}(\theta)&=R_{Z_c}(\theta)C,&
 C R_{X_t}(\theta)&=R_{X_t}(\theta)C,
 \label{eq:local-rotation-commutation}\\
 C R_{X_c}(\theta)&=R_{X_cX_t}(\theta)C,&
 C R_{Z_t}(\theta)&=R_{Z_cZ_t}(\theta)C.
 \label{eq:rotation-propagation}
\end{align}
Equations~\eqref{eq:local-rotation-commutation} leave the Pauli direction
local: $Z_c$ and $X_t$ pass through the CNOT unchanged.  In
Eq.~\eqref{eq:rotation-propagation}, by contrast, $X_c$ and $Z_t$ acquire a
Pauli factor on the other qubit.  Repeating this rule through several CNOTs
can spread one local direction over many qubits.  For example, $X_1$ becomes
$X_1X_2$ after $\mathrm{CNOT}_{1\to2}$; passing next through
$\mathrm{CNOT}_{2\to3}$ maps the $X_2$ factor to $X_2X_3$, giving
$X_1X_2X_3$.  Thus one scalar angle can control a multiqubit direction in the
full circuit.

\paragraph{A complete routing block.}
For a CNOT-only block, the same binary matrix $A$ used to measure capacity
also gives the exact propagation through the whole block.  For
$\bm x,\bm z\in\Ftwo^n$, write
$X^{\bm x}=\bigotimes_q X_q^{x_q}$ and
$Z^{\bm z}=\bigotimes_q Z_q^{z_q}$, where a zero bit denotes an identity
factor.  Then
\begin{equation}
 C(A)X^{\bm x}C(A)^\dagger=X^{A\bm x},\qquad
 C(A)Z^{\bm z}C(A)^\dagger=Z^{A^{-\mathsf T}\bm z}.
 \label{eq:routing-pauli-main}
\end{equation}
This is the CNOT-only specialization of the standard binary Clifford action
on Pauli labels \cite{DehaeneDeMoor2003}; the representation of a reversible
CNOT circuit by $A\in\GL(n,\Ftwo)$ is also central to linear reversible
circuit synthesis \cite{PatelMarkovHayes08}.  Thus $A$ determines which $X$
factors appear, while its inverse transpose does the same for $Z$ factors.
The binary updates can spread an operator to new sites or cancel factors
already present.  Here the same $A$ links this local propagation rule to the
cross-cut capacity studied in the previous section.

\paragraph{Interleaved local layers.}
A general $U_{{\rm after},j}$ can contain several routing blocks separated by
single-qubit layers.  A local layer mixes $X,Y,Z$ at the occupied sites but
does not introduce new sites.  For the $ZXZ$ convention, the complete
single-qubit propagation rules are
\begin{align}
 R_z(\phi)X R_z(\phi)^\dagger
 &=\cos\phi\,X+\sin\phi\,Y,&
 R_x(\phi)X R_x(\phi)^\dagger&=X,\nonumber\\
 R_z(\phi)Y R_z(\phi)^\dagger
 &=\cos\phi\,Y-\sin\phi\,X,&
 R_x(\phi)Y R_x(\phi)^\dagger
 &=\cos\phi\,Y+\sin\phi\,Z,\nonumber\\
 R_z(\phi)Z R_z(\phi)^\dagger
 &=Z,&
 R_x(\phi)Z R_x(\phi)^\dagger
 &=\cos\phi\,Z-\sin\phi\,Y.
 \label{eq:local-pauli-propagation}
\end{align}
For a multiqubit Pauli string, these rules apply independently at each
occupied site, so a local layer changes Pauli axes and coefficients without
expanding the support.
Each later routing block then propagates every resulting Pauli term by
Eq.~\eqref{eq:routing-pauli-main}.

\paragraph{Clifford limit.}
As an extreme case, suppose the subsequent local angles are integer multiples
of $\pi/2$, so that $U_{{\rm after},j}$ is Clifford.  Clifford conjugation
maps each Pauli string to one Pauli string, up to sign, and the routing blocks
have the same property.  The entire layer-by-layer propagation therefore
keeps a single string:
\begin{equation}
 K_j=\pm\frac12P_s.
 \label{eq:clifford-generator-propagation}
\end{equation}
Its Pauli label can be tracked by binary updates without expanding a sum.

\paragraph{General continuous angles.}
For arbitrary local angles, the sine--cosine mixtures in
Eq.~\eqref{eq:local-pauli-propagation} can create several Pauli terms.  The
output direction then has the general expansion
\begin{equation}
 K_j=\sum_s c_{j,s}P_s,
 \label{eq:generator-pauli-expansion}
\end{equation}
where $s$ labels the $n$-qubit Pauli strings
$P_s\in\{I,X,Y,Z\}^{\otimes n}$ and the real coefficients $c_{j,s}$ are
determined by gates in the forward causal cone of $P_j$.  CNOT routing can
enlarge the support of a local generator, and subsequent generic local
rotations can mix the Pauli axes at each occupied site.  A single $K_j$ can
therefore contain exponentially many Pauli components.  At the current
parameter point, however, an update $\delta\theta_j$ scales this entire
fixed combination, not its components independently: $K_j$ remains one
parameter-controlled direction.  Pauli-basis support and tangent-space
dimension therefore quantify different notions of complexity.

At a fixed parameter point, the real span of $K_1,\ldots,K_p$ is the
\emph{Pauli accessibility space} $\mathcal D_A$.  It records the directions
available to parameter training, whereas OSR records cross-cut capacity.  The
parameter count gives
$\dim\mathcal D_A\le p=3n(t+1)$, with strict inequality when some directions
are dependent.  This pointwise span is not the dynamical Lie algebra: no
closure under commutators is taken.  The next subsection rewrites the same
directions using $H_j$ and compares their span with the objective gradient.

\subsection{Ambient and accessible gradients}

For a fixed routing $\bm A$, ordinary VQC training uses the parameter
gradient of
\begin{equation}
 \cL(U)=\Tr\!\left[O\,U\rho U^\dagger\right].
 \label{eq:loss}
\end{equation}
Writing $U=U_{\bm A}(\bm\theta)$ and
$\ell_{\bm A}(\bm\theta)=\cL(U)$, its familiar components are
\begin{equation}
 g_j:=\partial_j\ell_{\bm A},\qquad
 g=\nabla_{\bm\theta}\ell_{\bm A}.
 \label{eq:ordinary-parameter-gradient}
\end{equation}
For the independent Pauli rotations used here, each $g_j$ can be estimated
by the standard parameter-shift rule
\cite{Mitarai2018,Schuld2019}; repeated shots are required to estimate the
two shifted expectation values.  Appendix~\ref{app:first-principles} derives
the rule under the present gate convention and states its measurement and
shared-parameter qualifications.

The routing dependence of this familiar gradient is exposed by the standard
unitary-gradient identity
\begin{equation}
 H_j:=iU^\dagger\partial_jU
 =\frac12U_{{\rm before},j}^\dagger P_jU_{{\rm before},j},\qquad
 G(U):=i[U^\dagger O U,\rho],\qquad
 g_j=\inner{H_j}{G(U)},
 \label{eq:routing-gradient-identity}
\end{equation}
where $\inner{X}{Y}=\operatorname{Re}\Tr(X^\dagger Y)$ is the unnormalized
Hilbert--Schmidt inner product.  Here $H_j$ is the physical direction supplied
by parameter $\theta_j$, while $G$ represents the objective gradient when
all unitary directions are allowed.  Under this metric and at fixed norm,
$G$ is steepest ascent and $-G$ is steepest descent.  These are standard
Riemannian-gradient facts \cite{Absil2008,Wiersema2023};
Appendix~\ref{app:first-principles}
derives Eq.~\eqref{eq:routing-gradient-identity}, including its sign and the
factor $1/2$.

The new routing implication is immediate.  Define
\begin{equation}
 \cT_U:=\Span_{\mathbb R}\{H_1,\ldots,H_p\},\qquad
 \mathcal D_A=U\cT_UU^\dagger.
 \label{eq:frame-equivalence}
\end{equation}
Parameter training sees $G$ only through its overlaps with the available
directions $H_j$.  Thus, even when $G\ne0$, every parameter gradient vanishes
if $G$ is orthogonal to $\cT_U$; near-orthogonality makes the physically
visible slope small.  Routing changes the propagated $H_j$ and hence changes
$\cT_U$, so another routing may align the accessible directions
with a part of $G$ that was previously hidden.

Formally, write $G=\Pi_{\cT_U}G+G_\perp$, where $G_\perp$ is orthogonal to
$\cT_U$.  Since every $H_j\in\cT_U$,
$\inner{H_j}{G_\perp}=0$, and therefore
\begin{equation}
 g_j=\inner{H_j}{G}=\inner{H_j}{\Pi_{\cT_U}G},
 \qquad
 g=0\ \Longleftrightarrow\ \Pi_{\cT_U}G=0.
 \label{eq:parameter-gradient-sees-accessible-part}
\end{equation}
The exact equivalence applies to a zero gradient.  For a small nonzero
gradient, a better-aligned routing may increase the visible slope.

When $g=0$, every accessible parameter direction is orthogonal to $G$, so
$\Pi_{\cT_U}G=0$.  For nonzero $g$, its norm also depends on how the
accessible directions are parameterized.  For example, duplicating a
direction, $H_2=H_1$, changes the parameter gradient from $c$ to $(c,c)$,
where $c=\inner{H_1}{G}$.  Its norm increases from $|c|$ to
$\sqrt{2}|c|$, although the accessible space is unchanged.  Such exact
redundancy occurs at the zero-angle point of the $ZXZ$ parametrization,
where the two $Z$ slots in the same single-qubit layer generate identical
operator tangents.  The numerical routing tests in Section~\ref{sec:repair}
use independently parameterized $X,Y,Z$ rotations and do not rely on this
$ZXZ$ coordinate redundancy.  To account for
these overlaps and scales, we introduce the Gram matrix
\begin{equation}
 M_{jk}=\inner{H_j}{H_k}.
 \label{eq:gram}
\end{equation}

\begin{proposition}[Coordinate-invariant projected gradient]
\label{prop:gram}
For any possibly redundant generator list,
\begin{equation}
 \norm{\Pi_{\cT_U}G}^2=g^{\mathsf T}M^+g,
 \label{eq:projected-gradient}
\end{equation}
where $M^+$ is the Moore--Penrose pseudoinverse.
\end{proposition}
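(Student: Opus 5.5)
The plan is to treat this as a finite-dimensional least-squares statement in the real inner-product space of Hermitian operators equipped with $\inner{X}{Y}=\operatorname{Re}\Tr(X^\dagger Y)$. Define the synthesis map $\Phi:\mathbb R^p\to\mathcal H$, $\Phi(c)=\sum_j c_jH_j$. Its image is exactly $\cT_U$. Its adjoint with respect to the Euclidean inner product on $\mathbb R^p$ is the analysis map $\Phi^*(Y)=(\inner{H_j}{Y})_j$. Two identities then follow from Eq.~\eqref{eq:routing-gradient-identity} and Eq.~\eqref{eq:gram}:
\[
\Phi^*G=g,\qquad \Phi^*\Phi=M.
\]

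The core step is to express the orthogonal projection onto $\im\Phi$ through the pseudoinverse, namely $\Pi_{\cT_U}=\Phi M^+\Phi^*$. I will verify this directly rather than quoting it.

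First, $M$ is symmetric positive semidefinite, and $\ker M=\ker\Phi$. Indeed, $c^{\mathsf T}Mc=\norm{\Phi c}^2$, so $Mc=0$ forces $\Phi c=0$, and the reverse inclusion is immediate.

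Second, $\im M=(\ker M)^\perp=\im\Phi^*$. Hence $MM^+$ is the orthogonal projector onto $\im\Phi^*=(\ker\Phi)^\perp$.

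Set $P=\Phi M^+\Phi^*$. It is self-adjoint because $M^+$ is symmetric. It is idempotent because
\[
P^2=\Phi M^+MM^+\Phi^*=\Phi M^+\Phi^*,
\]
using the Moore--Penrose identity $M^+MM^+=M^+$. Its image lies in $\im\Phi$. It fixes $\im\Phi$: writing $Y=\Phi c$, and noting that $\Phi$ annihilates the $\ker\Phi$ component of $c$,
\[
PY=\Phi M^+Mc=\Phi\,(MM^+)c=\Phi c.
\]
A self-adjoint idempotent with image $\im\Phi$ is the orthogonal projector onto $\cT_U$.

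The conclusion is then a one-line computation:
\[
\norm{\Pi_{\cT_U}G}^2=\inner{G}{\Pi_{\cT_U}G}=\inner{G}{\Phi M^+\Phi^*G}=(\Phi^*G)^{\mathsf T}M^+(\Phi^*G)=g^{\mathsf T}M^+g.
\]
The first equality uses that an orthogonal projector is self-adjoint and idempotent.

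An alternative that avoids the projector identity is to write $\Pi_{\cT_U}G=\Phi c^\star$ for a minimizer $c^\star$ of $\norm{G-\Phi c}$. The normal equations give $Mc^\star=g$, and these are solvable because $g\in\im\Phi^*=\im M$. Then $\norm{\Phi c^\star}^2=c^{\star\mathsf T}Mc^\star=g^{\mathsf T}c^\star$. Any solution $c^\star$ differs from $M^+g$ by an element of $\ker M$, which is orthogonal to $g$, so $g^{\mathsf T}c^\star=g^{\mathsf T}M^+g$.

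The main obstacle is the redundant case, where $M$ is singular. The delicate point there is showing that $g\in\im M$ and that the choice of solution does not matter. Both reduce to $\ker M=\ker\Phi$, so that is the step I will write out carefully.

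It is also worth noting that $H_j$ and $G$ are Hermitian, so the real span and real inner product are the right setting. The formula is invariant under reparametrizations $c\mapsto Sc$ with $S$ invertible, which explains the duplicated-direction example preceding the statement.
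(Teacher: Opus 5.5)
Your proof is correct and follows the paper's own argument. The paper likewise writes $M=\mathcal H^*\mathcal H$ and $g=\mathcal H^*G$ for the synthesis map $\mathcal H$, and concludes via the projector $\mathcal H M^+\mathcal H^*$ onto $\cT_U=\im\mathcal H$; it simply asserts that projector identity, whereas you verify it through $\ker M=\ker\Phi$ and the Moore--Penrose identities (your normal-equations variant is also valid but not needed).
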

\begin{proof}
Let $\mathcal Hc=\sum_jc_jH_j$, so $M=\mathcal H^*\mathcal H$ and
$g=\mathcal H^*G$. The orthogonal projector onto $\cT_U=\im\mathcal H$
is $\mathcal HM^+\mathcal H^*$, hence
$\norm{\Pi_{\cT_U}G}^2=\inner{G}{\mathcal HM^+\mathcal H^*G}
=g^{\mathsf T}M^+g$.
\end{proof}

This is the standard Gram formula for an orthogonal projection.  If $M$ is
singular, $M^+$ discards zero or redundant coordinate combinations.  Hence
$\sqrt{g^{\mathsf T}M^+g}$, unlike $\lVert g\rVert$, is the physical objective
slope visible to the circuit and is invariant under nonsingular
reparameterization.  The formula is related to natural-gradient geometry
\cite{Stokes2020,Meyer2021}, but here it serves as a diagnostic rather than a
prescribed optimizer.  Computing the full $M$ requires $O(p^2)$ overlaps,
whose individual evaluation can be costly; Clifford reference points provide
an important exception.  Ordinary training requires only the measurable
gradient $g$; the numerical sections instead use a task-effective
state-space Gram matrix.
To distinguish this operator-space geometry from the state-space geometry
used in the numerical analysis, we label the existing quantities
$G_{\rm op}\equiv G$, $M^{\rm op}\equiv M$, and
$\cT_U^{\rm op}\equiv\cT_U$, where $\mathrm{op}$ stands for operator.

The comparison relevant to routing is therefore
\begin{equation}
 \underbrace{\norm{\Pi_{\cT_U}G}}_{\text{routing-accessible gradient}}
 \quad\text{versus}\quad
 \underbrace{\norm{G}}_{\text{ambient gradient}}.
 \label{eq:visible-versus-ambient-gradient}
\end{equation}
If the first vanishes while the second does not, the objective has a descent
direction that the fixed routing does not expose.  We now formalize this
obstruction and give exact and perturbative evidence for it.

\subsection{Routing-induced gradient obstruction}
\label{sec:obstruction}

We now isolate the case in which the accessible projection vanishes
despite a nonzero ambient gradient.

\begin{definition}[Routing-constrained stationary point]
\label{def:stagnation}
A fixed-routing parameter point representing $U$ is routing-constrained
stationary when its differential image satisfies
\begin{equation}
 \Pi_{\cT_U}G=0
 \quad\text{and}\quad
 \norm{G}>0.
 \label{eq:constrained-stationary}
\end{equation}
\end{definition}

The routing has not destroyed $G$; it has hidden $G$ from all available
first-order parameter directions.  We call this a \emph{routing-induced
accessibility obstruction}.  It is a pointwise geometric statement, not by
itself a barren-plateau claim about gradient concentration or asymptotic
scaling, although the two mechanisms can coexist.

\subsubsection{Exact obstruction in a Bell-pair task}

Section~\ref{sec:osr} defined the Bell target $\ket{\Phi_m}$ and established
its capacity-limited fidelity through Lemma~\ref{lem:rank-fidelity}.  We now
ask what happens to the gradients at a smooth optimum within that restricted
routing sector.  For a prepared pure state $\ket{\psi}$, use infidelity
\begin{equation}
 \cL(\psi)=1-\abs{\braket{\Phi_m}{\psi}}^2.
 \label{eq:bell-loss}
\end{equation}

\begin{theorem}[Routing obstruction]
\label{thm:bell-floor}
Let the circuit act on a normalized pure product input across $S\mid\bar S$,
$\rho=\ket{\psi_0}\bra{\psi_0}$, with the loss in
Eq.~\eqref{eq:bell-loss} and
$F=|\langle\Phi_m|U|\psi_0\rangle|^2$.
Suppose the circuit has cumulative cut budget $K_{\bm A}(S)=k<m$.  Then
\begin{equation}
 \cL_{\min}\ge 1-2^{k-m}.
 \label{eq:bell-floor}
\end{equation}
The bound is tight for routing families that open $k$ independent Bell
channels.  At a smooth interior parameter-space local minimizer with
$0<F<1$, the accessible operator projection vanishes while the ambient
operator gradient is nonzero:
\begin{equation}
 \Pi_{\cT^{\rm op}_U}G_{\rm op}=0,\qquad \norm{G_{\rm op}}_{\rm HS}>0.
 \label{eq:bell-stagnation}
\end{equation}
\end{theorem}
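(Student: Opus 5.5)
The argument has three parts: the floor via capacity, tightness via an explicit routing, and the stationarity statement via the gradient identity.

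\textbf{Floor.} I would combine Theorem~\ref{thm:cumulative} with Lemma~\ref{lem:rank-fidelity}. The input $\ket{\psi_0}$ is a pure product across $S\mid\barS$, so by Eq.~\eqref{eq:state-cumulative} the output $U\ket{\psi_0}$ has Schmidt rank at most $R=2^{K_{\bm A}(S)}=2^k$ for every choice of angles. Lemma~\ref{lem:rank-fidelity} then gives $F\le 2^{k-m}$ uniformly in $\bm\theta$, and hence $\cL\ge 1-2^{k-m}$ everywhere on the sector. This bounds the infimum, so it bounds $\cL_{\min}$.

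\textbf{Tightness.} I would take a routing with $k$ single CNOT blocks, each a $\mathrm{CNOT}_{s_i\to \bar s_i}$ between distinct paired qubits $s_i\in S$, $\bar s_i\in\barS$. Each such block has $\kappa=1$, so $K_{\bm A}(S)=k$. Choosing local layers that prepare $\ket{+}$ on each $s_i$ and $\ket 0$ on each $\bar s_i$, the circuit creates $k$ Bell pairs $\ket{\Phi_1}$. On the remaining $m-k$ target pairs I would place the product state $\ket{00}$. The overlap with $\ket{\Phi_m}$ then factorizes: the $k$ Bell pairs contribute $1$ each, and each $\ket{00}$ pair contributes $|\braket{\Phi_1}{00}|^2=1/2$. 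This gives $F=2^{-(m-k)}$, so the floor is attained.

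\textbf{Stationarity.} At a smooth interior local minimizer, every parameter derivative vanishes, $g=0$. By Eq.~\eqref{eq:parameter-gradient-sees-accessible-part} this is equivalent to $\Pi_{\cT_U}G=0$. It remains to show $G_{\rm op}\ne 0$, which is the only nonroutine step. With $O=\id-\ket{\Phi_m}\bra{\Phi_m}$, we have $G=i[U^\dagger OU,\rho]=-i[\ket{\chi}\bra{\chi},\ket{\psi_0}\bra{\psi_0}]$, where $\ket\chi=U^\dagger\ket{\Phi_m}$. I would show that this commutator of two rank-one projectors vanishes only when the states are parallel or orthogonal. Concretely, writing $c=\braket{\chi}{\psi_0}$, the commutator equals $c\ket\chi\bra{\psi_0}-\bar c\ket{\psi_0}\bra\chi$. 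If $0<|c|<1$, the two vectors are linearly independent and $c\ne0$, so the commutator is nonzero. Since $|c|^2=F\in(0,1)$, this gives $\norm{G_{\rm op}}_{\rm HS}>0$. For an explicit value, the Hilbert--Schmidt norm squared is $2F(1-F)$, which is strictly positive.
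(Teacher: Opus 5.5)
Your proposal is correct and follows essentially the same route as the paper. The floor comes from the cumulative Schmidt-rank bound plus the rank--fidelity lemma, tightness from $k$ cross-cut CNOT Bell pairs with the rest in $\ket{00}$, and the vanishing projection from $g=0$ at an interior minimizer via Eq.~\eqref{eq:parameter-gradient-sees-accessible-part}; the only cosmetic difference is that you show $G_{\rm op}\neq0$ by linear independence of $U^\dagger\ket{\Phi_m}$ and $\ket{\psi_0}$, whereas the paper computes $\norm{[Q,\rho]}_{\rm HS}^2=2F(1-F)$ directly from $Q\rho Q=FQ$, a value you also record.
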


\begin{proof}
Theorem~\ref{thm:cumulative} bounds the pure output Schmidt rank by $2^k$.
Lemma~\ref{lem:rank-fidelity} gives $F\le2^{k-m}$ and hence the loss floor.
Preparing $k$ independent Bell pairs with local Hadamards and cross-cut
CNOTs, and leaving the remaining pairs in $\ket{00}$, gives
$F=2^{k-m}$ and attains equality.  At an interior local minimum every
one-coordinate restriction has a two-sided minimum, so $g_j=0$ for every
parameter.  Equation~\eqref{eq:parameter-gradient-sees-accessible-part} then
gives the zero projection.
For the nonzero ambient claim, put
$Q=U^\dagger\ket{\Phi_m}\bra{\Phi_m}U$, a rank-one projector.
Since $\rho$ is also rank one and $F=\Tr(Q\rho)$,
one has $Q\rho Q=FQ$ and hence $\Tr(Q\rho Q\rho)=F^2$.  Therefore
\[
 \norm{[Q,\rho]}_{\rm HS}^2
 =2F-2F^2=2F(1-F)>0.
\]
The fidelity loss has $G_{\rm op}=-i[Q,\rho]$, proving the claim.
\end{proof}

The loss floor is the representational conclusion.  At the pure-state,
smooth interior minimizer specified in the theorem, it becomes a local
routing obstruction: the accessible gradient vanishes while the ambient
gradient remains nonzero. The routing dictionary makes the missing
directions explicit through their orthogonality to the propagated
generators. Section~\ref{sec:repair} uses this description to add useful
directions while preserving the current physical point.

\begin{figure}[!htbp]
\centering
\includegraphics[width=.96\linewidth]{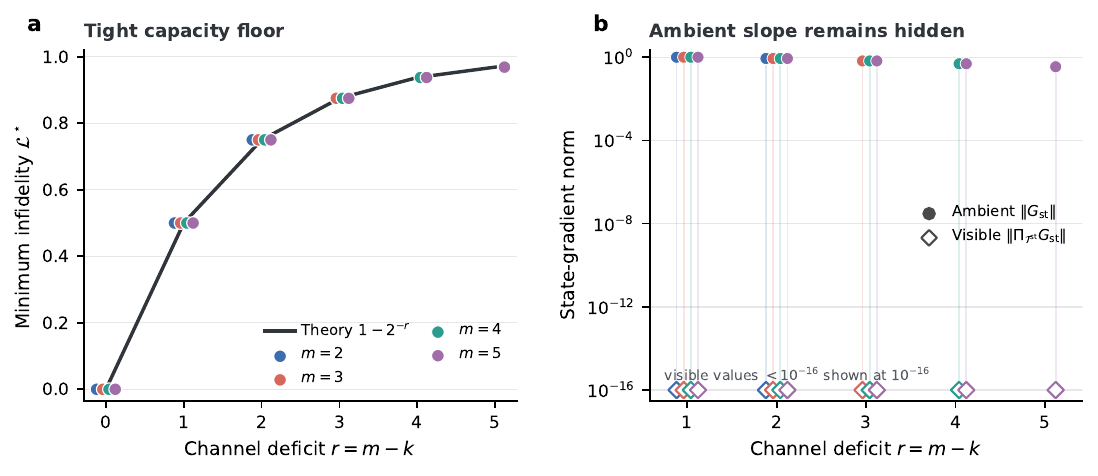}
\caption{\textbf{Bell-pair obstruction and constrained stationarity.}
Exact statevector evaluation of capacity-saturating constructions for
$m=2,3,4,5$, with one construction per $(m,k)$ point.
\textbf{(a)} When plotted against the channel deficit $r=m-k$, all
constructions attain the tight capacity floor
$\cL^\star=1-2^{-r}$; markers are offset slightly in the horizontal direction
to display coincident values for different $m$.  \textbf{(b)} At the corresponding constrained optima,
the ambient state-gradient norm remains nonzero while the visible norm
$\|\Pi_{\cT^{\rm st}}G_{\rm st}\|$ is at the numerical floor.  Both norms use
the pure-state horizontal metric; values below $10^{-16}$ are displayed at
$10^{-16}$.  There are no statistical error bars because the calculations
are deterministic.  Numerical details are given in
Appendix~\ref{app:numerics}.}
\label{fig:bell-floor}
\end{figure}

\subsubsection{General local stability of an exact obstruction}

The Bell construction motivates the question, but the argument is general.
At the exact obstruction, let $P$ be the orthogonal projector onto the
accessible space.  Then $PG$ is the part of the ambient gradient that the
parameters can see, and the obstruction is precisely $PG=0$ even though
$G\ne0$.  Now move to a nearby parameter point.  Write $G'$ for the new
ambient gradient and $P'$ for the new projector, so the new visible gradient
is $P'G'$.  Adding and subtracting $G$, and then using $PG=0$, gives
\begin{equation}
 \begin{aligned}
 P'G'
 &=P'(G'-G)+P'G\\
 &=\underbrace{P'(G'-G)}_{\text{the gradient changes}}
   +\underbrace{(P'-P)G}_{\text{the accessible space changes}},\\
 \norm{P'G'}&\le\norm{G'-G}+\norm{P'-P}_{\rm op}\norm{G}.
 \end{aligned}
 \label{eq:robust-projection}
\end{equation}
The bound says that a large visible gradient cannot appear immediately unless
either the ambient gradient changes substantially, measured by
$\norm{G'-G}$, or the accessible space rotates substantially, measured by
$\norm{P'-P}_{\rm op}$.  If both changes are small and the ambient gradient
remains nonzero, then only a small fraction of it is visible nearby.  Thus an
exact obstruction is normally surrounded by a region of near-stagnation.
The exception is a tangent-rank change, where a new accessible direction can
appear abruptly.  Appendix~\ref{app:near-stagnation} states the precise ratio
bound and a sufficient condition that rules out such an abrupt change.

\section{When routing exposes a usable gradient}
\label{sec:trainability}

Section~\ref{sec:tangent} showed that a small parameter gradient need not
reflect a small ambient objective gradient.  The ambient slope can remain
nonzero while the directions exposed by the current routing have little or
no overlap with it.  We therefore ask a local question: when does a routing
convert a nonzero ambient gradient into a usable parameter gradient?

Two ingredients control this conversion.  First, the accessible tangent
space must capture a sufficient fraction of the ambient gradient.  In a
fixed metric, write $G$ for the ambient gradient and $\mathcal T$ for the
accessible space.  Their alignment is quantified by the gradient coverage
\[
 a=\frac{\norm{\Pi_{\mathcal T}G}}{\norm{G}},\qquad G\ne0.
\]
Thus $a=0$ means that the ambient gradient is completely hidden from the
current parameter directions, whereas $a=1$ means that it lies entirely in
the accessible tangent space at the current point.  Intermediate values
$0<a<1$ quantify partial local alignment.

Second, even when the relevant direction is accessible, the parameter
coordinates must control it effectively.  After exact redundancies are
removed, nearly dependent effective directions can still make this control
weak.  This is the conditioning question, described below through the Gram
matrix of the tangent frame.  A usable parameter gradient therefore depends
on the ambient slope, directional coverage, and parameter conditioning.

To express these quantities for pure-state objectives, we now pass from the
operator tangents of Section~\ref{sec:tangent} to their action on the prepared
state.  This retains only the motion that is physically effective at the
current state.  For a fixed normalized pure input $\ket{\psi_0}$,
\[
 H_j=iU^\dagger\partial_jU,\qquad
 \ket{\psi}=U\ket{\psi_0},\qquad
 \partial_j\ket{\psi}=-iUH_j\ket{\psi_0}.
\]
Thus $H_j$ describes how parameter $\theta_j$ moves the unitary, while
$\partial_j\ket{\psi}$ describes how that motion acts on the current input.
A change of global phase has no physical effect.  Removing it leaves the
horizontal state tangent
\[
 v_j=(I-\ket{\psi}\bra{\psi})\partial_j\ket{\psi}.
\]
This is the physical state direction supplied by the same parameter;
an operator direction can have zero state tangent if it changes only phase.

A parameter update combines the effective state tangents $v_j$.  We collect
them as the columns of a linear map $V$, which sends a parameter-space
increment $c\in\mathbb R^p$ to the corresponding first-order physical state
displacement:
\[
 V=[v_1,\ldots,v_p],\qquad
 Vc=\sum_{j=1}^p c_jv_j.
\]
The accessible state-tangent space consists of all such combinations and is
therefore the image of this map:
\[
 \mathcal T^{\rm st}=\im V=\Span_{\mathbb R}\{v_1,\ldots,v_p\}.
\]
We use the real inner product
$\langle u,v\rangle_{\mathbb R}=\operatorname{Re}\braket{u}{v}$;
$V^*$ denotes the adjoint for this metric and real parameter coordinates.
For the expectation objective, the ambient state gradient and measured
parameter derivatives are
\[
 G_{\rm st}=2(O-\langle O\rangle)\ket{\psi},\qquad
 \langle O\rangle=\bra{\psi}O\ket{\psi},\qquad
 g_j=\operatorname{Re}\braket{v_j}{G_{\rm st}},\quad g=V^*G_{\rm st}.
\]
The operator and state metrics give different gradient representations,
$G_{\rm op}$ and $G_{\rm st}$, but the same loss derivatives $g_j$.
Below, $G$ and $\mathcal T$ denote the gradient and accessible space in the
chosen metric; for the state frame just defined they are $G_{\rm st}$ and
$\mathcal T^{\rm st}$.  The frame bound applies in either metric.

\subsection{From ambient slope to parameter gradient}
We now quantify how coverage and parameter conditioning jointly determine
the gradient.  For the tangent frame $V$, define the Gram matrix
\[
 M=V^*V,\qquad M^{\rm st}_{jk}=\operatorname{Re}\braket{v_j}{v_k}.
\]
Here $M=M^{\rm st}$ for the state frame.  Its smallest positive eigenvalue
$\lambda_{\min}^{+}(M)$ measures the weakest response among nonredundant
parameter combinations.  Exact redundancies give zero eigenvalues; nearly
dependent effective directions can give small positive ones.  The following
standard frame bound relates this spectrum to the visible gradient
\cite{Christensen2003}.

\begin{proposition}[Accessible-gradient frame bounds]
\label{prop:frame-bounds}
Let $V:\mathbb R^p\to\mathcal V$ synthesize a nonzero tangent frame,
$Vc=\sum_j c_j v_j$, in a real Hilbert space $\mathcal V$.  Set
$M=V^*V$, $g=V^*G$, and $\mathcal T=\im V$.  Then
\begin{equation}
 \lambda_{\min}^{+}(M)\norm{\Pi_{\mathcal T}G}^2
 \le \norm{g}_2^2
 \le \lambda_{\max}(M)\norm{\Pi_{\mathcal T}G}^2.
 \label{eq:frame-bounds}
\end{equation}
Here $\lambda_{\min}^{+}$ is the smallest strictly positive eigenvalue;
zero eigenvalues due to redundant or null columns are permitted.  If
$G\ne0$, define the coverage
$a=\norm{\Pi_{\mathcal T}G}/\norm{G}$.  A lower bound
$\lambda_{\min}^{+}(M)\ge\kappa>0$ therefore implies
\begin{equation}
 \norm{\nabla_\theta\cL}_2
 \ge\sqrt{\kappa}\,a\norm{G}.
 \label{eq:coverage-conditioning}
\end{equation}
\end{proposition}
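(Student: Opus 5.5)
The plan is to reduce everything to the spectral decomposition of the finite-dimensional positive semidefinite operator $VV^*$ acting on $\mathcal V$, restricted to $\mathcal T=\im V$. First, observe that $g=V^*G=V^*\Pi_{\mathcal T}G$, since $\ker V^*=(\im V)^\perp=\mathcal T^\perp$; hence only $w:=\Pi_{\mathcal T}G\in\mathcal T$ matters and $\norm{g}_2^2=\inner{w}{VV^*w}$. This is the same observation as Eq.~\eqref{eq:parameter-gradient-sees-accessible-part}, now stated for an abstract real frame.

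Next, I would relate the spectrum of $VV^*$ on $\mathcal T$ to that of $M=V^*V$. Using the singular value decomposition of $V$ (or simply the identity that $V^*V$ and $VV^*$ share their nonzero eigenvalues with multiplicities), $VV^*$ maps $\mathcal T$ into itself, vanishes on $\mathcal T^\perp$, and is invertible on $\mathcal T$ because $\im(VV^*)=\im V$. Its eigenvalues on $\mathcal T$ are therefore exactly the positive eigenvalues of $M$. The Rayleigh quotient bound on the invariant subspace $\mathcal T$ then gives
\[
\lambda_{\min}^{+}(M)\norm{w}^2\le\inner{w}{VV^*w}\le\lambda_{\max}(M)\norm{w}^2,
\]
which is Eq.~\eqref{eq:frame-bounds}. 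Here the nonzero-frame hypothesis ensures that $\mathcal T\neq\{0\}$, so $\lambda_{\min}^{+}$ is well defined. Zero eigenvalues of $M$ correspond to $\ker V$ and never enter, which is why redundant or null columns are allowed.

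Finally, for Eq.~\eqref{eq:coverage-conditioning}, I substitute $\norm{w}=a\norm{G}$ and $\lambda_{\min}^{+}\ge\kappa$ into the lower bound and take square roots. I then identify $g$ with $\nabla_\theta\cL$ via $g_j=\operatorname{Re}\braket{v_j}{G_{\rm st}}$ in the state frame, or via $g_j=\inner{H_j}{G}$ in the operator frame.

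The only step needing care is the correspondence between the spectrum of $VV^*|_{\mathcal T}$ and the positive spectrum of $M$. I would handle it directly: if $Mc=\lambda c$ with $\lambda>0$, then $VV^*(Vc)=\lambda Vc$ and $Vc\neq0$. Conversely, any eigenvector $u\in\mathcal T$ of $VV^*$ with eigenvalue $\lambda>0$ gives $M(V^*u)=\lambda V^*u$. Dimension counting, using $\rank V=\rank M$, then shows that these eigenvectors span $\mathcal T$.
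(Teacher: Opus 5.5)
Your proposal is correct and is essentially the paper's argument. The paper takes the thin SVD $V=W\Sigma Q^{\mathsf T}$, so that $\norm{\Pi_{\mathcal T}G}^2=\norm{W^*G}^2$ and $\norm{g}^2=\norm{\Sigma W^*G}^2$, which is exactly your Rayleigh-quotient bound for $VV^*=W\Sigma^2W^*$ on $\mathcal T$; the diagonal entries of $\Sigma^2$ are the positive eigenvalues of $M$, and your explicit eigenvector correspondence simply spells out what the SVD supplies automatically.
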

\begin{proof}
Use the thin singular-value decomposition $V=W\Sigma Q^{\mathsf T}$.
The columns of $W$ form an orthonormal basis of $\mathcal T$, so
$\norm{\Pi_{\mathcal T}G}^2=\norm{W^*G}^2$ and
$\norm{g}^2=\norm{\Sigma W^*G}^2$.  Bounding the diagonal entries of
$\Sigma^2$ proves both inequalities.
\end{proof}

Equation~\eqref{eq:coverage-conditioning} bounds the parameter-gradient norm
from below by the ambient slope times coverage and $\sqrt{\kappa}$.
The projected norm is geometric, whereas $\norm{g}$ and the
spectrum of $M$ depend on parameter scaling.  To fix the scale of the
conditioning bounds, we use the physical angles in
$R_P(\theta)=e^{-i\theta P/2}$ throughout.  Nonsingular reparametrizations
preserve the bound when $g$ and $M$ are transformed accordingly.
If the frame has rank zero, both accessible and parameter
gradients vanish and $\lambda_{\min}^{+}$ is undefined.

This gradient lower bound yields a quantitative guarantee on the one-step
loss reduction under the following local assumptions.

\begin{corollary}[Conditional local descent]
\label{cor:local-descent}
Suppose the Euclidean gradient of the parameter loss is $L_s$-Lipschitz on
a neighborhood containing the segment from $\theta$ to
$\theta-h\nabla_\theta\cL$, where $0<h\le1/L_s$.  If at $\theta$
$\lambda_{\min}^{+}(M)\ge\kappa$, coverage is at least $a_0>0$, and
$\norm{G}\ge\gamma_0>0$, then
\begin{equation}
 \cL(\theta)-\cL(\theta-h\nabla_\theta\cL)
 \ge\frac{h}{2}\kappa a_0^2\gamma_0^2.
 \label{eq:one-step-descent}
\end{equation}
\end{corollary}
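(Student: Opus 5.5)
The plan combines the standard descent lemma for functions with Lipschitz gradient with the lower bound \eqref{eq:coverage-conditioning} of Proposition~\ref{prop:frame-bounds}. Write $\ell(\theta)=\cL(U_{\bm A}(\theta))$ and $g=\nabla_\theta\ell(\theta)$, and set $\theta'=\theta-hg$.

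\emph{Step 1: one-step decrease.} Since $\nabla\ell$ is $L_s$-Lipschitz on a neighborhood containing the segment $[\theta,\theta']$, the fundamental theorem of calculus along that segment gives
\[
\ell(\theta')\le \ell(\theta)-h\norm{g}^2+\frac{L_s h^2}{2}\norm{g}^2 .
\]
To see this, write $\ell(\theta')-\ell(\theta)=\int_0^1\inner{\nabla\ell(\theta-shg)}{-hg}\,ds$. Then add and subtract $\nabla\ell(\theta)$ inside the integrand, and bound the remainder by $L_s s h^2\norm{g}^2$. Because $h\le1/L_s$, we have $1-L_sh/2\ge1/2$, so
\[
\ell(\theta)-\ell(\theta')\ge \frac h2\norm{g}^2 .
\]
The only point to verify here is that the Lipschitz hypothesis is used only on the segment, which is exactly what the corollary assumes.

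\emph{Step 2: gradient lower bound.} Use the parameter gradient $g=V^*G$ in the chosen metric. As noted before Proposition~\ref{prop:frame-bounds}, the operator and state metrics yield the same $g_j$. The frame bound \eqref{eq:frame-bounds} gives $\norm{g}^2\ge\lambda^+_{\min}(M)\norm{\Pi_{\mathcal T}G}^2$. Substituting $\lambda^+_{\min}(M)\ge\kappa$ and $\norm{\Pi_{\mathcal T}G}=a\norm{G}\ge a_0\gamma_0$ yields $\norm{g}^2\ge\kappa a_0^2\gamma_0^2$. Two checks are needed here. The hypothesis $\norm{G}\ge\gamma_0>0$ makes the coverage $a$ well defined. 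The condition $a_0>0$ forces $\Pi_{\mathcal T}G\ne0$, so the frame has positive rank and $\lambda^+_{\min}$ is defined.

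\emph{Step 3: combine.} Inserting the bound from Step 2 into Step 1 gives \eqref{eq:one-step-descent}. I expect no real obstacle. The most delicate point is bookkeeping: $\kappa$, $a_0$ and $\gamma_0$ must all be evaluated at the starting point $\theta$ and in one consistent metric, while the Lipschitz constant refers to the Euclidean parameter gradient. Both conditions hold as stated, since Step 2 takes place entirely at $\theta$ and Step 1 uses only Euclidean parameter-space geometry.
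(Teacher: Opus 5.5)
Your proof is correct and matches the paper's argument. The descent lemma with $h\le 1/L_s$ gives a decrease of at least $h(1-L_sh/2)\norm{\nabla_\theta\cL}^2\ge\frac h2\norm{\nabla_\theta\cL}^2$, and the lower frame bound of Proposition~\ref{prop:frame-bounds} then gives $\norm{\nabla_\theta\cL}^2\ge\kappa a_0^2\gamma_0^2$; you additionally spell out the segment integral and why $\lambda_{\min}^{+}$ is well defined, which the paper leaves implicit.
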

\begin{proof}
The descent lemma gives an improvement of at least
$h(1-L_sh/2)\norm{\nabla_\theta\cL}^2$; apply
Proposition~\ref{prop:frame-bounds}.
\end{proof}
The left-hand side is the current loss minus the loss after one update.
Thus ambient slope, coverage, and conditioning jointly give a guaranteed
minimum loss reduction for this sufficiently small gradient step.

\subsection{How routing controls local target overlap}
Why can the same target produce different local gradient responses under
different routings?  The preceding subsection characterizes the gradient
through aggregate coverage and conditioning; we now resolve the routing
dependence direction by direction.  Locally, this reduces to how well the tangent
frame of the training routing aligns with the displacement generated by the
target routing.  Compare a target circuit with routing tuple $\bm A$ and a training
circuit with routing tuple $\bm B$ at a common reference state.  For the CNOT circuits
considered below, both zero-angle circuits prepare $\ket{0^n}$.  Let $V_{\bm A}$
and $V_{\bm B}$ be their horizontal state synthesis maps at that reference point.
A fixed small target parameter vector $\theta^\star$ produces the first-order
target displacement $V_{\bm A}\theta^\star$, while the training parameters
$\theta$ produce the accessible displacement $V_{\bm B}\theta$:
\[
 V_{\bm A}\theta^\star=\sum_j\theta_j^\star v_j^{(\bm A)},
 \qquad V_{\bm B}\theta=\sum_j\theta_jv_j^{(\bm B)}.
\]
To second order, their infidelity is the squared distance between these
displacements,
\begin{equation}
 \cL_{\bm B}^{(2)}(\theta)
 =\norm{V_{\bm B}\theta-V_{\bm A}\theta^\star}_{\mathbb R}^2.
 \label{eq:cross-gram-quadratic}
\end{equation}
To see how training responds to this target, compare each training direction
with each target direction.  Their overlaps form the cross-Gram matrix
\[
 K_{\bm B,\bm A}=V_{\bm B}^*V_{\bm A},\qquad
 (K_{\bm B,\bm A})_{jk}=\operatorname{Re}\braket{v_j^{(\bm B)}}{v_k^{(\bm A)}}.
\]
Unlike the scalar coverage $a$ or a condition number, $K_{\bm B,\bm A}$ records
pairwise overlaps between two routing-dependent frames.  Differentiating
the quadratic loss gives
\[
 \nabla_\theta\cL_{\bm B}^{(2)}(\theta)
 =2V_{\bm B}^*(V_{\bm B}\theta-V_{\bm A}\theta^\star).
\]
At the reference point $\theta=0$, this becomes
\[
 \nabla_\theta\cL_{\bm B}^{(2)}(0)
 =-2V_{\bm B}^*V_{\bm A}\theta^\star
 =-2K_{\bm B,\bm A}\theta^\star.
\]
If the target displacement $V_{\bm A}\theta^\star$ is orthogonal to
$\im V_{\bm B}$, this leading gradient vanishes even when the target displacement
is nonzero.  Changing routing changes the training directions and hence
their overlap with the same target.

For fixed circuits, the exact initial infidelity gradient differs from this
expression by $O(\norm{\theta^\star}^2)$.  Hence, for
$\theta^\star=\epsilon u$ along a fixed direction $u$ with
$K_{\bm B,\bm A}u\neq0$, the exact initial gradient is nonzero for
sufficiently small nonzero $\epsilon$.
Appendix~\ref{app:frame-details} gives the expansion.

\subsection{An exact conditioning certificate at the CNOT origin}
Coverage determines whether useful directions are accessible; we now
certify whether their parameter frame is well conditioned.

At the CNOT zero-angle origin with $ZXZ$ local layers, each nonzero horizontal
tangent has a computational-basis binary label.  Distinct labels give
orthogonal directions; repeated labels give redundant copies.
Rank and conditioning follow directly from these labels.

Use $\ell$ local $R_zR_xR_z$ layers as in Eq.~\eqref{eq:local-layer},
interspersed with CNOT-only blocks, on input $\ket{0^n}$.
Let $A_j^{\rm suf}$ represent the CNOT suffix after layer $j$
(identity for an empty suffix).  With $e_q$ the $q$th binary unit vector,
$x_{jq}=A_j^{\rm suf}e_q$ labels the propagated single excitation.

\begin{theorem}[CNOT-origin state-tangent frame]
\label{thm:origin-frame}
At this origin both $Z$ slots per qubit have zero horizontal tangent.  The
$X$ slot in layer $j$ and qubit $q$ has tangent
\begin{equation}
 v_{jq}=-\frac{i}{2}\ket{x_{jq}},\qquad
 x_{jq}=A^{\rm suf}_j e_q\ne0.
 \label{eq:origin-label}
\end{equation}
If $m_x$ counts occurrences of the same nonzero binary label, then the
positive spectrum of $M^{\rm st}(0)$ is exactly
$\{m_x/4:x\text{ occurs}\}$.  Consequently,
\begin{equation}
 \rank M^{\rm st}(0)=\#\{x_{jq}\},\quad
 \lambda_{\min}^{+}(M^{\rm st}(0))\ge\tfrac14,\quad
 \lambda_{\max}(M^{\rm st}(0))\le\tfrac{\ell}{4}.
 \label{eq:origin-spectrum}
\end{equation}
Here $\#\{x_{jq}\}$ counts distinct labels.
After removing null parameter directions, the condition number is at most
$\ell$.  For any horizontal task gradient $G_{\rm st}$ at this point,
\begin{equation}
 \norm{\Pi_{\mathcal T^{\rm st}}G_{\rm st}}^2
 =\sum_{x:\,m_x>0}
   \bigl(\operatorname{Re}\langle-i x|G_{\rm st}\rangle\bigr)^2,
 \qquad |-i x\rangle:=-i\ket{x}.
 \label{eq:origin-coverage}
\end{equation}
Thus coverage of a nonzero relevant task component supplies the remaining
hypothesis for a local gradient lower bound with $\kappa=1/4$.
\end{theorem}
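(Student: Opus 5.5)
We compute each state tangent directly at the origin and then read off the Gram spectrum from the resulting orthogonality structure. At zero angles every local layer is the identity, so the circuit reduces to a product of CNOT blocks, which is again a single CNOT block. It therefore prepares $\ket{\psi}=\ket{A_{\rm tot}0^n}=\ket{0^n}$, since $A_{\rm tot}$ is linear.

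\textbf{Step 1: each tangent.} For the slot on qubit $q$ in layer $j$, the splitting~\eqref{eq:circuit-split-at-parameter} gives
\[
\partial\ket{\psi}=-\tfrac{i}{2}\,U_{{\rm after}}P\,U_{{\rm before}}\ket{0^n}.
\]
At zero angles $U_{\rm before}$ is a CNOT product, so $U_{\rm before}\ket{0^n}=\ket{0^n}$. Likewise $U_{\rm after}=C(A_j^{\rm suf})$.
\begin{itemize}[label={}]
\item If $P=Z_q$, then $Z_q\ket{0^n}=\ket{0^n}$, so $\partial\ket{\psi}=-\tfrac{i}{2}\ket{\psi}$. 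This is purely vertical and its horizontal projection vanishes.
\item If $P=X_q$, then $X_q\ket{0^n}=\ket{e_q}$. By Eq.~\eqref{eq:cnot-linear-map}, $C(A_j^{\rm suf})\ket{e_q}=\ket{A_j^{\rm suf}e_q}$. Invertibility gives $x_{jq}\ne0$, so this state is orthogonal to $\ket{0^n}$ and is already horizontal. This yields Eq.~\eqref{eq:origin-label}.
\end{itemize}

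\textbf{Step 2: Gram matrix.} The entries are
\[
M_{(jq),(j'q')}=\operatorname{Re}\tfrac14\braket{x_{jq}}{x_{j'q'}}=\tfrac14\,\delta_{x_{jq},x_{j'q'}},
\]
and every row involving a $Z$ slot is zero. Group the $X$ slots by their label $x$. Then $M$ is a direct sum of blocks $\tfrac14 J_{m_x}$, where $J_{m_x}$ is the all-ones matrix of size $m_x$, together with zeros. The matrix $J_m$ has the single nonzero eigenvalue $m$, so the positive spectrum is $\{m_x/4\}$ and the rank equals the number of distinct labels.

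\textbf{Step 3: spectral bounds.} Since every $m_x\ge1$, we get $\lambda^+_{\min}\ge\tfrac14$. For the upper bound, fix the layer $j$. The map $q\mapsto A_j^{\rm suf}e_q$ is injective, because $A_j^{\rm suf}$ is invertible. Hence each label occurs at most once per layer, giving $m_x\le\ell$. The condition number on the nonnull part is therefore at most $\ell$.

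\textbf{Step 4: projection formula.} The vectors $\ket{-ix}$ for distinct occurring labels are orthonormal in the real inner product. They span $\mathcal T^{\rm st}$, because each $v_{jq}=\tfrac12\ket{-ix_{jq}}$. The orthogonal projection norm is therefore the sum of squared real overlaps, which gives Eq.~\eqref{eq:origin-coverage}.

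\textbf{Step 5: closing claim.} The final sentence of the theorem follows by inserting $\kappa=1/4$ into Proposition~\ref{prop:frame-bounds}.

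The only step needing care is the per-layer injectivity argument behind $m_x\le\ell$. Everything else is bookkeeping of the vertical/horizontal split at $\ket{0^n}$.
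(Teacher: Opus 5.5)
Your proposal is correct and follows the paper's own proof essentially step for step. Both compute each tangent directly at $\ket{0^n}$: the $Z$ slots are pure phase, and the $X$ slots give $-\tfrac{i}{2}\ket{A_j^{\rm suf}e_q}$. Both then group the nonzero Gram columns by label into all-ones blocks $\tfrac14\mathbf 1\mathbf 1^{\mathsf T}$, use suffix invertibility to get distinct labels within each layer and hence $1\le m_x\le\ell$, and take the distinct $-i\ket{x}$ as an orthonormal real basis of $\mathcal T^{\rm st}$ for the projection formula.
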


The bound $\lambda_{\min}^{+}\ge1/4$ ensures that the nonredundant
parameter combinations have no arbitrarily weak response at this origin.

\begin{proof}
An $X$ slot gives $-iC(A_j^{\rm suf})X_q\ket{0^n}/2$, yielding
\eqref{eq:origin-label}; a $Z$ slot gives only phase and horizontalizes to zero.
Basis orthogonality partitions the nonzero Gram columns into blocks
$\tfrac14\mathbf1_{m_x}\mathbf1_{m_x}^{\mathsf T}$, where
$\mathbf1_{m_x}\in\mathbb R^{m_x}$ is the all-ones vector.
Each block has the sole positive eigenvalue $m_x/4$.
Suffix invertibility gives distinct labels within each layer, hence
$1\le m_x\le\ell$ and the spectral bounds.
The distinct $-i\ket{x}$ form an orthonormal real basis of the tangent
space, giving \eqref{eq:origin-coverage}.
\end{proof}

For matched target and training routings,
\[
 \bm B=\bm A\quad\Longrightarrow\quad K_{\bm A,\bm A}=V_{\bm A}^*V_{\bm A}=M_{\bm A}^{\rm st}.
\]
Thus the target-response matrix of the preceding subsection is precisely
the conditioning matrix certified here.

Having separated coverage from conditioning, Section~\ref{sec:repair} asks
how to add task-aligned directions when coverage is insufficient, without
moving the current physical point.

\section{Same-point routing augmentation and numerical validation}
\label{sec:repair}

The preceding analysis leaves a practical question: how should one search
for a useful routing $\bm A$ together with the continuous parameters
$\theta$?  Fixed-routing VQC optimizes only $\theta$, yet
Sections~\ref{sec:osr}--\ref{sec:trainability} show that routing controls
both reachable structure and the tangent directions exposed to training.
A routing can leave the current tangent space with little access to a
nonzero ambient gradient.

Even before optimizing $\theta$, each CNOT routing block has
\[
 |\GL(n,\Ftwo)|=\prod_{k=0}^{n-1}(2^n-2^k)=\Theta(2^{n^2})
\]
possible transformations.  Choosing $t$ blocks independently gives
$|\GL(n,\Ftwo)|^t=2^{\Theta(tn^2)}$ routing tuples.
More generally, even the required block count $t$ may be unknown.
Yet increasing it need not discard the trained circuit: in
Eq.~\eqref{eq:alternating}, appending $A_{t+1}=I$ and
$\bm\theta_{t+1}=0$ adds only identities and preserves the original unitary.
The current solution can thus be embedded in a larger circuit family.
Compact representation alone does not make discovery easy: a generic
classical outer loop that efficiently found circuits for arbitrary
compactly representable tasks using task feedback and polynomial-size
quantum-circuit evaluations would provide an exceptionally powerful
quantum-algorithm discovery mechanism.  We do not generally expect
unrestricted discovery to be easy; additional task structure may make
efficient discovery possible.

We approach the joint search through incremental growth guided by local
geometry, rather than unrestricted global architecture search.
Suppose training has already produced a circuit $(\bm A,\theta)$, but its
tangent space exposes too little of a nonzero ambient gradient.
Can we retain the trained solution while adding directions that improve
this coverage?  This uses the local geometry of
Sections~\ref{sec:tangent}--\ref{sec:trainability} to guide the next
structural update, without prescribing the final architecture in advance.

A similar distinction arises in classical neural networks: increasing
depth or width can enlarge the model family, but choosing an architecture
and optimizing its continuous weights remain distinct problems.
Residual learning~\cite{He2016} motivates an incremental design principle:
retain the learned mapping while adding trainable corrections.
We adopt this principle in a geometric form: preserve the circuit already
learned, while changing the directions available for what it can learn next.
Specifically, inserting the routing-conditioned block
$W_B(\phi)=C(B)^\dagger L(\phi)C(B)$ with $L(0)=\id$ leaves the current
unitary, state, loss, ambient gradient, and existing parameter gradients
unchanged at initialization.  Identity initialization preserves the current
physical map; the derivatives with respect to $\phi$ depend on $B$ and supply
routing-dependent tangent directions.  The physical point is preserved,
while the accessible tangent space can expand.
The local selection question is therefore which
routing-conditioned identity augmentation exposes a useful descent
direction missing from the current tangent space.  The construction and
candidate scores below implement this step.

Identity-initialized blocks have been used to control effective depth at
initialization~\cite{Grant2019Init}, and layerwise learning grows circuit
depth during optimization~\cite{Skolik2021Layerwise}. Here identity
insertion is tied to a discrete routing coordinate, so different
routing-conditioned tangent spaces can be compared at exactly the same
trained physical point.

\subsection{Same-point routing intervention and local selection}
\label{sec:local-primitive}

\paragraph{Preserve the point, enlarge the tangent space.}
For a candidate routing $B$, place a trainable local layer between a routing
circuit and its inverse:
\begin{equation}
 W_B(\phi)=C(B)^\dagger L(\phi)C(B),
 \qquad L(0)=\id.
 \label{eq:gadget}
\end{equation}
Let $j$ denote a temporal insertion position, not a qubit index, and split
the current circuit there:
\[
 U=U_{>j}U_{\le j},\qquad
 U_{j,B}(\phi)=U_{>j}W_B(\phi)U_{\le j}.
\]
Only the new parameters $\phi$ are initialized to zero; the trained
parameters $\theta$ are retained. Since the inserted gadget is then exactly
the identity, $U_{j,B}(0)=U$.

\begin{proposition}[Same-point tangent augmentation]
\label{prop:gadget}
Let the new coordinates $\phi$ be independent of all old parameters.
At $\phi=0$, inserting $W_B(\phi)$ anywhere in a circuit leaves the unitary,
state, loss, old-parameter gradients, and ambient gradient unchanged.  Its
new tangent directions are the conjugated local generators
$C(B)^\dagger P C(B)$, transported to the insertion point by the surrounding
circuit.
\end{proposition}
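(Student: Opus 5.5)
The proof is a direct calculation at $\phi=0$, organized around the factorization $U_{j,B}(\theta,\phi)=U_{>j}(\theta)\,W_B(\phi)\,U_{\le j}(\theta)$.

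\emph{Zeroth order.} Since $L(0)=\id$, we have $W_B(0)=C(B)^\dagger C(B)=\id$. Hence $U_{j,B}(\theta,0)=U(\theta)$ for every value of the old parameters $\theta$, not only at the current iterate. This identity gives the unitary immediately. The output state $U\ket{\psi_0}$ and the loss $\cL(U)=\Tr[OU\rho U^\dagger]$ are functions of $U$ alone, so they are also unchanged. The ambient gradient $G(U)=i[U^\dagger OU,\rho]$, and likewise $G_{\rm st}$, depend only on $U$, $O$ and $\rho$, so they are unchanged as well.

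\emph{Old-parameter gradients.} Because the identity holds for all $\theta$ along the slice $\phi=0$ and $\phi$ is independent of $\theta$, differentiating gives $\partial_{\theta_k}U_{j,B}(\theta,0)=\partial_{\theta_k}U(\theta)$. This is the step where independence of the new coordinates is needed, and it is essentially the only subtlety. It follows that $H_k=iU^\dagger\partial_kU$ is unchanged, and therefore $g_k=\inner{H_k}{G}$ is unchanged, by Eq.~\eqref{eq:routing-gradient-identity}. The same holds for $v_k$ in the state frame. Consequently the old tangent space $\cT_U$ is contained in the new one.

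\emph{New directions.} Write $L(\phi)=\prod_r R_{P_r}(\phi_r)$ with the $P_r$ local Paulis. Differentiating at $\phi=0$ with Eq.~\eqref{eq:local-rotation-derivative} gives
\[
\partial_{\phi_r}W_B(0)=-\tfrac{i}{2}\,C(B)^\dagger P_rC(B),
\]
since all other factors of $L$ equal $\id$ at $\phi=0$. Inserting this into the circuit yields
\[
H^{\rm new}_r=iU^\dagger U_{>j}\,\partial_{\phi_r}W_B(0)\,U_{\le j}=\tfrac12\,U_{\le j}^\dagger\,C(B)^\dagger P_rC(B)\,U_{\le j},
\]
using $U^\dagger U_{>j}=U_{\le j}^\dagger$. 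This is the conjugated generator $C(B)^\dagger P_rC(B)$ transported to the input frame by the prefix $U_{\le j}$. In the output frame it is transported by the suffix, $K_r=\tfrac12U_{>j}C(B)^\dagger P_rC(B)U_{>j}^\dagger$, matching Eq.~\eqref{eq:output-frame-generator}. By Eq.~\eqref{eq:routing-pauli-main}, $C(B)^\dagger P_rC(B)$ is an explicit Pauli string: for $X^{\bm x}$ it is $X^{B^{-1}\bm x}$, and for $Z^{\bm z}$ it is $Z^{B^{\mathsf T}\bm z}$.

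The enlarged tangent space is therefore $\cT_U+\Span\{H^{\rm new}_r\}$, and the new gradients are $\inner{H^{\rm new}_r}{G}$.
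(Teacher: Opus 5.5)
Your proposal is correct and follows essentially the same route as the paper. You use $W_B(0)=\id$ for every value of the old parameters to retain the unitary, state, loss, ambient gradient, and old derivatives, and then differentiate at $\phi=0$, where all undifferentiated factors are the identity, to obtain $H^{\rm new}=\tfrac12U_{\le j}^\dagger C(B)^\dagger P\,C(B)U_{\le j}$; your extra output-frame expression and the Pauli labels $X^{B^{-1}\bm x}$ and $Z^{B^{\mathsf T}\bm z}$ agree with the paper's body-frame pullback rules in the appendix.
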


The identity holds for every old parameter value, so the old derivatives,
generators $H_k$, and tangent space are retained. Write the new local coordinates as
$\boldsymbol\phi=(\phi_1,\ldots,\phi_m)$ and initialize all of them at
$\boldsymbol\phi=\mathbf0$, where $L(\mathbf0)=\id$.
For each coordinate $\phi_\mu$, define the local infinitesimal generator
\[
 \Lambda_\mu
 =i\left.\frac{\partial L(\boldsymbol\phi)}{\partial\phi_\mu}
 \right|_{\boldsymbol\phi=\mathbf0}.
\]
For a layer of zero-angle parameterized rotations, all undifferentiated
gate factors are identities at this point. Using $H=iU^\dagger\partial U$ gives
\[
 H^{\rm new}_{j,B,\mu}
 =U_{\le j}^\dagger C(B)^\dagger\Lambda_\mu C(B)U_{\le j}.
\]
Routing $B$ conjugates $\Lambda_\mu$ into a routing-dependent direction,
while the circuit prefix
$U_{\le j}$ transports that direction into the common body frame.
The same routing can therefore expose different directions at different
insertion positions. For $R_P(\phi_\mu)=e^{-i\phi_\mu P/2}$,
$\Lambda_\mu=P/2$.

Write the old accessible space as
$\cT=\Span_{\mathbb R}\{H_1,\ldots,H_p\}$. The new generators span
\[
 \cD_{j,B}=\Span_{\mathbb R}\{H^{\rm new}_{j,B,\mu}\}_{\mu},
\]
so the augmented space at the same point is
$\cT_{\rm aug}=\cT+\cD_{j,B}$. New parameters need not add independent
tangent directions: their generators may overlap with one another or with
the old space $\cT$. This distinction motivates the incremental score below.

\paragraph{Score alignment and incremental access.}
A candidate-only score asks how much of the ambient gradient the candidate
bundle captures by itself. An incremental score instead asks how much
additional gradient becomes accessible when that bundle is combined with
the current circuit. For a candidate $(j,B)$, define
\begin{align}
 s_{\rm cand}(j,B)&=\norm{\Pi_{\cD_{j,B}}G}^2,
 \label{eq:pauli-score}\\
 \Delta_{j,B}&=
 \norm{\Pi_{\cT+\cD_{j,B}}G}^2-\norm{\Pi_{\cT}G}^2.
 \label{eq:positioned-delta}
\end{align}
Since $\cT\subseteq\cT+\cD_{j,B}$, $\Delta_{j,B}\ge0$.
The gain is zero if the candidate only repeats old directions, or if its
genuinely new component is orthogonal to $G$.
The same incremental gain has an equivalent residualized form.
Let $P=\Pi_{\cT}$. Since
\[
 \cT+\cD_{j,B}=\cT\oplus(I-P)\cD_{j,B},
\]
the subtraction in Eq.~\eqref{eq:positioned-delta} removes the contribution
already accessible through $\cT$, leaving
\[
 \Delta_{j,B}=\norm{\Pi_{(I-P)\cD_{j,B}}(I-P)G}^2.
\]
Thus, residualization compares the genuinely new candidate directions
with the part of the ambient gradient not already accessible to the
current circuit.

Under compatible local geometry, candidate-only alignment already equals
the true incremental gain:
\begin{theorem}[Candidate-only--incremental equivalence under commuting projectors]
\label{thm:equivalence}
Let $P=\Pi_{\cT}$ and $Q=\Pi_{\cD_{j,B}}$.  If $PG=0$ and $PQ=QP$, then
\begin{equation}
 \Delta_{j,B}=\norm{QG}^2=s_{\rm cand}(j,B).
 \label{eq:equivalence}
\end{equation}
In particular, the equality holds when $\cD_{j,B}\perp\cT$.
\end{theorem}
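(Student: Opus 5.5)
The plan is to work entirely with orthogonal projectors on the real Hilbert space of Hermitian operators, in the chosen metric. We use the residualized form of the incremental gain derived just before the statement: with $P=\Pi_{\cT}$,
\[
\Delta_{j,B}=\norm{\Pi_{(I-P)\cD_{j,B}}(I-P)G}^2 .
\]
The hypothesis $PG=0$ gives $(I-P)G=G$, so $\Delta_{j,B}=\norm{\Pi_{\mathcal R}G}^2$ with $\mathcal R:=(I-P)\cD_{j,B}$. It therefore suffices to show $\norm{\Pi_{\mathcal R}G}^2=\norm{QG}^2$.

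\emph{Identify $\Pi_{\mathcal R}$.} Since $PQ=QP$, the operator $(I-P)Q$ is self-adjoint and idempotent, hence an orthogonal projector. Its range is contained in $(I-P)\im Q=\mathcal R$. Conversely, any element $(I-P)d$ with $d\in\cD_{j,B}$ satisfies $(I-P)d=(I-P)Qd$, so it lies in the range. Thus $\Pi_{\mathcal R}=(I-P)Q$.

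\emph{Conclude.} Using commutation and $PG=0$,
\[
\Pi_{\mathcal R}G=(I-P)QG=Q(I-P)G=QG ,
\]
which gives $\Delta_{j,B}=\norm{QG}^2=s_{\rm cand}(j,B)$ by Eq.~\eqref{eq:pauli-score}.

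For the special case $\cD_{j,B}\perp\cT$, we have $PQ=0$, and taking adjoints gives $QP=0$. The projectors therefore commute and the first part applies.

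If the residualized form is not taken as already established, an alternative route argues directly from Eq.~\eqref{eq:positioned-delta}. Commuting projectors give $\Pi_{\cT+\cD_{j,B}}=P+Q-PQ$. Then $PG=0$ implies $PQG=QPG=0$, so $\Pi_{\cT+\cD_{j,B}}G=QG$ while $\Pi_{\cT}G=0$, again giving $\norm{QG}^2$. The only step needing care is the projector identity, namely that commuting orthogonal projectors have $P+Q-PQ$ as the projector onto the sum of their ranges. This holds because the operator is self-adjoint and idempotent, fixes both $\im P$ and $\im Q$, and has range inside their sum. With commutation assumed there is no real obstacle; the closedness of subspaces is automatic since the spaces are finite dimensional.
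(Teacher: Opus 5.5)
Your proposal is correct and essentially matches the paper's proof. The paper uses exactly your fallback argument, $\Pi_{\cT+\cD_{j,B}}=P+Q-PQ$ together with $PQG=QPG=0$; your main residualized route rests on the same identity, since $\Pi_{(I-P)\cD_{j,B}}=(I-P)Q$ is equivalent to $P+(I-P)Q=P+Q-PQ$, and your check that this operator is the orthogonal projector onto the sum fills in a step the paper only asserts.
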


\begin{proof}
For commuting orthogonal projectors,
$\Pi_{\cT+\cD_{j,B}}=P+Q-PQ$.  Since $PG=0$ and $PQG=QPG=0$,
the projection of $G$ onto the sum is $QG$, while the old projected norm is
zero.
\end{proof}

At a stationary old point, commuting projectors therefore make the two
scores interchangeable, including when both vanish. Oblique overlap can
make the candidate-only score differ from the true incremental gain,
motivating the residualized form of $\Delta_{j,B}$ above.

Local gradient-based pool selection underlies
ADAPT-VQE~\cite{Grimsley2019}; Geo-ADAPT uses the pool information metric
to select the largest absolute natural-gradient component
\cite{Sohail2026GeoADAPT}. Candidate-only scoring is closely related to
this local selection approach, measuring task-gradient alignment with
a candidate subspace. Here the pool is structured by a finite-field
routing coordinate: the same matrix determines both the propagated
tangent dictionary and the global cut capacity of
Section~\ref{sec:osr}. Incremental scoring measures the newly accessible
component after removing directions already available to the old ansatz.
The connection between these global and local roles of routing is the
organizing principle of the construction.

\paragraph{Probe candidate directions.}
Scoring does not require reconstructing the full ambient gradient $G$.
Keep the old parameters fixed and all new parameters at zero, then vary
one candidate parameter to probe its directional loss response:
\begin{equation}
 g_{j,B,\mu}
 =\inner{H^{\rm new}_{j,B,\mu}}{G}
 =\left.
 \frac{\partial}{\partial\phi_\mu}
 \cL\!\left(U_{j,B}(\phi)\right)
 \right|_{\phi=0}.
 \label{eq:candidate-directional-derivative}
\end{equation}
Equivalently, this is the limit of
$[\cL(U_{j,B}(\delta e_\mu))-\cL(U)]/\delta$ as $\delta\to0$.
Depending on access, the response can be evaluated analytically, obtained
by parameter shift, or approximated by finite differences; experimental
estimates also incur finite-shot uncertainty. Projection scores additionally
require candidate--candidate Gram overlaps, and incremental scores require
old--candidate overlaps.

For the state-space metric considered here, the candidate Gram can be
accessed through expectation values without full state reconstruction.
To make this connection explicit, \emph{old} refers to the trained parameters $\theta_k$, and
\emph{new} to the inserted parameters $\phi_\mu$. Both sets act on the
same output state $|\psi\rangle=U|\psi_0\rangle$ at
$\boldsymbol\phi=\mathbf0$.
In the state-space metric of Section~\ref{sec:trainability}, their
physical first-order displacements are
\[
 v_k^{\rm old}=(I-|\psi\rangle\langle\psi|)\partial_{\theta_k}|\psi\rangle,
 \qquad
 v_\mu^{\rm new}=(I-|\psi\rangle\langle\psi|)\partial_{\phi_\mu}|\psi\rangle.
\]
These horizontal tangent vectors describe first-order physical changes
of the same output state. Their real inner products measure how much the parameter
directions overlap. The three required blocks are
\[
 \begin{aligned}
 (M_{\rm old}^{\rm st})_{kl}
 &=\operatorname{Re}\langle v_k^{\rm old}|v_l^{\rm old}\rangle,\\
 (M_{\rm new}^{\rm st})_{\mu\nu}
 &=\operatorname{Re}\langle v_\mu^{\rm new}|v_\nu^{\rm new}\rangle,\\
 (M_{\rm cross}^{\rm st})_{k\mu}
 &=\operatorname{Re}\langle v_k^{\rm old}|v_\mu^{\rm new}\rangle.
 \end{aligned}
\]
Candidate-only scoring uses $M_{\rm new}^{\rm st}$; incremental scoring
also needs the old block and the cross-block, which records overlap with
directions already available. Together they form
\[
 M_{\rm aug}^{\rm st}
 =\begin{pmatrix}
 M_{\rm old}^{\rm st}&M_{\rm cross}^{\rm st}\\
 (M_{\rm cross}^{\rm st})^{\mathsf T}&M_{\rm new}^{\rm st}
 \end{pmatrix}.
\]

The new--new block simplifies because both derivatives pass through the
same circuit suffix. At a fixed candidate $(j,B)$,
write the state just before insertion as $|\chi_j\rangle=U_{\le j}|\psi_0\rangle$ and
$Q_\mu=C(B)^\dagger\Lambda_\mu C(B)$, with expectations taken in
$|\chi_j\rangle$. At $\boldsymbol\phi=\mathbf0$,
\[
 \partial_{\phi_\mu}|\psi\rangle=-iU_{>j}Q_\mu|\chi_j\rangle,
 \qquad
 v_\mu^{\rm new}
 =-iU_{>j}(Q_\mu-\langle Q_\mu\rangle)|\chi_j\rangle.
\]
The second expression removes the global-phase component. The common
$U_{>j}$ cancels in the overlap, giving the candidate Gram
\[
 (M_{\rm new}^{\rm st})_{\mu\nu}
 =\operatorname{Re}\langle v_\mu^{\rm new}|v_\nu^{\rm new}\rangle
 =\operatorname{Re}\langle Q_\mu Q_\nu\rangle
  -\langle Q_\mu\rangle\langle Q_\nu\rangle.
\]
For Pauli rotations, $\Lambda_\mu=P_\mu/2$. Defining
$P'_\mu=C(B)^\dagger P_\mu C(B)$ gives $Q_\mu=P'_\mu/2$, and hence
\[
 (M_{\rm new}^{\rm st})_{\mu\nu}
 =\frac14\left[
 \operatorname{Re}\langle P'_\mu P'_\nu\rangle
 -\langle P'_\mu\rangle\langle P'_\nu\rangle\right].
\]
Thus the candidate Gram can be obtained from Pauli expectation
measurements, rather than a full state reconstruction
\cite{Stokes2020}. Collecting the new-parameter loss derivatives into
$g_{\rm new}$ gives the candidate-only score
\[
 s_{\rm cand}^{\rm st}(j,B)
 =g_{\rm new}^{\mathsf T}(M_{\rm new}^{\rm st})^+g_{\rm new}.
\]
Incremental scoring additionally requires the old--new cross-block in
$M_{\rm aug}^{\rm st}$, which need not admit the same Pauli simplification.
Together with the loss derivatives, the joint and old Gram matrices
determine the corresponding accessible gradient norms; subtracting the
old squared norm from the augmented squared norm gives $\Delta_{j,B}^{\rm st}$.
The insertion itself adds a local layer and two synthesized CNOT networks
while retaining the pre-existing routing.

For a fixed candidate with $r_B=O(n)$ new parameters, binary propagation
and construction of its insertion-frame Pauli dictionary take polynomial
time in $n$. Candidate-only scoring requires $r_B$ loss derivatives and
$O(r_B^2)$ Gram entries, followed by $O(r_B^3)$ dense linear algebra.
With $p$ old parameters, incremental scoring additionally uses $O(pr_B)$
cross-Gram entries and the reusable old gradient and $O(p^2)$ old Gram
data; a direct augmented-matrix calculation costs $O((p+r_B)^3)$.
Thus, conditional on obtaining these quantities at the required
precision, classical diagnostic processing for $K$ candidates costs
$O(K\,\mathrm{poly}(n,t))$, since $p=3n(t+1)$; this is
$O(K\,\mathrm{poly}(n))$ for polynomial block count $t$.
Single-candidate processing does not make exhaustive routing search
polynomial: $|\GL(n,\Ftwo)|=2^{\Theta(n^2)}$.
Hardware sampling costs depend on the measurement strategy, required
precision, and Gram conditioning. The experiments below evaluate these
quantities directly from statevectors.

\begin{center}
\fbox{\begin{minipage}{0.92\linewidth}
\textbf{One local intervention round.}
\begin{enumerate}[leftmargin=1.5em,itemsep=1pt,topsep=3pt]
\item Optimize the current ansatz to the intervention point.
\item Generate a finite library $\mathcal K=\{(j,B)\}$.
\item Probe directional responses; compute or estimate the required Gram overlaps.
\item Evaluate candidate-only or incremental scores using Gram pseudoinverses
or equivalent factorizations.
\item Select a candidate under the gate, depth, and measurement budget.
\item Insert it with new parameters at zero, retaining the old parameters.
\item Resume continuous optimization.
\end{enumerate}
\end{minipage}}
\end{center}

The intervention is designed primarily to improve gradient coverage.
Since coverage and Gram conditioning are distinct geometric properties,
a candidate with positive $\Delta_{j,B}$ need not improve conditioning.

The procedure is naturally iterative: after retraining, the local geometry can
be reassessed and another candidate inserted at the new optimization point.

\subsection{Controlled accessibility recovery}
\label{sec:controlled}

We first isolate one intervention round and test whether same-point
tangent expansion translates into subsequent optimization gain.

Tests use $(n,t)=(4,2),(6,3),(8,4)$ with five seeds each, giving 15
instances. Each has 12 equal-cost nonidentity candidates plus an identity
control, giving 195 insertions in total, followed by 50 training steps each.
Numerical local layers use independent rotations in time order
$X\to Y\to Z$ (matrix product $R_zR_yR_x$), extending beyond the analytic
$ZXZ$ certificate. Scores use exact statevectors and the horizontal metric
of Section~\ref{sec:trainability}, denoted by $\mathrm{st}$; all gadgets
are inserted at the input, and this fixed position is suppressed in score
notation. Appendix~\ref{app:statistics} defines the controls and collects
the protocol and statistical checks.

The controlled Bell-based test compares routing candidates at the
same trained point: identity initialization preserves the state, loss,
old gradients, and ambient gradient, while the candidates supply different
tangent directions. These quantities remain unchanged to machine zero
across all 195 insertions, isolating tangent accessibility in the
initial comparison.

The target contains Bell pairs across a chosen cut, dressed by
non-Clifford local layers and within-side routing, whereas the old ansatz
has no cross-cut CNOT. At its optimized points,
$\norm{G_{\rm st}}\in[0.484,0.866]$ but the visible-to-ambient norm ratio
lies in $[1.15\times10^{-12},9.24\times10^{-5}]$.
The obstruction is therefore lack of accessibility to an existing
descent direction, rather than absence of ambient descent.
Figure~\ref{fig:multiblock}(a) shows how insertion can expose this
gradient without changing the physical point.

The incremental score predicts one- and five-step descent with median
within-instance Spearman correlations $0.916$ and $0.902$; the corresponding
candidate-only values are $0.921$ and $0.902$. At 50 steps the medians
decrease to $0.534$ and $0.503$, respectively
(Appendix~\ref{app:statistics}). This horizon dependence is consistent
with a same-point local score: as training moves away from the scored
point, the initial geometry becomes less predictive of accumulated
descent. The representative training
trajectories in Fig.~\ref{fig:multiblock}(b) and the per-instance five-step
rank correlations in panel (c) connect initial geometry to realized
loss reduction. In panel (b), the target has four equal Schmidt weights
across the chosen cut. The gadget is inserted at the input, where
$C(B)\ket{0^n}=\ket{0^n}$; each selected nonidentity routing has operator
Schmidt rank two, and the remaining circuit acts within the two sides.
The resulting states therefore have Schmidt rank at most two, giving
fidelity at most $1/2$ by Lemma~\ref{lem:rank-fidelity} and explaining
the approach to infidelity $1/2$.
At 50 steps, incremental selection exceeds the fixed
random candidate in 13 of 15 instances; panel (d) compares selectors by
their frequency of attaining at least $90\%$ of the retrospective
finite-pool oracle gain.

Together, these results support the intended mechanism: routing candidates
that expose more of the previously inaccessible ambient gradient at the
same physical point tend to produce stronger descent after optimization
resumes. Candidate-only and incremental selection perform comparably in
this controlled test.

\begin{figure}[!htbp]
\centering
\includegraphics[width=.98\linewidth]{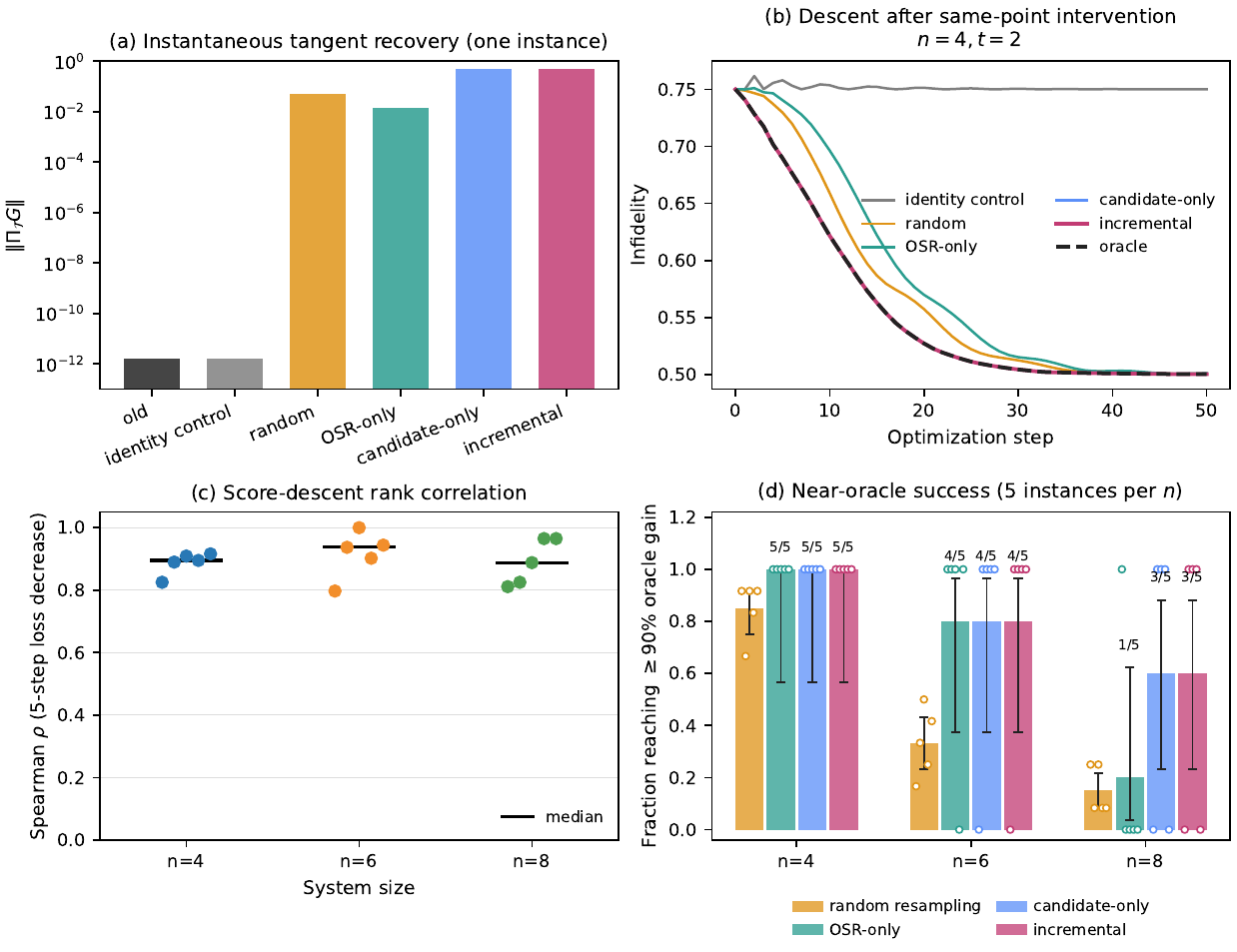}
\caption{\textbf{Same-point augmentation restores access and geometric scores
predict early descent.}
The controlled test compares 15 instances and 12 nonidentity candidates per
instance, each adding $3n$ parameters and eight CNOTs.
(a) Instantaneous accessible-gradient recovery at the same physical point
for one instance. All insertions start from the same state, loss, and
ambient gradient; values below $10^{-13}$ are shown at this display floor.
(b) Descent after same-point intervention, with infidelity
$1-|\langle\psi_{\rm target}|\psi\rangle|^2$ (lower is better).
For this representative instance (seed 20264923), candidate-only,
incremental, and retrospective oracle selection choose the same candidate,
so their trajectories coincide. The selected nonidentity circuits have
the structural infidelity lower bound $1/2$ described in the text.
(c) Within-instance Spearman correlations between incremental gain and
five-step loss decrease, each computed over 12 candidates. Each dot is
one instance; black bars mark medians over five instances per size.
Appendix~\ref{app:statistics} retains the 20-step candidate-rank scatter.
(d) Selector success fractions for attaining at least $90\%$ of the
retrospective finite-pool oracle's 50-step gain, with five instances per size.
Random bars show mean within-pool success probabilities.
Intervals are Wilson intervals for the three deterministic selectors and
instance-bootstrap intervals for random selection; the random baseline
uses 1000 saved-pool resamples per instance.}
\label{fig:multiblock}
\end{figure}

\subsection{Natural-task validation on a connected Heisenberg chain}
\label{sec:xxz}

The controlled Bell test isolates the recovery mechanism. We next test
whether geometry-guided selection remains useful for a natural many-body
objective when the baseline entangling graph is already connected.
Here augmentation changes the available directions within a connected
architecture, rather than simply connecting previously disconnected
subsystems. We use the open antiferromagnetic Heisenberg (XXX) chain,
\begin{equation}
 H_n=\sum_{q=0}^{n-2}
 \left(X_qX_{q+1}+Y_qY_{q+1}+Z_qZ_{q+1}\right)
 \label{eq:xxz}
\end{equation}
for $n=8,10,12$, with five fixed seeds per size. For a fixed routing
architecture $\bm A$, the variational task is to minimize
\[
 E(\bm A,\bm\theta)
 =\bra{\psi(\bm A,\bm\theta)}H_n\ket{\psi(\bm A,\bm\theta)},
 \qquad
 \ket{\psi(\bm A,\bm\theta)}=U(\bm A,\bm\theta)\ket{\psi_0},
\]
over the continuous parameters $\bm\theta$, toward the ground-state
energy $E_0$; Fig.~\ref{fig:xxz}(a) reports the energy error $E-E_0$.
The baseline has two
alternating nearest-neighbor brick-wall CNOT blocks whose union covers
every edge of the chain.

The comparison isolates which routing augmentation is selected.
At each independently selected stalled checkpoint, all 12 candidates
share the same insertion position between the blocks, add the same
$3n$ angles and $2(n-1)$ CNOTs, and receive the same 50-step Adam budget.
At insertion, identity initialization preserves the state, energy, and old
gradients, so all candidates are compared from the same physical point.
Candidate construction and scoring do not
use the exact ground state.

Figure~\ref{fig:xxz}(a) illustrates the contrast between continued
training of the old ansatz and descent after augmentation.
Random augmentation recovers descent in all 15 instances, as do both
geometric selectors.
Panel (b) addresses the stronger question: how effectively does each
selector use the same augmentation budget? Mean 50-step energy decreases
are $0.495$ for fixed random selection, $0.925$ for candidate-only,
$0.916$ for incremental, and $1.015$ for the retrospective finite-pool
oracle. Each geometric selector beats the fixed random candidate in
13/15 paired instances. Thus augmentation itself helps, but geometric
selection makes substantially better use of the same added resources.

Panels (c) and (d) provide complementary diagnostics of this selection.
Whereas the one- and five-step correlations in
Section~\ref{sec:controlled} test near-term descent, panel (c) tests whether
the local geometric signal remains informative over the full 50-step
post-intervention training budget. The within-pool rank associations vary
across instances.
Panel (d) compares the two geometric rankings directly. Incremental
scoring discounts directions already represented in the old tangent
space, changing the selected candidate in 4/15 pools. It
measures the additional gradient access beyond the old tangent space,
while candidate-only scoring requires only the new--new Gram block and
performs comparably in this benchmark.

On this connected many-body task, the benefit of architecture growth is
therefore not determined by the augmentation budget alone: local geometry
helps identify which equal-cost directions are useful at the current
optimization point.
Appendix~\ref{sec:state-geometry} collects the protocol,
paired statistics, and subspace diagnostics;
Appendix~\ref{app:tfim-control} gives the complementary TFIM selector
control, where the two geometric selectors nearly coincide.

\begin{figure}[!htbp]
\centering
\includegraphics[width=.98\linewidth]{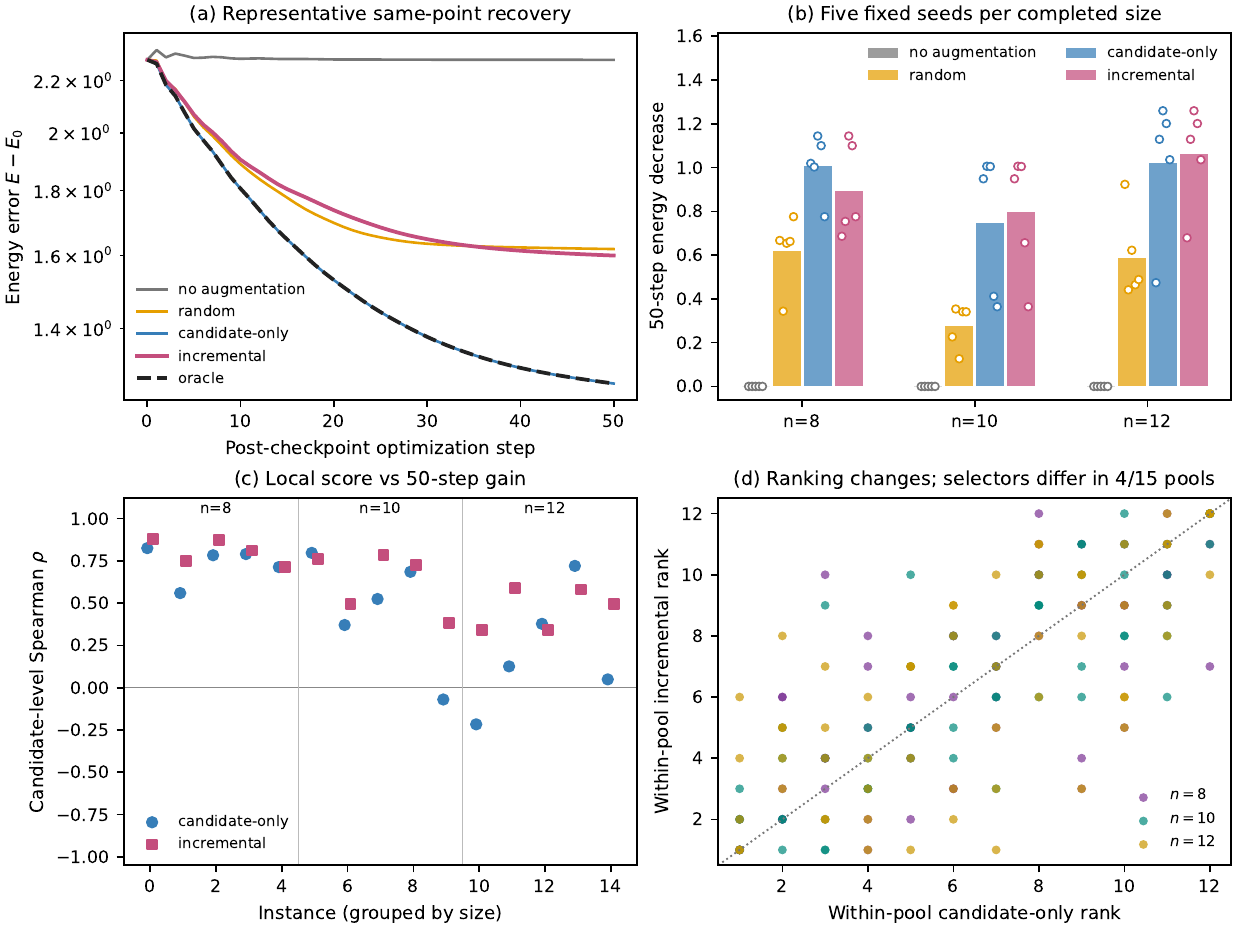}
\caption{\textbf{Geometric selection improves the use of equal-cost
augmentation on a connected Heisenberg ansatz.}
(a) Representative same-point continuation ($n=8$, seed 20268804).
(b) Fifty-step gains for five seeds at each $n=8,10,12$, with 12 equal-cost
candidates per checkpoint; bars show means and points show instances.
The random selector uses one fixed draw per instance.
(c) Within-instance score correlations with realized 50-step gain.
(d) Candidate-only and incremental scores can rank the same pool
differently, selecting different candidates in 4/15 instances.
The oracle is the retrospective best candidate.}
\label{fig:xxz}
\end{figure}

\subsection{Routing-dependent near-stagnation beyond the controlled construction}
\label{sec:genericity}

Finally, we test whether routing-dependent near-stagnation persists beyond
the engineered Bell construction. We use the open one-dimensional
transverse-field Ising model (TFIM),
\[
 H_{\rm TFIM}=-J\sum_{q=0}^{n-2}Z_qZ_{q+1}
              -h\sum_{q=0}^{n-1}X_q,
 \qquad J=h=1.
\]
For each fixed routing architecture $\bm A$, we minimize
\[
 E(\bm A,\bm\theta)
 =\bra{\psi(\bm A,\bm\theta)}H_{\rm TFIM}\ket{\psi(\bm A,\bm\theta)},
 \qquad
 \ket{\psi(\bm A,\bm\theta)}=U(\bm A,\bm\theta)\ket{0^n},
\]
over the continuous parameters $\bm\theta$, toward the ground-state
energy $E_0$. Whereas Section~\ref{sec:xxz} compares augmentation
selectors, this ensemble tests the incidence of near-stagnation across
routing families and initializations. For each $n=4,6,8$, we use
20 distinct routing architectures, four from each of five routing families.
Each architecture consists of two routing blocks ($t=2$) and is trained
from three initializations, giving 180 optimization instances in total.
These comprise open-chain line/brickwork routings, periodic ring routings,
fixed-budget random routings, repeated dimers with the same qubit pairing
and CNOT orientation in both blocks, and matched-cut-budget controls.
The last family matches its paired random routing's capacity budget
summed over all contiguous cuts,
$B_{\rm cut}(\bm A)=\sum_{j=1}^{n-1}K_{\bm A}(S_j)$, where
$S_j=\{0,\ldots,j-1\}$ and $K_{\bm A}(S)$ is the cumulative per-cut
budget of Theorem~\ref{thm:cumulative}. This control fixes the scalar
sum while allowing the detailed cut profile and routing geometry to differ.

Each run uses 120 Adam steps followed by at most 300 L-BFGS-B
iterations. At the returned terminal point, we evaluate the local
geometry and apply the near-stagnation criterion below.
Figure~\ref{fig:genericity} denotes the state-space gradient coverage
of Section~\ref{sec:trainability} by
$\chi_{\rm ratio}=a=\norm{\Pi_{\cT^{\rm st}}G_{\rm st}}/\norm{G_{\rm st}}$.
The preregistered primary event
combines a small parameter gradient with a nonzero ambient state gradient,
small visible coverage, and finite energy error; precise routing
constructions and numerical thresholds are given in
Appendix~\ref{app:genericity}.

The event occurs in 85/180 instances and for at least one initialization
of 39/60 routings. Its incidence depends strongly on routing: repeated
dimers give 36/36 events, whereas line and ring routings each give 0/12 at
$n=8$. At this size, none of the line or ring endpoints meets the
parameter-gradient threshold under this protocol, even though low coverage
remains common. Figure~\ref{fig:genericity} therefore shows that near-stagnation
extends beyond an isolated construction and is routing dependent in the
tested ensembles.

\begin{figure}[!htbp]
\centering
\includegraphics[width=\linewidth]{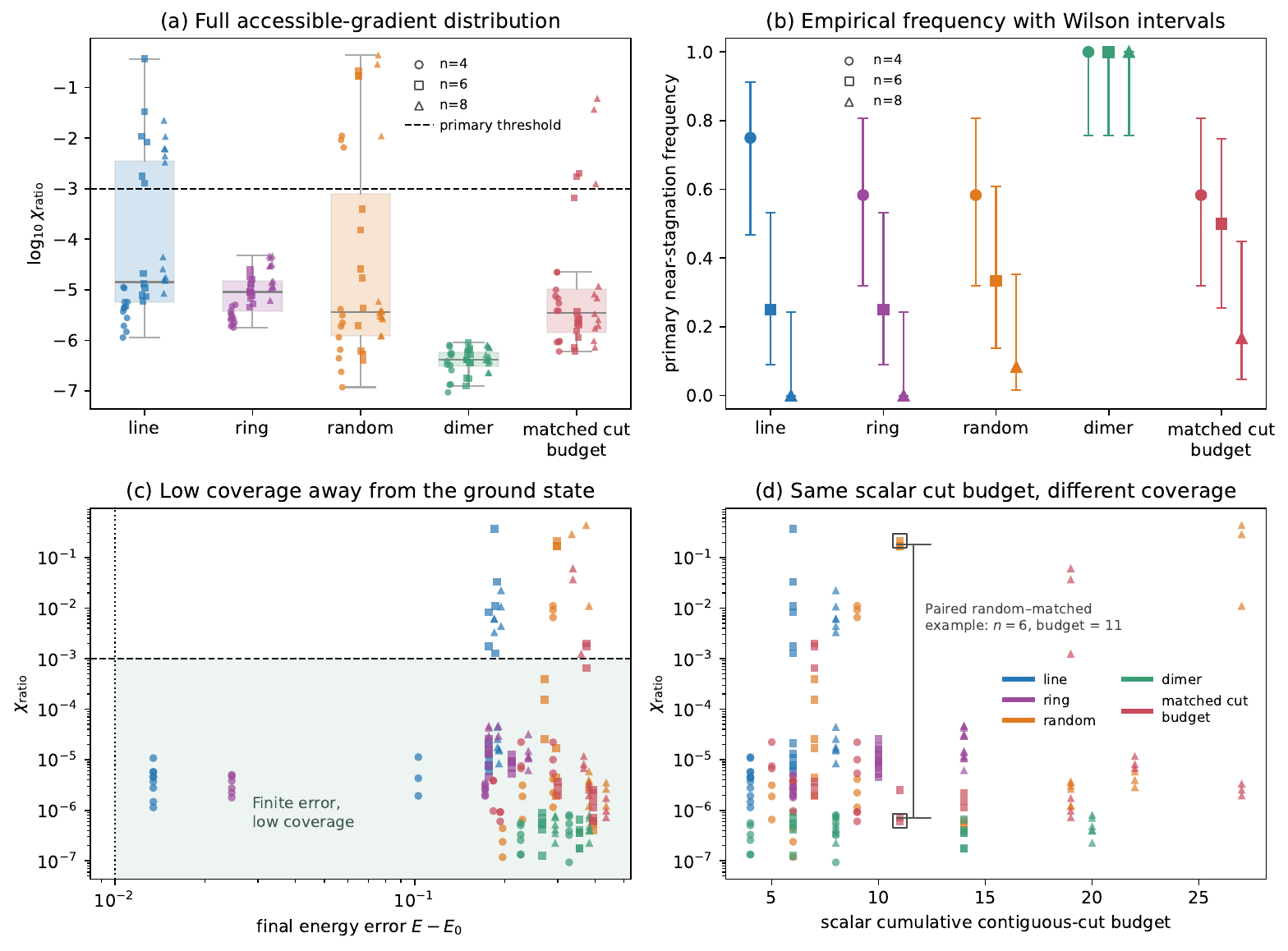}
\caption{\textbf{Near-stagnation is routing dependent across the sampled
TFIM circuits.}
For each routing family and size, four architectures are each trained
from three initializations, giving 12 runs per group and 180 runs in total.
Each scatter point in (a), (c), and (d) represents one training endpoint;
(b) reports the fraction satisfying all four conditions in
Eq.~\eqref{eq:pilot-event} within each group of 12 runs.
Colors identify routing families throughout; circles, squares, and
triangles denote $n=4,6,8$, respectively.
Matched-cut-budget controls share the scalar cumulative contiguous-cut
budget of their paired random routings, not necessarily the full cut profile.
(a) All accessible-gradient ratios.
(b) Primary-event frequencies with descriptive Wilson 95\% binomial
intervals. (c) The shaded region marks finite energy error
($E-E_0\ge10^{-2}$) with low coverage ($a\le10^{-3}$).
(d) Scalar cut capacity alone does not determine local access.
The outlined points highlight one random--matched pair at $n=6$ with
$B_{\rm cut}=11$, trained from the same initialization and evaluated
at their respective terminal points.
The highlighted random run did not meet the optimizer's success criterion.
Optimization failures are retained; intervals and conditioning diagnostics
are detailed in Appendix~\ref{app:genericity}.}
\label{fig:genericity}
\end{figure}

Together, these experiments connect routing-dependent local geometry to
geometry-guided circuit growth: same-point augmentation exposes descent directions
while preserving the learned circuit at insertion, and geometric selection
improves the use of an equal augmentation budget. The TFIM ensemble shows
that the underlying restriction of gradient access also arises beyond
the controlled construction and varies with routing.

\section{Discussion and conclusion}
\label{sec:discussion}

Viewed more broadly, our results suggest that the role of a quantum
circuit architecture may extend beyond determining what a circuit family
can ultimately represent. For a given task, an architecture must first
have sufficient representational capacity for its reachable circuit
family to accommodate good solutions; but this is only the first
threshold. Being able to accommodate good solutions does not mean that
training can readily find them. Routing constrains global
representability through structural properties such as cross-cut
capacity, while also shaping the tangent space, gradient coverage, and
conditioning near the current optimization point. A useful architecture
must therefore not only have room for good solutions, but also expose
effective directions that can lead toward them during training. In other
words, capacity constrains where one can reach in principle, while
accessibility determines which paths are actually open from one's
current position.

This distinction also suggests a broader view of variational quantum
algorithms as a problem of algorithm discovery. Hand-designed quantum
algorithms and variational circuits learned through task feedback can
be viewed as two ways of determining structure and parameters in the
same mixed discrete--continuous space $U(\bm A,\bm\theta)$. Notably,
a final answer can have a remarkably compact description even when the
space in which it must be found is enormous. A single
$A\in\GL(n,\Ftwo)$ requires only $O(n^2)$ bits to specify, and an
architecture with polynomially many routing blocks still has a
polynomial-size description; yet the corresponding discrete candidate
space contains $2^{\Theta(tn^2)}$ routing tuples. The resources needed
to describe a known quantum circuit and those needed to discover it
from the information a task provides are therefore different complexity
questions. A short quantum algorithm may exist without an equally
accessible route to finding it.

The block count $t$ brings this tension into focus.
Increasing $t$ can enlarge the representable circuit family, allowing
an increasingly rich range of quantum computations to be described in
this mixed language; at the same time, it rapidly expands the discrete
architecture space to be explored. A larger representational space
offers more possible answers, but may also make discovery harder.
The local geometry studied here suggests an alternative to global
enumeration: rather than search the entire architecture space in
advance, use the descent component not yet captured by the current
tangent space to identify which structural directions are most worth
opening next. This echoes residual-driven expansion of effective
subspaces, where information not captured by the current representation
itself guides the next enlargement. Same-point augmentation further
allows this expansion to preserve the learned physical map at insertion,
so that structural search and continuous training need not remain
separate stages \cite{Grimsley2019,Sohail2026GeoADAPT,Yao2021AVQDS}.
Training is then not only a matter of moving along existing paths:
it can also use the information supplied by the current task to decide
where the next path should open.

This picture also has an interesting parallel in the development of
classical machine learning. A highly expressive function family does
not automatically ensure that practical training can find its useful
members; architecture, parameterization, initialization, and optimization
paths jointly shape the region that can ultimately be learned
\cite{LeCun2015DeepLearning,He2016,Elsken2019}. Similarly, for
variational quantum circuits, it may be unnecessary to expect a final,
perfect architecture to be found first, leaving only its continuous
parameters to optimize. Continuous updates determine motion along the
directions already exposed, while structural updates determine which
additional directions can be opened. In comparison with classical
neural networks, some structural degrees of freedom in quantum circuits
also admit an explicit finite-field algebraic description, making the
question of how structure changes the learning geometry directly
amenable to analysis. This suggests a more dynamic picture of quantum
circuit design: architecture and parameters may grow together during
training.

The discrete--continuous representation also provides a layered
perspective on practical quantum computational advantage. A candidate
circuit family must first have sufficient representational capacity;
useful structures and parameters must then be identified through
theoretical design or a resource-bounded discovery process. Hardware
connectivity, compilation costs, and fault-tolerant implementation,
where required, further restrict the practically available choices of
$\bm A$ and $\bm\theta$. Ultimately, quantum advantage must be
assessed against appropriate task-level classical baselines, rather
than inferred solely from the difficulty of simulating the circuit.
In particular, a quantum circuit that is hard to simulate classically
does not automatically confer quantum advantage on the task it solves
\cite{Huang2026VastWorld}. From representability in principle to
discoverability, physical and fault-tolerant realizability, and finally
task-level advantage over classical methods, these considerations impose
distinct constraints on the enormous mixed circuit space. Thus, the
computational space permitted in principle by quantum mechanics and the
part of that space that can ultimately be put to use need not coincide.

This raises a further question beyond the boundaries traditionally
studied in computational complexity: among quantum circuits that are
computationally powerful in principle, how much can a resource-bounded
learning, search, or design process actually reach from the information
supplied by a task? The answer will generally depend on the available
task information and the resources permitted to the discovery process.
Routing geometry provides one concrete local view of this broader
problem: capacity, tangent accessibility, and local structural growth
make parts of this discovery boundary observable, diagnosable, and
actionable.

Advantages in computational complexity are not the only advantages
available in quantum information processing. Particularly while the
scale, depth, and reliability of near-term quantum circuits remain
resource-constrained, quantum measurement and the acquisition of
information offer another route: information security
\cite{BennettBrassard1984}, sample or experimental efficiency
\cite{Huang2026VastWorld,Huang2022Experiments}, and efficient classical
postprocessing \cite{Wang2025QuditShadow}. Quantum advantage need not
take a single form, nor be characterized by a single computational
complexity boundary.

\begin{samepage}
Perhaps, then, the question ultimately facing quantum circuit design
is not only which circuits are sufficiently powerful in principle, but
which powerful circuits can be discovered, starting from the information
a task provides, along a resource-feasible path. Understanding how these
paths emerge, how they are shaped by architecture, and what geometry
characterizes the \emph{reachable frontier} they collectively trace
may help us recognize an equally important yet less explored aspect
of quantum computational capability.

The question is not only what can be computed, but what can be
discovered from the information a task provides.
\end{samepage}

\paragraph{Acknowledgements.}
This work was supported by the National Natural Science Foundation of
China under Grants No.~42330707 and No.~42530108.
OpenAI GPT-6 was used as an assistive tool for mathematical discussion,
code development, numerical analysis, and manuscript editing.

\paragraph{Code and data availability.}
Figure-reproduction code and frozen data are available in
Ref.~\cite{Wang2026RoutingRepository}.

\clearpage
\appendix
\etocdepthtag.toc{appendix}
\begingroup
\renewcommand{\contentsname}{Appendix Contents}
\etocsettagdepth{main}{none}
\etocsettagdepth{appendix}{subsection}
\etocsetnexttocdepth{subsection}
\tableofcontents
\endgroup
\clearpage

\section{Encoding and finite-field routing}
\label{app:bqp}

Section~\ref{sec:mixed} treats routing matrices and local rotations as
coordinates for quantum circuits. This appendix establishes that
representation and its resource bounds, then derives the finite-field
construction underlying each routing block.

\subsection{Mixed representation and compilation}

Proposition~\ref{prop:mixed-coverage} connects the mixed representation
to ordinary gate circuits in both directions. We first group a given
circuit into alternating blocks, then track the description length and
compilation accuracy needed for polynomial-size circuit families.

\paragraph{Alternating form and phase convention.}
\begin{proposition}[Alternating representation]
\label{prop:alternating}
Every finite circuit over CNOT and arbitrary single-qubit unitaries can be
written in the form \eqref{eq:alternating} without changing gate order.
\end{proposition}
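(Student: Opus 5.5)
The plan is a direct grouping argument by induction on the gate sequence. Fix a finite circuit $g_N\cdots g_2g_1$, where each $g_i$ is either a CNOT on some ordered pair of qubits or a single-qubit unitary on one qubit, embedded in the $n$-qubit register. I will partition the index set $\{1,\ldots,N\}$ into maximal consecutive runs of the same type. This gives an alternating sequence of local runs and CNOT runs, possibly beginning or ending with either type. Because each run consists of consecutive gates in the original order, replacing each run by its product leaves the total product, and hence the gate order, unchanged.

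First I treat the CNOT runs. For a CNOT run $\mathrm{CNOT}_{c_m\to q_m}\cdots\mathrm{CNOT}_{c_1\to q_1}$, I invoke Eq.~\eqref{eq:routing-word-encoding} together with Eq.~\eqref{eq:cnot-linear-map}. Each CNOT acts as $\ket{x}\mapsto\ket{E_jx}$ with $E_j$ an elementary (transvection) matrix. The composition therefore acts as $\ket{x}\mapsto\ket{E_m\cdots E_1x}$, so the run equals $C(A_\ell)$ with $A_\ell\in\GL(n,\Ftwo)$. Invertibility follows because each $E_j$ is its own inverse.

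Next I treat the local runs. The gates in a local run can act on the same qubit several times. I collect, for each qubit $q$, the ordered product $u_q$ of the single-qubit gates acting on $q$ within the run. Gates on different qubits commute as tensor factors, so the run equals $\bigotimes_q u_q$, with $u_q=\id$ for untouched qubits. I then write $u_q=e^{i\varphi_q}\tilde u_q$ with $\tilde u_q\in\SU(2)$ and absorb every phase into the global phase, using the identification modulo global phase fixed in Section~\ref{sec:mixed}. If Euler coordinates are wanted, each $\tilde u_q$ admits a $ZXZ$ decomposition as in Eq.~\eqref{eq:local-layer}.

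Finally I handle the boundary cases so that the word has exactly the shape of Eq.~\eqref{eq:alternating}. If the circuit begins with a CNOT run, I prepend $L_0=\id$; if it ends with one, I append $L_t=\id$. The empty circuit is the case $t=0$ with $L_0=\id$. Maximality of the runs guarantees that consecutive blocks alternate, so the result is $L_tC(A_t)\cdots C(A_1)L_0$, with $t$ the number of CNOT runs.

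The only step that needs any care is the phase bookkeeping between $\mathrm U(2)$ gates and the $\SU(2)^n$ factors. It is harmless because all phases are scalars and multiply out into one global phase. Everything else is an immediate regrouping argument.
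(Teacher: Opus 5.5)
Your proposal is correct and follows the paper's proof: group maximal consecutive CNOT and local subwords without reordering, insert identities at the endpoints and on untouched qubits, and absorb the $\mathrm U(2)$-versus-$\SU(2)$ scalars into the global phase. You additionally spell out details the paper leaves implicit, namely that each CNOT run is a single $C(A_\ell)$ with $A_\ell\in\GL(n,\Ftwo)$ and that each local run collapses to a per-qubit tensor product.
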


\begin{proof}
Group maximal consecutive CNOT and local subwords without reordering gates.
Insert identities on untouched qubits and at the endpoints when needed.
This gives Eq.~\eqref{eq:alternating}, up to the irrelevant global phase.
\end{proof}

Every $\mathrm U(2)$ gate is a scalar times an $\SU(2)$ gate; replacing
local factors by $\SU(2)$ representatives changes only the global phase.
CNOT permutations can also have determinant $-1$, so evaluation takes
values in $\PU(2^n)=\mathrm U(2^n)/\mathrm U(1)$:
\begin{equation}
 \operatorname{ev}_{n,t}:\cE_{n,t}\longrightarrow\PU(2^n),\qquad
 (\bm A,\bm L)\longmapsto[L_tC(A_t)\cdots C(A_1)L_0].
 \label{eq:evaluation-map}
\end{equation}
We write $U$ for a representative when no phase-dependent quantity is used.
The Euler layers in Eq.~\eqref{eq:local-layer} give the redundant
coordinate space
\begin{equation}
 \cP_{n,t}=\bigl(\GL(n,\Ftwo)\bigr)^t\times\mathbb R^{3n(t+1)}.
 \label{eq:mixed-space}
\end{equation}
For projective evaluation, each angle may be reduced modulo $2\pi$,
giving the representatives used in Eq.~\eqref{eq:explicit-coordinate-values}.

\paragraph{Resource bounds and coverage.}
For a polynomial bound $p$, define
\begin{equation}
 \cE_{n,p}=\bigcup_{0\le t\le p(n)}\cE_{n,t},
 \label{eq:polynomial-universe}
\end{equation}
and its Euler-coordinate counterpart
\begin{equation}
 \cP_{n,p}=\bigcup_{0\le t\le p(n)}\cP_{n,t}.
 \label{eq:polynomial-coordinate-universe}
\end{equation}
When accuracy varies, take $p=p(n,\lceil\log_2(1/\epsilon)\rceil)$.
Recording each angle to $b_\theta$ bits gives description length
\begin{equation}
 O\!\left(tn^2+3n(t+1)b_\theta\right).
 \label{eq:encoding-length}
\end{equation}
Gaussian elimination synthesizes each binary map with $O(n^2)$ CNOTs
under all-to-all logical connectivity \cite{PatelMarkovHayes08};
the construction and its physical implementation costs are detailed in
Section~\ref{app:cnot-details}.
Set $d_{\rm PU}([U],[V])=\min_\varphi\|U-e^{i\varphi}V\|_{\rm op}$.

\begin{corollary}[Polynomial-size unitary circuit coverage]
\label{cor:polynomial-coverage}
Let $\{Q_x\}$ be a uniform family of $n=n(|x|)$ qubit unitary circuits of
polynomial size in $|x|$ over a fixed finite, effectively specified, bounded-arity universal gate
library with effectively compilable gates.  We allow idle-qubit padding so that $n(|x|)\ge |x|$; thus polynomial
bounds in input length are also polynomial in $n$.  For
every $0<\epsilon<1$, standard compilation produces
$e_x\in\cE_{n,t}$ such that
\begin{equation}
 d_{\rm PU}\!\left(\operatorname{ev}_{n,t}(e_x),[Q_x]\right)
 \le\epsilon,
 \qquad
 t=\operatorname{poly}\!\left(n,\log\frac1\epsilon\right),
 \label{eq:coverage-bound}
\end{equation}
with angle precision
$b_\theta=\operatorname{poly}(n,\max\{1,\log(1/\epsilon)\})$.  In particular, under
the usual inverse-polynomial or exponentially small accuracy conventions,
the logical unitary skeleton lies in a polynomial-resource union
\eqref{eq:polynomial-universe}.
\end{corollary}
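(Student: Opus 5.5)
The plan is to compile each gate of $Q_x$ into CNOT and single-qubit gates, group the result via Proposition~\ref{prop:alternating}, and then account for both the block count and the angle precision. First, since the library is finite, fixed and of bounded arity $k$, each library gate $g$ is a fixed unitary on at most $k$ qubits. By the standard two-level and Gray-code decomposition \cite{Barenco1995}, $g$ has an exact circuit over CNOT and arbitrary $\SU(2)$ gates of size $c_g=O(4^k)$, which is a constant independent of $|x|$. Because the library is effectively specified and compilable, these decompositions and their Euler angles can be computed to any requested precision. Uniformity makes the map $x\mapsto$ (gate list of $Q_x$) computable, so the compiled circuit for $Q_x$ contains at most $s(|x|)\cdot\max_g c_g$ gates, with $s$ the polynomial size bound. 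Padding with idle qubits gives $n\ge|x|$, so this count is also $\operatorname{poly}(n)$.

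Next I will apply Proposition~\ref{prop:alternating} to this exact CNOT$+\SU(2)$ circuit. It groups maximal CNOT subwords into blocks $C(A_\ell)$ and maximal local subwords into layers. A maximal local subword may contain several single-qubit gates on the same qubit; these multiply into one $\SU(2)$ factor per qubit, so each layer lies in $\mathcal L_n$. The number of routing blocks $t$ is at most the number of CNOT gates plus one, hence $\operatorname{poly}(n)$. Each $A_\ell$ is computed exactly as a product of elementary matrices, using \eqref{eq:routing-word-encoding}. At this stage the encoding is exact modulo global phase, since replacing $\mathrm U(2)$ factors by $\SU(2)$ representatives changes only the phase.

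The remaining step is precision, which I expect to be the main bookkeeping obstacle. Each of the $N=3n(t+1)=\operatorname{poly}(n)$ Euler angles will be rounded to $b_\theta$ bits. For rounding errors $\delta$, the estimate $\|R_a(\varphi+\delta)-R_a(\varphi)\|_{\rm op}\le|\delta|/2$ holds. By the telescoping (hybrid) bound for products of unitaries, $\|U-\tilde U\|_{\rm op}\le\sum_i\|V_i-\tilde V_i\|_{\rm op}$, so the total error is at most $N\,2^{-b_\theta}\pi$. It therefore suffices to take $b_\theta=\lceil\log_2(N\pi/\epsilon)\rceil$, plus the bits needed to compute the compiled angles themselves to accuracy $\epsilon/(2N)$. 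This is $\operatorname{poly}(n,\max\{1,\log(1/\epsilon)\})$. Taking the minimum over the phase only decreases $d_{\rm PU}$.

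If the library entries are only approximable, say irrational rotations given by effective procedures, the compiled angles must be produced numerically with the same per-gate error budget $\epsilon/(2N)$. Effective compilability is exactly what guarantees this can be done in time and bit-length polynomial in $\log(N/\epsilon)$. If one instead insists on a finite discrete target gate set, Solovay--Kitaev \cite{DawsonNielsen2006} would add only a $\operatorname{polylog}(1/\epsilon)$ factor to the gate count, and hence to $t$. This is consistent with $t=\operatorname{poly}(n,\log(1/\epsilon))$. Combining the three steps gives $e_x\in\cE_{n,t}$ satisfying \eqref{eq:coverage-bound}.
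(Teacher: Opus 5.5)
Your proposal is correct and follows essentially the same route as the paper. Both compile each bounded-arity library gate into CNOT and one-qubit gates, control the accumulated operator error by the telescoping bound with a per-factor budget of order $\epsilon/N$, and group the compiled word via Proposition~\ref{prop:alternating} so that there are no more routing blocks than gates. Your version is more explicit in two ways: exact compilation of the fixed library makes $t=\operatorname{poly}(n)$ independent of $\epsilon$, and all precision loss is isolated in rounding the $3n(t+1)$ Euler angles via $\|R_a(\varphi+\delta)-R_a(\varphi)\|_{\rm op}\le|\delta|/2$, so the Solovay--Kitaev remark is needed only if one insists on a discrete target gate set.
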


\begin{proof}
Compile the bounded-arity gates into CNOTs and one-qubit gates
\cite{Barenco1995,DawsonNielsen2006,Nielsen2010}.
Allocating error $\epsilon/N$ per factor in a compiled word of length $N$
bounds the total operator error by $\epsilon$ by telescoping.
Standard synthesis keeps gate count and angle precision polynomial in
$n$ and $\log(1/\epsilon)$.  Grouping the compiled word gives no more
routing blocks than gates, proving Eq.~\eqref{eq:coverage-bound}.
\end{proof}

For the converse in Proposition~\ref{prop:mixed-coverage}, synthesize each
of the $t$ binary maps by Gaussian elimination using $O(n^2)$ CNOTs.
Each of the $n(t+1)$ local factors has a three-rotation Euler
decomposition, giving the stated gate counts.  Approximate compilation of
these rotations into a fixed discrete universal library adds the standard
precision-dependent overhead; with polynomial angle precision and the
stated accuracy conventions, this overhead remains polynomial.

\paragraph{Scope of the representation.}
The construction encodes a supplied logical unitary circuit, including
the unitary core of standard BQP computations \cite{BernsteinVazirani1997}.
It covers polynomial-size circuit descriptions such as Shor factoring,
while Hamiltonian simulation and HHL retain their usual assumptions on
input preparation, evolution time, condition number, precision, and output
access \cite{Shor1997,Lloyd1996,harrow2009quantum}. For Grover search,
the black-box query count is exponential in the address width
\cite{Grover1996}; encoding a supplied circuit does not remove its oracle
calls. Measurement, reset, classical feedback, general channels, and
physical error correction require additional process and resource
descriptions. Encoding a specified circuit is therefore distinct from
discovering a task-suitable circuit.

The coordinate representation also fixes the local geometry used below.
Euler coordinates are periodic and can be singular at special points,
so reachable images need not be embedded manifolds globally.
Accordingly, we define the accessible tangent space as the image of
the evaluation differential at the chosen parameter preimage,
transported to the identity. Different preimages of the same unitary
may therefore yield different differential images.

\subsection{CNOT circuits as finite-field transformations}
\label{app:cnot-details}
On $n$ qubits there are $n(n-1)$ directed CNOT gates. Each adds the
control bit to the target bit modulo two and therefore acts as an
elementary invertible linear transformation over $\Ftwo$.
The group they generate is the linear-reversible subgroup of the
Clifford group, naturally isomorphic to $\GL(n,\Ftwo)$.
Consequently, any CNOT-only circuit is represented by a single matrix
$A\in\GL(n,\Ftwo)$, and conversely every such $A$ can be synthesized
using CNOT gates.

We now make this correspondence explicit and derive its synthesis,
counting, and Pauli-propagation rules. Binary vectors are written as
columns; the rightmost matrix or gate in a product acts first. We identify
$\ket{x_1\cdots x_n}$ with $x=(x_1,\ldots,x_n)^{\mathsf T}\in\Ftwo^n$.

\subsubsection{From XOR to an elementary matrix}
A CNOT with control $i$ and distinct target $j$ acts as
\begin{equation}
 x_j\longmapsto x_j\oplus x_i,\qquad
 x_k\longmapsto x_k\quad(k\ne j),\qquad
 A_{i\to j}=I_n+E_{ji},
 \label{eq:cnot-elementary}
\end{equation}
where $E_{ji}$ has its only nonzero entry in row $j$, column $i$ and all
arithmetic is over $\Ftwo$.  For example,
\[
 A_{1\to2}=
 \begin{pmatrix}1&0&0\\1&1&0\\0&0&1\end{pmatrix},\qquad
 A_{2\to3}A_{1\to2}=
 \begin{pmatrix}1&0&0\\1&1&0\\1&1&1\end{pmatrix}.
\]
Thus applying $1\to2$ followed by $2\to3$ maps
$(x_1,x_2,x_3)$ to $(x_1,x_2+x_1,x_3+x_2+x_1)$.
Because $E_{ji}^2=0$ and $1+1=0$,
$(I+E_{ji})^2=I$: every elementary CNOT is its own inverse.

\subsubsection{Composition and faithful representation}
For $A\in\GL(n,\Ftwo)$, define $C(A)\ket{x}=\ket{Ax}$ and extend by
linearity.  Invertibility permutes an orthonormal basis, so $C(A)$ is unitary.
On every computational-basis vector,
\begin{equation}
 C(A_2)C(A_1)\ket{x}=\ket{A_2A_1x},\qquad
 C(A_2)C(A_1)=C(A_2A_1),\qquad
 C(A)^\dagger=C(A^{-1}).
 \label{eq:cnot-composition-details}
\end{equation}
If $C(A)=C(B)$, applying the maps to all binary unit vectors gives $A=B$.
The representation is therefore faithful.  A time-ordered gate list
$T_1,\ldots,T_m$ implements binary matrix $T_m\cdots T_1$.
Different lists can have the same product, for instance by insertion of
$T^2=I$.  The logical map is unique; a physical synthesis is not.

\subsubsection{Generation and Gaussian elimination synthesis}
Gaussian elimination reduces an invertible binary matrix to $I$ using row
additions and row swaps.  Row scalings are unnecessary because the only
nonzero scalar in $\Ftwo$ is one.  A row swap itself uses three additions:
\[
 (a,b)\mapsto(a,a+b)\mapsto(b,a+b)\mapsto(b,a).
\]
Thus row additions alone generate $\GL(n,\Ftwo)$, including permutation
matrices.  In circuit language these three additions are the familiar
three-CNOT SWAP synthesis.

More explicitly, suppose Gaussian elimination records elementary
transformations $T_m\cdots T_1A=I$. Since each CNOT transformation is
self-inverse, $T_j^{-1}=T_j$, reversing the recorded operations gives
\[
 A=T_1\cdots T_m,
\]
and hence a CNOT synthesis of $C(A)$. A straightforward elimination
procedure uses $O(n^2)$ elementary row additions, including the three
additions for each row swap, giving an $O(n^2)$ CNOT construction.
The classical elimination requires $O(n^3)$ bit operations without
packing, so synthesis time and quantum gate count are distinct resources.
More refined all-to-all synthesis methods reduce the asymptotic CNOT
count to $O(n^2/\log n)$ \cite{PatelMarkovHayes08}.

\paragraph{Logical blocks and physical implementation.}
The block count $t$ counts logical CNOT maps $C(A_\ell)$ separated by
local one-qubit layers; it is not a hardware-depth measure.
A single logical map $A_\ell$ may require many CNOT gates, and its
physical depth depends on connectivity, native gates, and scheduling.
Our grouping preserves the original gate order and every intervening
one-qubit layer. Different CNOT decompositions of the same $A_\ell$
therefore realize the same ideal logical unitary $C(A_\ell)$, while
their circuit depth and noisy implementations may differ.

\subsubsection{How many logical routing choices are there?}
Count linearly independent columns successively.  After choosing $k$
independent columns their span contains $2^k$ vectors, so the next column
has $2^n-2^k$ choices.  Therefore
\begin{equation}
 |\GL(n,\Ftwo)|
 =\prod_{k=0}^{n-1}(2^n-2^k)
 =2^{n^2}\prod_{j=1}^{n}(1-2^{-j})
 =\Theta(2^{n^2}).
 \label{eq:gl-order-details}
\end{equation}
The last product converges to a positive constant, since
$\sum_{j\ge1}2^{-j}<\infty$ and the logarithms are summable after any
finite initial segment. Thus each specified matrix admits polynomial-size
synthesis, while exhaustive enumeration of the candidate routing space
is exponential. This separates the cost of implementing a chosen routing
from that of selecting one for a task.

\subsubsection{Pauli propagation}
To connect the routing coordinate to the tangent generators of
Section~\ref{sec:tangent}, we now follow its action on Pauli operators.

For binary Pauli labels define
$X^x=\bigotimes_q X_q^{x_q}$ and $Z^z=\bigotimes_q Z_q^{z_q}$.
For any computational-basis vector $\ket y$, direct action gives
\begin{align*}
 C(A)X^xC(A)^\dagger\ket y
 &=C(A)X^x\ket{A^{-1}y}\\
 &=C(A)\ket{A^{-1}y+x}
 =\ket{y+Ax}=X^{Ax}\ket y,\\
 C(A)Z^zC(A)^\dagger\ket y
 &=C(A)Z^z\ket{A^{-1}y}\\
 &=(-1)^{z^{\mathsf T}A^{-1}y}\ket y
 =(-1)^{(A^{-\mathsf T}z)^{\mathsf T}y}\ket y
 =Z^{A^{-\mathsf T}z}\ket y.
\end{align*}
Since this holds on every basis vector,
\begin{equation}
 C(A)X^xC(A)^\dagger=X^{Ax},\qquad
 C(A)Z^zC(A)^\dagger=Z^{A^{-\mathsf T}z}.
 \label{eq:binary-pauli-forward}
\end{equation}
Replacing $A$ by $A^{-1}$ gives the body-frame pullback:
\[
 C(A)^\dagger X^xC(A)=X^{A^{-1}x},\qquad
 C(A)^\dagger Z^zC(A)=Z^{A^{\mathsf T}z}.
\]
General Hermitian Pauli strings also carry a sign/phase convention, which
must be retained when computing signed derivatives.  These formulas explain
how a binary routing coordinate controls the propagated generator dictionary.

\paragraph{CZ routing coordinates.}
CZ-only blocks give a second binary routing coordinate. There are
$\binom n2$ CZ gate types, encoded by a symmetric zero-diagonal matrix
$\Gamma\in\Ftwo^{n\times n}$ through
$U_\Gamma=\prod_{i<j}\operatorname{CZ}_{ij}^{\Gamma_{ij}}$.
The gates commute and square to identity, so
\[
 U_{\Gamma_1}U_{\Gamma_2}=U_{\Gamma_1+\Gamma_2},\qquad
 G_{\rm CZ}\cong\Ftwo^{\binom n2},\qquad
 |G_{\rm CZ}|=2^{\binom n2},
\]
an Abelian elementary $2$-group, compared with
$|\GL(n,\Ftwo)|=2^{n^2-O(1)}$.
Conjugation by $\operatorname{CZ}_{ij}$ sends
$X_i\mapsto X_iZ_j$, $X_j\mapsto Z_iX_j$ and fixes $Z_i,Z_j$.
Thus $U_\Gamma$ propagates binary Pauli labels as
$(x,z)\mapsto(x,z+\Gamma x)$, with the Hermitian Pauli sign convention
retained for signed derivatives.

Across $S\mid\bar S$, write
\[
 \Gamma=\begin{pmatrix}\Gamma_{SS}&B\\B^{\mathsf T}&\Gamma_{\bar S\bar S}\end{pmatrix},
 \qquad B=\Gamma_{S,\bar S}.
\]
Within-side CZ gates are local to this cut. The remaining diagonal
operator has coefficient matrix $((-1)^{a^{\mathsf T}Bb})_{a,b}$ in the
orthonormal diagonal-operator bases $\{\ket a\bra a\}$ and $\{\ket b\bra b\}$.
Its distinct rows are the $2^{\rank_{\Ftwo}B}$ orthogonal binary
characters, giving
\begin{equation}
 \OSR_{S\mid\bar S}(U_\Gamma)=2^{\rank_{\Ftwo}B}.
 \label{eq:cz-osr}
\end{equation}
The symmetric CZ coordinate has one cross-adjacency block, whereas
the CNOT formula uses the two directed blocks $Q,R$ through
$\kappa_A(S)=\rank Q+\rank R$.

The identity
$\operatorname{CNOT}_{i\to j}=(I_i\otimes H_j)\operatorname{CZ}_{ij}(I_i\otimes H_j)$
relates the two universal gate libraries with arbitrary local gates.
Gate-by-gate conversion may add local Hadamard layers and change the
block count, but preserves polynomial-size descriptions.
Thus the discrete-routing viewpoint is not specific to CNOT blocks:
CZ blocks provide a simpler Abelian example in which a binary graph
coordinate controls both cross-cut capacity and the propagated tangent
dictionary.

\section{Reachability, capacity, and fidelity proofs}

The global restrictions in Section~\ref{sec:osr} follow from three
steps: identify reachable-set relations, compute the capacity of one
routing block across a cut, and propagate that capacity through a
multiblock circuit. The final rank--fidelity argument translates these
structural bounds into a task error.

\subsection{Reachable-set relations and the exact CNOT cut formula}
\label{app:osr-proof}

We first establish when local rotations make two routings equivalent.
The exact cut formula then provides a capacity invariant for
distinguishing their reachable sets.

\paragraph{Proof of Proposition~\ref{prop:single-block-images}.}
If the images intersect, local factors satisfy
\[
 [L_1C(A)L_0]=[K_1C(B)K_0],\qquad
 L_0,L_1,K_0,K_1\in\mathcal L_n.
\]
Rearranging gives $[C(B)]=[K_1^\dagger L_1C(A)L_0K_0^\dagger]$.
Closure of $\mathcal L_n$ under products and inverses then identifies the
two full images.  Conversely, this local equivalence immediately makes
the images equal.  Thus any nonempty intersection implies equality.
For the CNOT direction example, $H$ can be represented in $\SU(2)$ up to
global phase, consistent with the projective convention.

\paragraph{Rotation propagation and Hadamard equivalence.}
Conjugation by $C=\mathrm{CNOT}_{c\to q}$ fixes $Z_c$ and $X_q$ and sends
$X_c$ to $X_cX_q$ and $Z_q$ to $Z_cZ_q$.
The identity $C e^{-i\theta P/2}C^\dagger
=e^{-i\theta CPC^\dagger/2}$, with $C^\dagger=C$, proves
Eqs.~\eqref{eq:local-rotation-commutation}--\eqref{eq:rotation-propagation}.
The addition rule for same-axis rotations follows because their
generators commute.

Let $F=H^{\otimes n}$.  Its binary Fourier action gives
\[
 FC(A)F|x\rangle
 =2^{-n}\sum_{y,z\in\Ftwo^n}
 (-1)^{x^{\mathsf T}y+z^{\mathsf T}Ay}|z\rangle
 =|A^{-\mathsf T}x\rangle.
\]
The sum over $y$ vanishes unless $A^{\mathsf T}z=x$, proving
Eq.~\eqref{eq:hadamard-inverse-transpose}.
For an elementary CNOT matrix $E=I+e_qe_c^{\mathsf T}$,
$E^{-\mathsf T}=I+e_ce_q^{\mathsf T}$, which swaps control and target.
For a chronological word $A=E_m\cdots E_1$,
$A^{-\mathsf T}=E_m^{-\mathsf T}\cdots E_1^{-\mathsf T}$;
the chronological order is unchanged.  This differs from inversion,
which reverses the order of the self-inverse CNOT factors.

\paragraph{Overlapping but unequal multiblock images.}
For two qubits, the tuple $(I,I)$ generates exactly the local unitaries.
Let $C=C(A)=\mathrm{CNOT}_{1\to2}$.  The tuple $(A,A)$ evaluates to
$L_2CL_1CL_0$.  Setting $L_1=I$ and using $C^2=I$ produces every local
unitary.  Setting $L_2=L_0=I$ and
$L_1=I\otimes R_z(\phi)$ instead gives
\[
 C(I\otimes R_z(\phi))C=\exp(-i\phi Z_1Z_2/2).
\]
At $\phi=\pi/2$, this maps $|++\rangle$ to an entangled state (its
coefficient matrix has nonzero determinant), so it is not a product
unitary.  Hence $\mathcal M_{(I,I)}\subsetneq\mathcal M_{(A,A)}$.
Thus multiblock reachable sets can overlap without being equal, unlike
the single-block images of Proposition~\ref{prop:single-block-images}.

\paragraph{Proof of Theorem~\ref{thm:osr-formula}.}
Using the blocks $P,Q,R,T$ of Theorem~\ref{thm:osr-formula}, write
basis indices as $x=(u,v)$ with $u\in\Ftwo^{|S|}$ and
$v\in\Ftwo^{|\barS|}$.  Vectorize the operator and group the output and input
registers belonging to the same side of the physical cut:
\begin{equation}
\lvert C(A)\rangle\!\rangle
=
\sum_{u,v}
\ket{Pu+Qv,u}_{L}
\otimes
\ket{Ru+Tv,v}_{R}.
\label{eq:vectorized-A}
\end{equation}
The operator Schmidt rank of $C(A)$ equals the Schmidt rank of this vector
across $L\mid R$.

Apply on $L$ the reversible basis permutation
$\ket{a,u}\mapsto\ket{a+Pu,u}$ and on $R$ the permutation
$\ket{b,v}\mapsto\ket{b+Tv,v}$.  Both operations are local with respect to
$L\mid R$, so they preserve Schmidt rank and transform
\eqref{eq:vectorized-A} into
\[
\sum_{u,v}\ket{Qv,u}_{L}\otimes\ket{Ru,v}_{R}.
\]
After reordering registers within $L$ and within $R$, this vector factorizes
as
\begin{equation}
\left(\sum_u\ket{u}_{L_u}\ket{Ru}_{R_R}\right)
\otimes
\left(\sum_v\ket{Qv}_{L_Q}\ket{v}_{R_v}\right).
\label{eq:two-channel-factorization}
\end{equation}
For the first factor, group terms by $w=Ru$:
\[
\sum_u\ket{u}\ket{Ru}
=
\sum_{w\in\im R}
\left(\sum_{u:Ru=w}\ket{u}\right)\ket{w}.
\]
The preimage sets for distinct $w$ are disjoint, so the left vectors are
mutually orthogonal; the right computational-basis vectors are also
orthogonal.  The Schmidt rank is therefore
$|\im R|=2^{\rank_{\Ftwo}R}$.  The same argument gives
$2^{\rank_{\Ftwo}Q}$ for the second factor.  Schmidt ranks multiply under
tensor product, proving
\[
\OSR_{S\mid\barS}(C(A))
=2^{\rank_{\Ftwo}Q+\rank_{\Ftwo}R}.
\]

The same vector is obtained, up to normalization, by applying $C(A)$
to one half of $n$ Bell pairs. In this Choi-state description, the
grouped cut is $(S_{\rm out}S_{\rm ref})\mid
(\barS_{\rm out}\barS_{\rm ref})$. Applying Hadamards to all output
qubits gives the graph state with amplitudes
$2^{-n}(-1)^{y^{\mathsf T}Ax}$ on $|y\rangle_{\rm out}|x\rangle_{\rm ref}$.
Its output--reference adjacency block is $A$, and its cross-adjacency
matrix for the grouped physical cut is
$\left(\begin{smallmatrix}0&Q\\R^{\mathsf T}&0\end{smallmatrix}\right)$.
The graph-state cut-rank relation~\cite{Hein2004GraphStates} therefore
also yields $2^{\rank_{\Ftwo}Q+\rank_{\Ftwo}R}$. This application of
the graph-state relation distinguishes the two directed blocks of $A$
within an undirected graph on output and reference qubits. The direct
factorization above keeps the physical qubit labels fixed and applies
whether or not the diagonal blocks $P$ and $T$ are invertible.

\paragraph{Cost of a specified-cut calculation.}
Let $k=|S|$ and $s=\min\{k,n-k\}$.  Each off-diagonal block has
$k(n-k)$ entries and rank at most $s$.  Ordinary binary Gaussian
elimination uses at most $s$ pivot stages, each requiring
$O(k(n-k))$ bit operations, yielding Eq.~\eqref{eq:cut-rank-cost}.
Extracting the two blocks costs $O(k(n-k))$ when the binary entries of
$A$ are already accessible; storing the full matrix uses $n^2$ bits.
The integer OSR can be reported as $2^{\kappa_A(S)}$ or in binary with
at most $2s+1$ bits.

When the input is a gate list, first construct $A$.
For a list of $g$ CNOTs, start from $I_n$ and update the target row by
XOR with the control row for each gate.  This uses $O(n^2+gn)$ bit
operations and $O(n^2)$ storage. Input processing therefore scales with
the supplied list length $g$, including when that list is exponentially
long in $n$. A full labeled profile also has its own output cost:
enumerating all nontrivial cuts up to complementation requires
$2^{n-1}-1$ entries.

\subsection{OSR submultiplicativity and cumulative budgets}
\label{app:cumulative-proof}

The single-block formula bounds a full circuit through
submultiplicativity. Applying the same argument to individual gates
then relates the cumulative budget to physical depth on a chain.

If $X=\sum_{a=1}^{r}L_a\otimes R_a$ and
$Y=\sum_{b=1}^{s}\widetilde L_b\otimes\widetilde R_b$, then
\[
XY=\sum_{a,b}(L_a\widetilde L_b)\otimes(R_a\widetilde R_b),
\]
so $\OSR(XY)\le rs$.  Iterating this inequality through
\eqref{eq:alternating}, using OSR one for all local layers and
Theorem~\ref{thm:osr-formula} for every CNOT block, proves
\eqref{eq:cumulative}.  The state-rank statement follows similarly by
applying each operator Schmidt term to each state Schmidt term.

Together with the rank--fidelity inequality proved in
Appendix~\ref{app:rank-fidelity-proof}, this gives the loss bound for
a target maximally entangled on $m$ qubit pairs: any output with
$K_{\bm A}(S)<m$ obeys
\[
\cL\ge1-2^{K_{\bm A}(S)-m}.
\]
The bound supplies a necessary cut budget. Attaining it also depends on
the intermediate local layers: routing blocks can cancel, and local
angles can suppress channels. A target with demands on several cuts
requires those demands to be met together.

\paragraph{Proof of Corollary~\ref{cor:shallow-chain}.}
In a specified CNOT implementation, a gate contained within one side of a
cut has OSR one, while a crossing CNOT has OSR two.
Submultiplicativity therefore gives $\OSR(U)\le 2^{g_S}$.
Equivalently, for each routing block $\kappa_{A_\ell}(S)$ is no larger
than the crossing CNOT count in that block, so $K_{\bm A}(S)\le g_S$.
For a nearest-neighbor chain, a consecutive-site cut has one boundary
edge, usable at most once in each layer of disjoint gates.
Thus $g_S\le d$.  A product input has Schmidt rank one, giving the
output rank bound.  Entropy is at most the base-two logarithm of Schmidt
rank, proving Eq.~\eqref{eq:shallow-state}.

A pure target consisting of $m$ cross-cut Bell pairs (and fixed local
pure states on any remaining qubits) has $2^m$ equal squared Schmidt
coefficients.  The rank--fidelity inequality of
Appendix~\ref{app:rank-fidelity-proof} therefore bounds its squared
overlap with an output of rank at most $2^{g_S}$ by
$F\le\min\{1,2^{g_S-m}\}\le\min\{1,2^{d-m}\}$.
This proves the claimed preparation obstruction.
On an all-to-all device, disjoint crossing CNOTs can instead act in one
layer, and each maps $|+\rangle|0\rangle$ to a Bell pair.
Both constructions refer to the same fixed labeled bipartition.

For a product input, the chain bound also applies at every intermediate
step.  If $d=O(\log n)$, all consecutive-cut Schmidt ranks remain
polynomially bounded, permitting matrix-product-state simulation with
polynomial resources under standard efficiently evaluable gate and
finite-precision assumptions \cite{Vidal2003,MarkovShi2008}.
This simulation consequence uses the consecutive-cut rank bounds of
the nearest-neighbor chain.

\subsection{An explicit multiblock operator with OSR three}
\label{app:osr-three}

The cumulative budget bounds the final OSR, while intermediate local
rotations can change its exact value. In particular, a product of CNOT
blocks separated by non-Clifford local gates can have OSR three, even
though each CNOT block has power-of-two OSR. A three-qubit example across
$1\mid23$ is
\begin{equation}
U'=
\operatorname{CNOT}_{1\to2}
\,[R_x(\pi/3)_1]\,
\operatorname{CNOT}_{1\to3}
\,[R_x(\pi/3)_1]\,
\operatorname{CNOT}_{1\to2}.
\label{eq:osr-three-circuit}
\end{equation}
Writing the right factors on qubits $2,3$ and suppressing identity factors
on untouched qubits, direct multiplication gives
\begin{equation}
\begin{split}
 U'={}&I_1\otimes\frac{I+X_3}{4}
      +Z_1\otimes\frac{I-X_3}{2}\\
     &-i\frac{\sqrt3}{4}X_1\otimes X_2(I+X_3).
\end{split}
\label{eq:osr-three-decomposition}
\end{equation}
The three right factors, including their scalar coefficients, are pairwise
Hilbert--Schmidt orthogonal: $(I+X_3)(I-X_3)=0$, and the mixed traces with
$X_2(I+X_3)$ vanish because $\Tr X_2=0$.  Their squared norms are
$1/2,2,3/2$, while each left Pauli has squared norm $2$.
Thus the nonzero operator Schmidt singular values are $2,\sqrt3,1$, and
$\OSR_{1\mid23}(U')=3$.  Direct realignment of this displayed circuit
independently gives these three nonzero singular values.

Three-qubit unitaries of multipartite Schmidt rank three and their
CNOT-based implementations have been studied previously
\cite{Song2021}. Equation~\eqref{eq:osr-three-circuit}
establishes the chosen-cut value $\OSR_{1\mid23}=3$ directly within the
alternating circuit representation used here.

\subsection{Rank--fidelity inequality}
\label{app:rank-fidelity-proof}

To turn a Schmidt-rank ceiling into the Bell-task obstruction, we bound
the overlap of a rank-limited state with the maximally entangled target.

Let the Schmidt coefficients of a normalized pure state $\ket{\psi}$ across the
target cut be $\lambda_1,\ldots,\lambda_R$.  After optimizing local Schmidt
bases, Cauchy--Schwarz gives
\[
\abs{\braket{\Phi_m}{\psi}}
\le 2^{-m/2}\sum_{j=1}^{R}\lambda_j
\le 2^{-m/2}\sqrt{R\sum_j\lambda_j^2}
=\sqrt{\frac{R}{2^m}}.
\]
Squaring proves Lemma~\ref{lem:rank-fidelity}.  For $R\le2^m$, equality holds when the
nonzero Schmidt coefficients are uniform and aligned with $R$ target Schmidt
vectors.  Applying this bound to the normalized pure product input and
interior minimizer of Theorem~\ref{thm:bell-floor} gives its stated
obstruction; the nonzero-ambient conclusion uses purity explicitly.

\section{Routing dictionaries and tangent geometry}

Sections~\ref{sec:tangent} and~\ref{sec:trainability} distinguish the
directions exposed by routing from the capacity of the reachable family.
We first illustrate that distinction with Pauli dictionaries, then derive
the gradient identities, stability bounds, and local fidelity response
used in the main text.

\subsection{A same-profile, different-dictionary counterexample}
\label{app:profile-counterexample}

The cut profile of Section~\ref{sec:osr} records cross-cut capacity,
whereas local accessibility depends on the propagated generators.
The examples below distinguish these two kinds of information: first,
compressing a profile loses the placement of capacity; then, even a
complete labeled profile can leave the Pauli dictionary undetermined.
The underlying bi-cutrank uses both off-diagonal blocks of a directed
invertible binary map, as distinct from graph cut-rank for an undirected
graph \cite{KanteRao2013}.

\paragraph{Maximum and unlabeled-profile counterexamples.}
For three qubits in cut order
$(1\mid23,2\mid13,12\mid3)$, a single
$\operatorname{CNOT}_{1\to2}$ has profile $(1,1,0)$, whereas the sequence
$\operatorname{CNOT}_{1\to2}$ followed by
$\operatorname{CNOT}_{1\to3}$ has profile $(1,1,1)$.  Both have maximum OSR
two, so the maximum alone loses capacity placement.  For four qubits in cut
order $(1,2,12,3,13,23,123)$, the matrices
\[
A_0=
\begin{pmatrix}
0&0&0&1\\0&0&1&0\\0&1&0&0\\1&0&0&0
\end{pmatrix},
\quad
B_0=
\begin{pmatrix}
0&0&1&0\\0&0&0&1\\1&0&0&0\\0&1&0&1
\end{pmatrix}
\]
have profiles $(2,2,4,2,4,0,2)$ and $(2,2,4,2,0,4,2)$.
Their sorted multisets are both $(0,2,2,2,2,4,4)$, but the strongest
capacities occur on different labeled cuts.

\paragraph{Identical profile, different Pauli dictionary.}
Let $\mathcal D_A^{\rm pull}$ denote the real span of the local Pauli
operators pulled back through the single routing block $C(A)$:
\[
 \mathcal D_A^{\rm pull}
 =\Span_{\mathbb R}\{C(A)^\dagger P_qC(A):
      P\in\{X,Y,Z\},\ q=1,\ldots,n\}.
\]
For four qubits, consider
\begin{equation}
A=
\begin{pmatrix}
0&0&0&1\\
0&0&1&0\\
0&1&0&0\\
1&0&0&0
\end{pmatrix},
\qquad
B=
\begin{pmatrix}
0&0&0&1\\
0&0&1&0\\
0&1&0&0\\
1&0&0&1
\end{pmatrix}.
\label{eq:profile-counterexample-matrices}
\end{equation}
Using the same seven nonredundant labeled cuts as above gives
\[
q(A)=q(B)=(2,2,4,2,4,0,2).
\]
Nevertheless,
\[
C(A)^\dagger X_1C(A)=X_4,\qquad
C(B)^\dagger X_1C(B)=X_1X_4,
\]
and direct comparison of all 12 local Pauli images gives
$X_4\in\mathcal D_A^{\rm pull}\setminus\mathcal D_B^{\rm pull}$ and
$X_1X_4\in\mathcal D_B^{\rm pull}\setminus\mathcal D_A^{\rm pull}$.

The reason is that $\kappa_A(S)$ records the dimensions of off-diagonal
binary images, while Pauli propagation depends on the images of
individual coordinate vectors. Concretely, for any routing matrix $A$,
CNOT conjugation acts on binary Pauli
labels $(x,z)\in\Ftwo^{2n}$ as
\begin{equation}
(x,z)\stackrel{C(A)(\cdot)C(A)^\dagger}{\longmapsto}
(Ax,A^{-\mathsf T}z).
\label{eq:symplectic-cnot}
\end{equation}
Thus two matrices can agree on all cut ranks yet disagree on the columns of
$A^{-1}$ or $A^{\mathsf T}$. Writing $P(x,z)$ for the Hermitian Pauli
string with binary labels $(x,z)$, up to an overall sign, the pulled-back
dictionary is
\begin{equation}
\mathcal D_A^{\rm pull}=
\Span_{\mathbb R}\{
P(A^{-1}e_q,0),P(0,A^{\mathsf T}e_q),P(A^{-1}e_q,A^{\mathsf T}e_q):q=1,\ldots,n
\}.
\label{eq:dictionary-appendix}
\end{equation}
Thus identical cut profiles can coexist with different propagated
directions, which are the inputs to the local accessibility analysis.

\subsection{From measurements to tangent geometry}
\label{app:first-principles}
Section~\ref{sec:tangent} evaluates a parameter derivative both through
shifted expectation values and through its overlap with the ambient
gradient. Here we derive the two descriptions of that same derivative,
then relate their normalization to the state-space metric. We use the loss
$\cL(\theta)=\Tr(O U(\theta)\rho U(\theta)^\dagger)$ with fixed density
operator $\rho$ and fixed Hermitian observable $O$.  Infidelity to a fixed
pure target differs from such an expectation only by an additive constant.

\subsubsection{Expectation measurements and the parameter-shift rule}
For one independently parameterized Pauli rotation, factor the circuit as
\[
 U(\theta)=U_{\rm after}e^{-i\theta P/2}U_{\rm before},\qquad P^2=I,
\]
where the surrounding gates do not depend on this parameter.  Set
$\rho_{\rm pre}=U_{\rm before}\rho U_{\rm before}^\dagger$ and
$O_{\rm post}=U_{\rm after}^\dagger O U_{\rm after}$.
The Pauli identity implies
\[
 e^{-i\theta P/2}=\cos(\theta/2)I-i\sin(\theta/2)P.
\]
Substitution into
$\Tr(O_{\rm post}e^{-i\theta P/2}\rho_{\rm pre}e^{i\theta P/2})$
gives $a+b\cos\theta+c\sin\theta$, where
\begin{align*}
 a&=\tfrac12\Tr[O_{\rm post}(\rho_{\rm pre}+P\rho_{\rm pre}P)],\\
 b&=\tfrac12\Tr[O_{\rm post}(\rho_{\rm pre}-P\rho_{\rm pre}P)],\\
 c&=\tfrac{i}{2}\Tr[O_{\rm post}(\rho_{\rm pre}P-P\rho_{\rm pre})].
\end{align*}
The coefficients are real.  Differentiating the sinusoid, or taking the
half-difference of two shifted sinusoids, proves the standard identity
\begin{equation}
 \partial_\theta\cL
 =\frac12\left[\cL(\theta+\pi/2)-\cL(\theta-\pi/2)\right].
 \label{eq:parameter-shift-details}
\end{equation}
This recovers the standard Pauli-rotation parameter-shift rule of
Mitarai et al.~\cite{Mitarai2018} and Schuld et al.~\cite{Schuld2019}.

Experimentally one estimates the two expectation values using repeated
measurements and takes their half-difference.  If independent sample means
have variances $\sigma_+^2/N_+$ and $\sigma_-^2/N_-$, the derivative
estimator is unbiased with variance
$\tfrac14(\sigma_+^2/N_++\sigma_-^2/N_-)$. The two shifted expectations
are therefore estimated from repeated shots; a Pauli-sum observable can
also require several measurement settings and grouping choices.

The formula applies to one independently parameterized gate occurrence.
For a shared parameter, sum the occurrence-wise derivatives by the chain
rule. For a nonlinear function of measured expectations, apply the
classical chain rule to the expectation derivatives.

\subsubsection{From a unitary velocity to a Hermitian generator}
Differentiate $U^\dagger U=I$ to obtain
$\dot U^\dagger U+U^\dagger\dot U=0$.  Hence
\begin{equation}
 H=iU^\dagger\dot U=H^\dagger,\qquad \dot U=-iUH.
 \label{eq:body-generator-details}
\end{equation}
This is left translation to the identity, expressed in Hermitian rather
than anti-Hermitian coordinates.

For the single gate factored above, direct multiplication yields
\begin{equation}
 H=\tfrac12 U_{\rm before}^\dagger P U_{\rm before}.
 \label{eq:body-prefix-details}
\end{equation}
It is the \emph{prefix} (gates before the parameter) that enters this
body-frame generator.  The output-frame generator is instead
$\widetilde H=i\dot U U^\dagger
=\tfrac12U_{\rm after}P U_{\rm after}^\dagger=UHU^\dagger$,
which uses the suffix.  Keeping the frames distinct avoids reversing Pauli
propagation in a multilayer circuit.

The global-phase convention determines the trace of $H$. If $U$ lies
in $\SU(d)$, Jacobi's determinant formula gives
$\Tr(U^\dagger\dot U)=\partial_\theta\log\det U=0$,
so $\Tr H=0$. The same conclusion holds for any constant-determinant
curve. Our fixed-routing circuits with special-unitary local factors
have constant determinant even when a CNOT permutation has determinant
$-1$. For a general $\mathrm U(d)$ curve with varying global phase,
$H$ need not be traceless; its identity component is removed in the
projective description and has no effect on the expectation loss.

\subsubsection{The intrinsic gradient and the measured derivative}
Let $O_U=U^\dagger O U$.  Along velocity $-iUH$,
\begin{align}
 d\cL_U[-iUH]
 &=-i\Tr(O_U[H,\rho])\nonumber\\
 &=i\Tr([O_U,\rho]H)
 =\inner{G(U)}{H},\qquad G(U)=i[O_U,\rho].
 \label{eq:gradient-sign-details}
\end{align}
The second line follows because
$\Tr(O_U[H,\rho])=-\Tr([O_U,\rho]H)$.
The commutator is anti-Hermitian and traceless, so $G$ is Hermitian and
traceless.  For the unnormalized Hilbert--Schmidt Riemannian metric,
$\operatorname{grad}\cL(U)=-iUG(U)$.  Thus $G$ is the Hermitian
representation of the ascent gradient, and the direction represented by
$-G$ is descent \cite{Absil2008,Stokes2020}.

The parameter-shift derivative and \eqref{eq:gradient-sign-details} are
two evaluations of the same scalar.  The first is a measurement protocol;
the second separates the objective-dependent gradient $G(U)$ from the
circuit-dependent direction $H$.  Estimating derivatives does not require
tomographic reconstruction of the full ambient operator.

For several parameters, the same calculation gives
$g_j=\inner{G}{H_j}$. Proposition~\ref{prop:gram} then converts these
frame inner products into the accessible squared gradient norm
$g^{\mathsf T}M^+g$.

\subsubsection{Pure-state variance and metric normalization}
The operator and state-space gradients describe the same loss with
different metrics. Their relation for a pure input fixes the
normalization used in the numerical diagnostics.
For $\rho=\ket{\psi_0}\bra{\psi_0}$, write
$a=\bra{\psi_0}O_U\ket{\psi_0}$ and
$\ket{w}=(O_U-a)\ket{\psi_0}$, so $\braket{\psi_0}{w}=0$.
Then
\[
 [O_U,\rho]=\ket{w}\bra{\psi_0}-\ket{\psi_0}\bra{w},
 \qquad
 \norm{G}_{\rm HS}^2=2\braket{w}{w}
 =2\operatorname{Var}_{\psi_0}(O_U).
\]
The factor two follows from the two orthogonal off-diagonal components.
For the horizontal state metric the gradient is
$2(O-\langle O\rangle)\ket{\psi}$, whose squared norm is
$4\operatorname{Var}_{\psi}(O)$. Thus the chosen metric determines
which normalization to use.
Zero variance means an eigenstate, which need not be the minimizing
eigenstate.  A bounded observable supplies an upper bound on the gradient,
not a uniform positive lower bound during training.

\subsection{Bell-state metric consistency}
\label{app:numerics}

The Bell calculation in Fig.~\ref{fig:bell-floor} compares the ambient and
accessible gradient norms using the same pure-state horizontal metric.
We therefore evaluate the deterministic $(m,k)$ constructions in this
metric. With $O=-\ket{\Phi_m}\bra{\Phi_m}$, the calculation uses
$G_{\rm st}=2(O-\langle O\rangle)\ket\psi$ and horizontal tangents $v_j$.
Both displayed norms are therefore state-space quantities:
\[
 \|G_{\rm st}\|^2=4F(1-F),\qquad
 \|\Pi_{\cT^{\rm st}}G_{\rm st}\|^2
 =g^{\mathsf T}(M^{\rm st})^+g.
\]
Direct statevector derivatives and an independent SVD projection onto
the tangent range give the same accessible-gradient norm.
The state Schmidt rank is computed from the reshaped statevector, and
the capacity-saturating constructions attain the bound in
Eq.~\eqref{eq:bell-floor} to machine precision.

\subsection{Perturbative stability of near-stagnation}
\label{app:near-stagnation}

The local bound in Section~\ref{sec:obstruction} asks how much visible
gradient can appear when both the state and its tangent space change.
We first bound the coverage ratio in terms of projector and gradient
perturbations, then give a singular-value condition that controls the
projector perturbation.

\begin{proposition}[Perturbative near-stagnation]
\label{prop:robustness}
Let $P$ and $P'$ be orthogonal tangent-space projectors in the same Hilbert
inner product.  Suppose
\[
 PG=0,\quad \norm{G}\ge g_0>0,\quad
 \norm{P'-P}_{\rm op}\le\delta_P,\quad
 \norm{G'-G}\le\delta_G.
\]
Then Eq.~\eqref{eq:robust-projection} holds.  If
$\delta_G<\norm{G}$, then
\begin{equation}
 \frac{\norm{P'G'}}{\norm{G'}}
 \le
 \frac{\delta_P\norm{G}+\delta_G}{\norm{G}-\delta_G}.
 \label{eq:robust-ratio}
\end{equation}
\end{proposition}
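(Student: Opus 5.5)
The plan is to establish the vector inequality in Eq.~\eqref{eq:robust-projection} first, and then divide it by a lower bound on $\norm{G'}$. For the first part, insert and subtract $G$ and use the hypothesis $PG=0$:
\[
 P'G'=P'(G'-G)+(P'-P)G.
\]
Taking norms and applying the triangle inequality gives $\norm{P'G'}\le\norm{P'(G'-G)}+\norm{(P'-P)G}$. Because $P'$ is an orthogonal projector, $\norm{P'}_{\rm op}\le1$, so the first term is at most $\norm{G'-G}\le\delta_G$. The second term is at most $\norm{P'-P}_{\rm op}\norm{G}\le\delta_P\norm{G}$. This recovers Eq.~\eqref{eq:robust-projection} in the form
\[
 \norm{P'G'}\le\delta_G+\delta_P\norm{G}.
\]

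For the ratio, apply the reverse triangle inequality: $\norm{G'}\ge\norm{G}-\norm{G'-G}\ge\norm{G}-\delta_G$. Under the hypothesis $\delta_G<\norm{G}$ this lower bound is strictly positive, so $G'\ne0$ and the coverage ratio is well defined. Dividing the numerator bound by this denominator bound yields Eq.~\eqref{eq:robust-ratio}.

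The hypothesis $\norm{G}\ge g_0$ is not needed in this argument beyond ensuring $G\ne0$. Its role is to make the bound uniform: the right-hand side can be rewritten as $(\delta_P+\delta_G/\norm{G})/(1-\delta_G/\norm{G})$, which is at most $(\delta_P+\delta_G/g_0)/(1-\delta_G/g_0)$ when $\delta_G<g_0$. I would add this as a remark after the main inequality.

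There is no real obstacle in the argument itself. The one point that needs care is that $P$ and $P'$ must be orthogonal projectors with respect to the same inner product. This is what justifies $\norm{P'}_{\rm op}\le1$, and it is why the statement requires both tangent spaces to be taken in a common metric, for example the state metric at nearby points identified through a fixed ambient Hilbert space. If $P'$ were not orthogonal, the first term would acquire an extra factor of $\norm{P'}_{\rm op}$.
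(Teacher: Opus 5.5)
Your proof is correct and matches the paper's argument: it uses the same decomposition $P'G'=P'(G'-G)+(P'-P)G$ via $PG=0$, the bound $\norm{P'}_{\rm op}\le1$ with the triangle inequality, and the reverse triangle inequality $\norm{G'}\ge\norm{G}-\delta_G$ for the denominator. Your remark is also accurate: $g_0$ only makes the bound uniform, because the right-hand side of Eq.~\eqref{eq:robust-ratio} decreases in $\norm{G}$ once $\delta_G<g_0$.
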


\begin{proof}
Since $PG=0$,
$P'G'=P'(G'-G)+(P'-P)G$.  The first bound follows from
$\norm{P'}_{\rm op}\le1$ and the triangle inequality.  The denominator in
Eq.~\eqref{eq:robust-ratio} follows from
$\norm{G'}\ge\norm{G}-\delta_G$.
\end{proof}

To obtain the projector bound from tangent generators, let the columns of $V$ and
$V'=V+E$ be tangent generators.  If both matrices have rank $r$,
$\sigma_r(V)\ge\gamma>0$, and
$\eta=\norm{E}_{\rm op}<\gamma$, pseudoinverse perturbation gives the
conservative bound
\begin{equation}
 \norm{\Pi_{\im V'}-\Pi_{\im V}}_{\rm op}
 \le \frac{\eta}{\gamma-\eta}+\frac{\eta}{\gamma}
 \le \frac{2\eta}{\gamma-\eta}.
 \label{eq:projector-gap}
\end{equation}
Thus a stationary point with a locally constant tangent rank and a positive
minimum nonzero singular-value gap has a neighborhood in which
$\norm{G}$ remains bounded below and
$\norm{\Pi_{\cT}G}/\norm{G}$ remains arbitrarily small. The resulting
neighborhood has small accessible gradients; exact stationarity is
required only at its reference point. Constant rank and a positive gap
are essential here: when the rank changes or the gap closes, an
arbitrarily small generator perturbation can introduce a finite new
subspace.

\subsection{Local fidelity expansion and cross-Gram response}
\label{app:frame-details}

The local target-overlap analysis in Section~\ref{sec:trainability}
uses fidelity to compare the displacements generated by two routing
frames. We derive its quadratic expansion and cross-Gram response,
then relate the matched-frame Hessian to the CNOT-origin spectrum.

The origin of each circuit in Theorem~\ref{thm:origin-frame} represents
$\ket{0^n}$.  Choose smooth local phases so first derivatives are horizontal,
and write the first-order state displacements as $u=V_{\bm B}\theta$ and
$w=V_{\bm A}\theta^\star$. Normalization fixes the real reference-state
component of each second-order displacement, giving
\begin{align*}
 \langle\psi_{\bm B}(\theta)|\psi_{\bm A}(\theta^\star)\rangle
 ={}&1+\langle u|w\rangle-\tfrac12\norm u^2-\tfrac12\norm w^2
       +i\varphi^{(2)}\\
 &+O\bigl((\norm\theta+\norm{\theta^\star})^3\bigr),
\end{align*}
where $\varphi^{(2)}$ is a real second-order phase term.
Taking the squared modulus removes this phase to second order, yielding
\[
 1-|\langle\psi_{\bm B}(\theta)|\psi_{\bm A}(\theta^\star)\rangle|^2
 =\norm{u-w}_{\mathbb R}^2
   +O\bigl((\norm{\theta}+\norm{\theta^\star})^3\bigr).
\]
With adjoints and norms taken in the real state-space metric, the
quadratic surrogate expands as
\begin{align*}
 \cL_{\bm B}^{(2)}(\theta)
 &=\norm{V_{\bm B}\theta-V_{\bm A}\theta^\star}_{\mathbb R}^2\\
 &=\theta^{\mathsf T}M_{\bm B}^{\rm st}\theta
   -2\theta^{\mathsf T}K_{\bm B,\bm A}\theta^\star
   +(\theta^\star)^{\mathsf T}M_{\bm A}^{\rm st}\theta^\star,
\end{align*}
using $M_{\bm B}^{\rm st}=V_{\bm B}^*V_{\bm B}$ and
$K_{\bm B,\bm A}=V_{\bm B}^*V_{\bm A}$.
Differentiating this homogeneous quadratic polynomial gives
\[
 \nabla_\theta\cL_{\bm B}^{(2)}
 =2M_{\bm B}^{\rm st}\theta-2K_{\bm B,\bm A}\theta^\star.
\]
For matched routings, $\bm B=\bm A$ implies
$K_{\bm A,\bm A}=M_{\bm A}^{\rm st}$ and hence
\[
 \nabla_\theta^2\cL_{\bm A}^{(2)}=2M_{\bm A}^{\rm st}.
\]
Theorem~\ref{thm:origin-frame} places the positive Gram eigenvalues in
$[1/4,\ell/4]$, so the positive quadratic Hessian eigenvalues lie in
$[1/2,\ell/2]$. For the quadratic
surrogate, conditioning is therefore certified on the identifiable
quotient, with null directions retained in the full parameter space.
Smooth Taylor remainders exist for each fixed circuit, with constants
that may depend on the circuit and system size.

\paragraph{Extension to Clifford suffixes.}
The same origin calculation extends to Clifford suffixes and Pauli slots
provided the complete origin circuit still prepares $\ket{0^n}$ up to
global phase.  Write $Q=\sigma i^{x\cdot z}X^xZ^z$ for a Hermitian Pauli.
On $\ket{0^n}$ it produces a computational-basis label and a phase.
Under this reference-state condition, horizontalization removes $x=0$;
the remaining real tangents are classified
by $(x,\text{real/imaginary quadrature})$ and an orientation sign.
Equal labels in the same quadrature form rank-one Gram blocks.
The CNOT-only $R_zR_xR_z$ theorem is the special case with only the
imaginary quadrature and unit orientation for the $X$ slots.
For a general stabilizer reference $\ket{\psi_{\rm ref}}$, the tangent is
$-\tfrac{i}{2}(I-\ket{\psi_{\rm ref}}\bra{\psi_{\rm ref}})Q
\ket{\psi_{\rm ref}}$; the binary condition $x=0$ alone does not determine
whether it vanishes.

\section{Numerical protocols and statistical checks}

The experiments in Section~\ref{sec:repair} test same-point recovery,
geometric selection, and routing-dependent near-stagnation. The following
subsections give their protocols and supporting statistics in the same
order: the controlled Bell test, the connected Heisenberg task and its
complementary TFIM selector control, then the TFIM routing ensemble.

\subsection{Controlled multiblock protocol and hierarchical statistics}
\label{app:statistics}

The controlled test in Section~\ref{sec:controlled} compares candidates
at a common trained point, so differences in subsequent descent can be
related to the directions they add. Here we specify the candidate pools
and summarize score--descent correlations and selector performance
across instances (Fig.~\ref{fig:multiblock}).

\subsubsection{Protocol}

The controlled test uses $(n,t)=(4,2),(6,3),(8,4)$ with five fixed
seeds per configuration. Targets contain $m=n/2$ Bell pairs, non-Clifford
local layers, and within-side routing. The old ansatz has no cross-cut CNOT
and reaches product-state infidelity floors $0.75$, $0.875$, and $0.9375$;
its state distances from $\ket{0^n}$ lie in $[0.729,1.521]$.
Each instance has 12 nonidentity candidates and an identity control.
Every gadget adds eight CNOTs and $3n$ angles, with 50 Adam steps.
The target routing is absent from the pool, and every candidate satisfies
$\kappa_B(S)<m$ on the target cut. Comparisons include identity,
random, OSR-only, candidate-only, incremental, and retrospective
finite-pool oracle selection.

The identity control uses $B=I$ and still adds the $3n$ local angles;
cancelling CNOT pairs retain the same eight gate slots. The Bell OSR-only
selector maximizes the target-informed score
\[
 \sum_{S\in\mathcal S_{\rm task}}
 \min\{\kappa_B(S),d_S\}(1+0.1d_S),
\]
where $d_S$ is the rounded binary logarithm of the target Schmidt rank
across $S$, computed with relative singular-value threshold $10^{-10}$.
The set $\mathcal S_{\rm task}$ contains up to ten nonredundant cuts,
prioritizing larger $d_S$ and then more balanced cuts, with the target cut
always included. OSR-only ties are resolved by the lowest candidate index.

\subsubsection{Per-instance correlation summaries}

For each of the 15 instances, all 12 nonidentity candidates share the same
target, old point, ambient gradient, and loss floor.  They are therefore
repeated measurements within an instance; the statistical unit is the instance.
Spearman correlations are computed separately within each instance.
Uncertainty for pooled standardized slopes is obtained by resampling complete
instances with replacement.

Figure~\ref{fig:multiblock-rank-detail} retains the candidate-level
20-step comparison, complementing the five-step per-instance summary
in Fig.~\ref{fig:multiblock}(c).

\begin{figure}[!htbp]
\centering
\includegraphics[width=.68\linewidth]{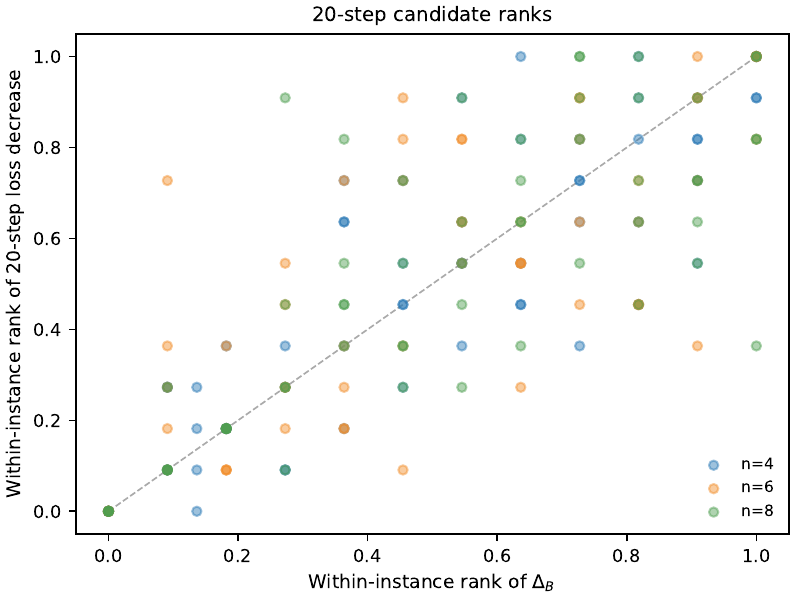}
\caption{\textbf{Candidate-level score--descent ranks at 20 steps.}
The 180 candidate--instance pairs comprise 12 nonidentity candidates
in each of 15 instances. Both ranks are computed within each instance
and normalized to $[0,1]$, using average ranks for ties; coincident points
overlap. The dashed diagonal indicates identical rankings. Correlations
are evaluated separately within each instance, as summarized below.}
\label{fig:multiblock-rank-detail}
\end{figure}

The table reports the median, interquartile range (IQR), and full range
of the per-instance correlations at each training horizon.

\begin{center}
\begin{tabular}{lrrrr}
\toprule
score & steps & median $\rho$ & IQR & min--max\\
\midrule
$\Delta_B^{\rm st}$ & 1 & .916 & [.804,.951] & [.517,.986]\\
$\Delta_B^{\rm st}$ & 5 & .902 & [.857,.941] & [.797,1.000]\\
$\Delta_B^{\rm st}$ & 20 & .839 & [.773,.909] & [.364,.958]\\
$\Delta_B^{\rm st}$ & 50 & .534 & [.343,.598] & [.147,.923]\\
candidate-only & 1 & .921 & [.804,.931] & [.510,.986]\\
candidate-only & 5 & .902 & [.832,.941] & [.755,1.000]\\
candidate-only & 20 & .839 & [.759,.906] & [.364,.937]\\
candidate-only & 50 & .503 & [.346,.583] & [.147,.944]\\
OSR-only & 1 & .062 & [-.063,.404] & [-.446,.808]\\
OSR-only & 5 & .107 & [-.003,.372] & [-.187,.895]\\
OSR-only & 20 & .158 & [-.016,.391] & [-.235,.662]\\
OSR-only & 50 & .222 & [.112,.390] & [-.058,.764]\\
\bottomrule
\end{tabular}
\end{center}

Within-instance rank-standardized slopes for $\Delta_B^{\rm st}$ at
$1,5,20,50$ steps are $.858,.898,.809,.489$, with instance-cluster bootstrap
95\% intervals $[.789,.916]$, $[.868,.927]$, $[.722,.879]$, and
$[.379,.598]$. The corresponding candidate-only slopes are nearly identical.
These intervals quantify variation across the 15 constructed instances.

\subsubsection{Paired 50-step comparisons}

Method comparisons use the 50-step loss decrease and are paired by
instance. W/T/L denotes wins, ties, and losses for the first selector
relative to the second. The random comparator in this table is one fixed
draw from each instance's nonidentity candidate pool.

\begin{center}
\begin{tabular}{lrrr}
\toprule
contrast & mean difference & bootstrap 95\% interval & W/T/L\\
\midrule
incremental $-$ random & .1017 & [.035,.177] & 13/0/2\\
incremental $-$ OSR-only & .0175 & [-.029,.067] & 11/0/4\\
incremental $-$ candidate-only & $1.11\times10^{-6}$ & $[-1.02,4.34]\times10^{-6}$ & 1/13/1\\
candidate-only $-$ random & .1017 & [.036,.178] & 13/1/1\\
\bottomrule
\end{tabular}
\end{center}
Wilcoxon and sign tests are treated as exploratory small-sample diagnostics
without multiplicity correction.

The saved nonidentity pools give mean probabilities
$0.850$, $0.333$, and $0.150$ of drawing a candidate within 90\% of the
finite-pool oracle gain at $n=4,6,8$, respectively. Each probability averages
the exact pool fractions over five instances; an additional random baseline
uses 1000 draws per instance, without retraining.
Candidate-only and incremental selection have
paired win/tie/loss $1/13/1$, giving comparable performance in this test.

\subsection{Heisenberg-chain protocol and statistics}
\label{sec:state-geometry}

The Heisenberg test in Section~\ref{sec:xxz} compares selectors at a
stalled checkpoint under equal augmentation and retraining budgets.
We describe the state-space score calculation, the checkpoint and
candidate protocol, and the paired statistics behind Fig.~\ref{fig:xxz}.

\subsubsection{State-space implementation}

The simulations use the horizontal state tangents and real inner product
defined in Section~\ref{sec:trainability}. For the energy objective
$E=\bra{\psi}H\ket{\psi}$, the ambient state gradient is
\begin{equation}
 G_{\rm st}=2(H-E)\ket{\psi}.
 \label{eq:state-energy-gradient}
\end{equation}
To implement the real metric with complex statevectors, the code forms
\begin{equation}
 V_{\mathbb R}=\begin{pmatrix}\operatorname{Re}V\\
 \operatorname{Im}V\end{pmatrix},
 \qquad
 G_{\mathrm{st},\mathbb R}=\begin{pmatrix}\operatorname{Re}G_{\rm st}\\
 \operatorname{Im}G_{\rm st}\end{pmatrix}.
 \label{eq:realified-state-geometry}
\end{equation}
Then $M^{\rm st}=V_{\mathbb R}^{\mathsf T}V_{\mathbb R}$ and
$g=V_{\mathbb R}^{\mathsf T}G_{\mathrm{st},\mathbb R}$.
Candidate-only scores use the SVD range of the candidate state tangents.
Incremental scores residualize both those tangents and $G_{\rm st}$
against the old state-tangent space before projection. A direct
union-space calculation independently checks the same increment.

\subsubsection{Checkpoint and candidate protocol}

For each $n=8,10,12$ and each of five seeds, the baseline starts from a small
random perturbation of a N\'eel product state.  It is optimized for 140 Adam
steps followed by at most 260 L-BFGS-B iterations and 24 small-gradient
refinement steps.  Scanning this trajectory without candidate information,
the code accepts the first point whose last-20-step gain is at most
\begin{equation}
 \tau=\max\{2\times10^{-7},2\times10^{-4}(E-E_0)\},
 \label{eq:xxz-checkpoint-threshold}
\end{equation}
whose 20-step energy variation is at most $5\tau$, whose parameter-gradient
norm is at most $2\times10^{-3}$, and whose energy error exceeds
$\max\{10^{-3},10^{-3}(n-1)\}$.  All 15 fixed instances pass this audit.

Each candidate sequence is a random permutation of the $n-1$ nearest-neighbor
edges with independently chosen directions, subject to uniqueness of its
binary routing matrix.  The sequence is used once in $C(B)$ and in reverse
for $C(B)^\dagger$.  Candidate generation is seed-fixed and ground-state
blind.  Every augmented candidate receives 50 Adam updates at the same
learning-rate schedule.  The no-augmentation curve is a continuation control
with no added resources; random, candidate-only, incremental, and oracle
comparisons all refer to candidates with exactly $3n$ added angles and
$2(n-1)$ added CNOTs.

Recovery requires the 50-step gain to exceed continuation of the old ansatz
by at least one percent of the checkpoint energy error, with an absolute
floor of $10^{-5}$.
Across the accepted checkpoints, energy errors lie in $[2.286,3.095]$,
parameter-gradient norms in $[1.111\times10^{-3},1.875\times10^{-3}]$,
and ambient state-gradient norms in $[5.721,7.145]$.

\subsubsection{Paired statistics}
\label{app:xxz-statistics}

Candidate-level Spearman correlations are computed separately inside each of
the 15 fixed pools. Method differences are paired by instance. The reported
95\% intervals are percentile intervals from 200,000 resamples of the 15
complete instances with replacement, using seed 20260804. Exact two-sided
sign tests remove ties with absolute difference at most $10^{-7}$.
The first table gives selector-level gains, recovery rates, and oracle
regrets; the second gives the paired contrasts.

\begin{center}
\begin{tabular}{lrrr}
\toprule
method & mean 50-step gain & recovery rate & mean oracle regret\\
\midrule
no augmentation & $-0.000059$ & $0/15$ & $1.015$\\
fixed random candidate & $0.495$ & $15/15$ & $0.520$\\
candidate-only & $0.925$ & $15/15$ & $0.090$\\
incremental & $0.916$ & $15/15$ & $0.099$\\
finite-pool oracle & $1.015$ & $15/15$ & $0$\\
\bottomrule
\end{tabular}
\end{center}

The fixed-random contrast compares against a single seeded draw from the
candidate pool, independent of the candidate scores and training gains.
The pool-mean contrast uses all already-trained candidates to remove
draw noise from the random baseline.

\begin{center}
\begin{tabular}{lrrr}
\toprule
paired contrast & mean difference & bootstrap 95\% interval & W/T/L\\
\midrule
candidate-only $-$ fixed random & $.4297$ & $[.2409,.5961]$ & 13/1/1\\
incremental $-$ fixed random & $.4210$ & $[.2538,.5773]$ & 13/1/1\\
candidate-only $-$ pool mean & $.3838$ & $[.2569,.4969]$ & 13/0/2\\
incremental $-$ pool mean & $.3750$ & $[.2675,.4790]$ & 15/0/0\\
incremental $-$ candidate-only & $-.0088$ & $[-.0805,.0573]$ & 2/11/2\\
\bottomrule
\end{tabular}
\end{center}
The corresponding exact sign-test values are $.00183$, $.00183$, $.00739$,
$6.10\times10^{-5}$, and $1$, respectively.

Incremental scoring changes the selected candidate in $4/15$ instances.
The median within-pool Spearman correlation with 50-step gain is $0.713$
for incremental and $0.559$ for candidate-only scoring. The maximum
principal-angle obliqueness is $44.98^\circ$. Here obliqueness is
$\max_k\min\{\alpha_k,90^\circ-\alpha_k\}$ for principal angles
$\alpha_k$ between the old and candidate tangent spaces: it measures the
largest departure from parallel or orthogonal directions. The largest
projector-commutator Frobenius norm is $2.674$.
These subspace diagnostics explain the changes in selection; the paired
table above reports the resulting optimization gains.

\paragraph{Numerical verification.}
The two-qubit Hamiltonian check gives the exact singlet energy $-3$. The
analytic energy gradient agrees with centered finite differences to maximum
absolute error $1.98\times10^{-9}$ and relative error
$3.85\times10^{-9}$. The adjoint parameter gradient and the contraction of
the horizontal state tangents with $G_{\rm st}$ agree to
$4.44\times10^{-14}$. Across all 180 candidates, the maximum saved
same-point differences in state, energy, and old gradient are zero in double
precision.

\subsection{TFIM same-point control}
\label{app:tfim-control}

This complementary selector control for Section~\ref{sec:xxz} compares
candidate-only and incremental selection on a second many-body objective
and relates their near-equivalence to the compatible-geometry regime of
Theorem~\ref{thm:equivalence} (Fig.~\ref{fig:tfim}). It uses the
open-chain TFIM introduced in Section~\ref{sec:genericity},
\begin{equation}
 H_n=-\sum_{q=0}^{n-2}Z_qZ_{q+1}-\sum_{q=0}^{n-1}X_q
 \label{eq:tfim}
\end{equation}
at $J=h=1$ for $n=4,6,8$, with five fixed seeds per size.

The old ansatz has two repeated-dimer CNOT blocks separated and surrounded by
local rotations applied in time order $X\to Y\to Z$.
It can optimize within each dimer but has
no direct inter-dimer entangler.
Each of 16 nontrivial candidates is
generated by sampling four directed CNOT slots uniformly from all directed
qubit pairs and removing duplicate binary matrices, without access to the
exact ground state. Each adds eight CNOTs and $3n$ parameters and receives
50 Adam steps. The identity baseline uses the cost-matched $B=I$ control
defined in Appendix~\ref{app:statistics}. Here OSR-only maximizes
$\sum_{j=1}^{n-1}\kappa_B(S_j)$ over contiguous cuts
$S_j=\{0,\ldots,j-1\}$, resolving ties by the lowest candidate index;
this score does not use the exact ground state. Method comparisons and
repeated-random resampling use the saved candidate pool.

At the constrained optima, the energy error
lies in $[0.225818,0.294384]$, the ambient state-gradient norm $\|G_{\rm st}\|$ in
$[1.13019,1.25754]$, and
\[
\chi_{\rm st}=\frac{\norm{\Pi_{\cT^{\rm st}_{\rm old}}G_{\rm st}}^2}{\norm{G_{\rm st}}^2}
\in[2.422\times10^{-27},1.114\times10^{-25}].
\]
Here $\chi_{\rm st}=a^2$ is the squared state-space coverage: the old
ansatz sees almost none of the remaining ambient descent direction.

\begin{figure}[!htbp]
\centering
\includegraphics[width=.98\linewidth]{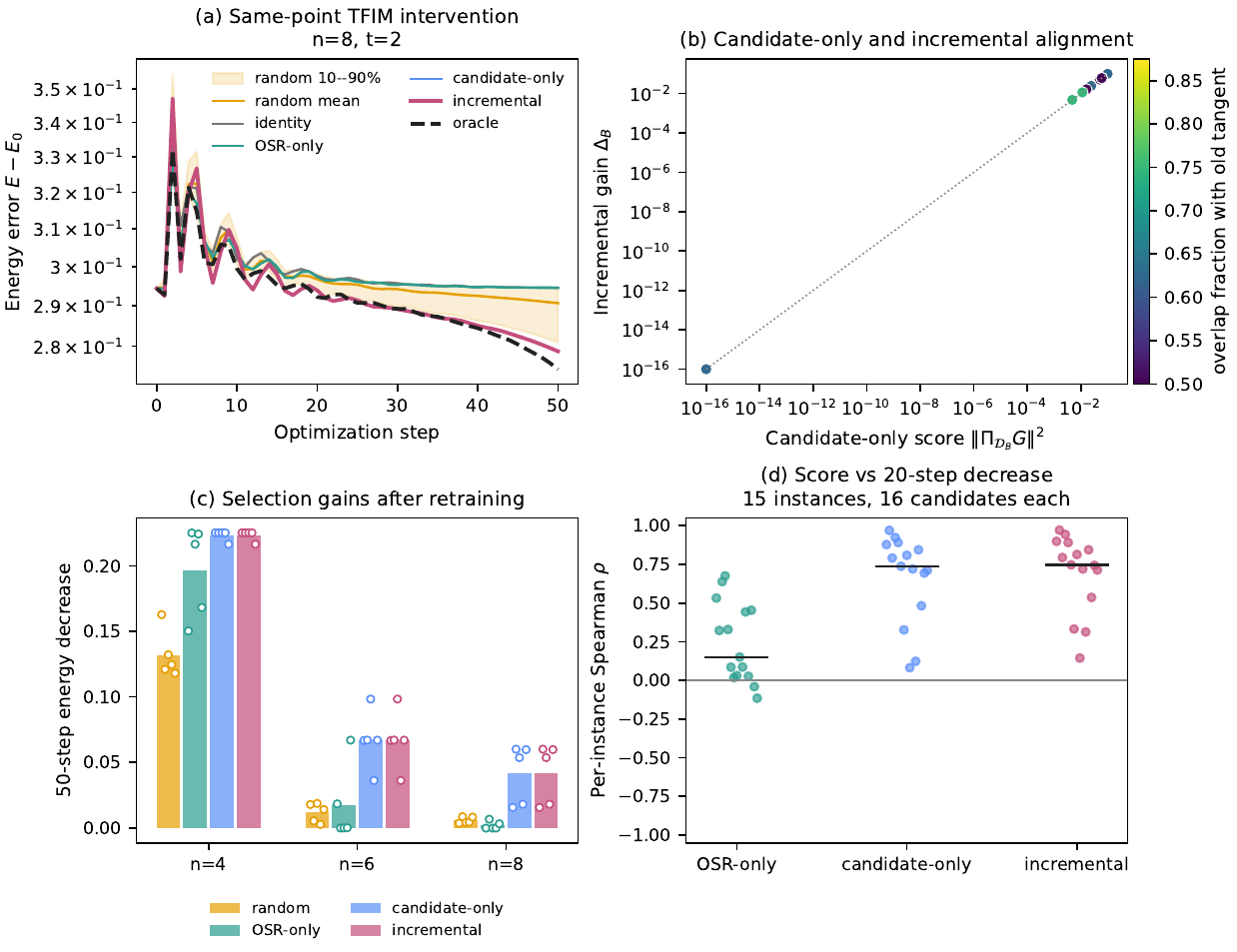}
\caption{\textbf{Complementary TFIM same-point selector control.}
The routing-conditioned augmentation is compared
across 15 independent instances (five seeds at each $n=4,6,8$) and 16
nontrivial candidates per instance.  Each candidate adds $3n$ parameters and
eight CNOTs.  Bars are five-instance means and open circles are individual
instances; panel (a) uses $n=8$, $t=2$, seed 20268723. Its random curve
and shaded band show the pointwise mean and 10--90\% range across the
16 nonidentity candidate trajectories. In (b), color gives the overlap
fraction $\sum_k\cos^2\alpha_k/r_{\rm new}$, where $\alpha_k$ are
the principal angles between the old and candidate tangent spaces and
$r_{\rm new}$ is the candidate tangent rank. Candidate-only and incremental
selection choose
the same candidate and give the same reported gain in every instance.}
\label{fig:tfim}
\end{figure}

At these same points, candidate-only and incremental scores nearly
coincide (Fig.~\ref{fig:tfim}(b)), selecting the same candidate in all
15 instances and giving identical 50-step gains at the reported precision.
Both selectors exceed the repeated-random mean in every instance.
The median probability that a random candidate attains at least 90\%
of the oracle gain is $0.073$, with range $[0.051,0.523]$.
The table summarizes the selector outcomes.

\begin{center}
\begin{tabular}{lrrr}
\toprule
selector & mean 50-step gain & median final error & median random percentile\\
\midrule
identity & -.000123 & .266037 & 6.2\\
random mean & .049730 & .251864 & 50.0\\
OSR-only & .071906 & .265804 & 56.2\\
candidate-only & .110610 & .198956 & 93.8\\
incremental & .110610 & .198956 & 93.8\\
oracle & .111452 & .198685 & 100.0\\
\bottomrule
\end{tabular}
\end{center}

Figure~\ref{fig:tfim}(d) tests shorter-horizon rank prediction at 20 steps;
panel (c) and the table report gains over the full 50-step budget.
The median within-instance Spearman correlations at 50 steps are $0.136$ for
OSR-only, $0.909$ for candidate-only, and $0.926$ for incremental scoring.
Small score differences and near ties can change rank correlations
without changing the selected candidate.

The geometric diagnostics explain why the selectors agree in this
control. Across all 240 nonidentity candidates, $\Delta_B^{\rm st}$ and
$s_{\rm cand}^{\rm st}(B)$ agree to within $1.56\times10^{-15}$ in absolute value;
principal angles deviate from
$0^\circ$ or $90^\circ$ by at most $3.195\times10^{-6}$ degrees; and the
largest Frobenius norm of the projector commutator is
$1.475\times10^{-7}$.  These data are numerically consistent with the commuting-projector
equivalence of Theorem~\ref{thm:equivalence} within the reported numerical
tolerance. In this control, candidate-only scoring captures the useful
alignment and incremental scoring gives the same selection.

\paragraph{Numerical verification.}
The maximum same-point differences in state, energy, old gradients, and
ambient gradient are zero to stored double precision.
The analytic gradient was checked against centered finite differences, with
maximum absolute error $2.44\times10^{-9}$ and maximum relative error
$1.54\times10^{-9}$.

\subsection{Preregistered TFIM genericity pilot}
\label{app:genericity}

The ensemble in Section~\ref{sec:genericity} holds the TFIM task fixed
and varies routing to measure the incidence of near-stagnation
(Fig.~\ref{fig:genericity}). Each fixed architecture is trained
independently on the open-chain Hamiltonian in Eq.~\eqref{eq:tfim},
using the common
$L_2C(A_2)L_1C(A_1)L_0$ template.
Every local layer applies rotations in time order $X\to Y\to Z$ on each qubit;
every CNOT block
contains $n/2$ gates. At each $n=4,6,8$, four routings are fixed in
each of five families.

Line/brickwork uses nearest-neighbor edges on the open chain,
with shifted edge selections between blocks; ring uses alternating
nearest-neighbor pairings on the periodic chain; random fixed-budget
routings sample an ordered control--target pair independently for each
CNOT slot; repeated dimers repeat the same disjoint pairing and CNOT
orientation in both blocks.

The fifth family, matched-cut-budget controls, consists of distinct
routing tuples paired with the random routings. To construct each pair,
random tuples with the same CNOT count are sampled until their
$B_{\rm cut}$, defined in Section~\ref{sec:genericity}, equals that of
the paired random routing, with duplicate tuples rejected. This matches
the scalar sum of the cumulative contiguous-cut budgets while allowing
the detailed cut profile to differ.

Each routing is optimized from three fixed small perturbations of
$\ket{+}^{\otimes n}$ using 120 Adam steps followed by at most 300
L-BFGS-B iterations, with stopping tolerances
$\texttt{gtol}=10^{-9}$ and $\texttt{ftol}=10^{-13}$.
The geometric diagnostics and event criteria are evaluated at the
terminal parameters returned by L-BFGS-B, giving 180 endpoints in total.

The preregistered primary event uses the state-space coverage
$\chi_{\rm ratio}=a$ of Section~\ref{sec:trainability}, with the
ambient gradient evaluated by Eq.~\eqref{eq:state-energy-gradient},
and requires
\begin{equation}
\norm{\nabla_\theta E}\le10^{-6},\qquad
\norm{G_{\rm st}}\ge10^{-2},\qquad
\chi_{\rm ratio}\le10^{-3},\qquad
E-E_0\ge10^{-2}.
\label{eq:pilot-event}
\end{equation}
These conditions identify small parameter gradients despite a nonzero
ambient descent direction, little access to that direction, and a finite
energy error.

The primary event occurs in 85/180 instances and at least one initialization
of 39/60 routings; all 85 events have optimizer-reported convergence.
The table below gives the primary-event counts by routing family and size.

\begin{center}
\begin{tabular}{lrrr}
\toprule
routing family & $n=4$ & $n=6$ & $n=8$\\
\midrule
line/brickwork & 9/12 & 3/12 & 0/12\\
ring & 7/12 & 3/12 & 0/12\\
random fixed budget & 7/12 & 4/12 & 1/12\\
repeated dimer & 12/12 & 12/12 & 12/12\\
matched cut budget & 7/12 & 6/12 & 2/12\\
\bottomrule
\end{tabular}
\end{center}

\paragraph{Reproducibility checks.}
The preregistered relaxed, primary, and strict criteria identify
151/180, 85/180, and 3/180 events, respectively. The primary count uses
all instances, including 28 runs for which L-BFGS-B did not report
convergence; none of these 28 runs satisfies the primary criterion.
State-space ranks and condition numbers are computed from retained
tangent singular values, whereas parameter-gradient norms use the
actual coordinate derivatives. In a post hoc conditioning check, 60 of
the 85 primary-event endpoints, spanning 27 routings, have retained
state-Gram condition numbers at most $10^{14}$.

Figure~\ref{fig:genericity} reports descriptive Wilson intervals,
while an additional routing-level bootstrap groups the three
initializations of each routing. Analytic gradients and tangent
contractions agree to $1.281\times10^{-14}$.
All classification thresholds were fixed before the run.
A subsequent plotting correction affected only visualization and
postprocessing and did not change any classified instance.

\FloatBarrier
\begingroup
\small
\bibliographystyle{unsrt}
\bibliography{refs}
\endgroup

\end{document}